\crefname{thm}{Theorem}{Theorems}
\crefname{thmC}{Theorem}{Theorems}
\crefname{lem}{Lemma}{Lemmas}
\crefname{drule}{Rule}{Rules}
\newcommand{\Cc}[0]{\ensuremath{\mathscr{C}}\xspace}
\newcommand{\Gg}[0]{\ensuremath{\mathcal{G}}\xspace}
\renewcommand{\phi}{\varphi}
\newcommand{\Oof}{\mathcal{O}}
\newcommand{\N}{\mathbb{N}}
\renewcommand{\emptyset}{\varnothing}
\renewcommand{\epsilon}{\varepsilon}
\newcommand{\dist}{\ensuremath{\mathrm{dist}}}
\tikzset{
    pics/tile/.style args={#1/#2}{
        code={
            \fill[#1] (0,0) -- (1,0) -- (0.5,0.5) -- cycle;
            \fill[#2] (1,0) -- (1,1) -- (0.5,0.5) -- cycle;
            \fill[#1] (1,1) -- (0,1) -- (0.5,0.5) -- cycle;
            \fill[#2] (0,1) -- (0,0) -- (0.5,0.5) -- cycle;
            \draw (0,0) rectangle (1,1);
        }
    }
}
\newtheorem{theorem}{Theorem}
\newtheorem{lemma}[theorem]{Lemma}
\newtheorem{corollary}[theorem]{Corollary}
\newtheorem{observation}[theorem]{Observation}
\newtheorem{claim}{Claim}
\newtheorem{drule}{Rule}
\newtheorem{subrule}{Rule}[drule]
\begin{document}

\title[Data reduction for DFVS on graphs without long induced cycles]{Data reduction for directed feedback vertex set on\\[2mm] \mbox{}\hspace{10mm}graphs without long induced cycles\\[2mm]
\large{Three rules to rule them all}}


\author[1]{\fnm{Jona} \sur{Dirks}}

\author[1]{\fnm{Enna} \sur{Gerhard}}

\author[1]{\fnm{Mario} \sur{Grobler}}

\author[2]{\fnm{Amer E.} \sur{Mouawad}}

\author*[1]{\fnm{Sebastian} \sur{Siebertz}}\email{siebertz@uni-bremen.de}

\affil[1]{
\orgname{University of Bremen}, 
\country{Germany}}

\affil[2]{
\orgname{American University of Beirut}, \country{Lebanon}}

\abstract{
    We study reduction rules for \textsc{Directed Feedback Vertex Set (DFVS)} on directed graphs without long cycles. 
    A \textsc{DFVS} instance without cycles longer than~$d$ naturally corresponds to an instance of \textsc{$d$-Hitting Set}, however, enumerating all cycles in an $n$-vertex graph and then kernelizing the resulting \textsc{$d$-Hitting Set} instance can be too costly, as already enumerating all cycles can take time~$\Omega(n^d)$.
    To the best of our knowledge, the kernelization of \textsc{DFVS} on graphs without long cycles has not been studied in the literature, except for very restricted cases, e.g., 
    for tournaments, in which all induced cycles are of length three.
    We show that the natural reduction rule to delete all vertices and edges that do not lie on induced cycles cannot be implemented efficiently, that is, it is $W[1]$-hard (with respect to parameter $d$) to decide if a vertex or edge lies on an induced cycle of length at most $d$ even on graphs that become acyclic after the deletion of a single vertex or edge. 
    Based on different reduction rules we then show how to compute a kernel with at most $2^dk^d$ vertices and at most 
    $d^{3d}k^d$ induced cycles of length at most~$d$ (which however, cannot be enumerated efficiently), where~$k$ is the size of a minimum directed feedback vertex set. 
    We then study classes of graphs whose underlying undirected graphs have bounded expansion or are nowhere dense. These are very general classes of sparse graphs, containing e.g.\ classes excluding a minor or a topological minor. 
    We prove that for every class~$\Cc$ with bounded expansion there is a function $f_\Cc(d)$ such that 
    for graphs $G\in \Cc$ without induced cycles of length greater than $d$ 
    we can compute a kernel with $f_\Cc(d)\cdot k$ vertices in time $f_\Cc(d)\cdot n^{\Oof(1)}$. 
    For every nowhere dense class $\Cc$ there is a function $f_\Cc(d,\epsilon)$ such that for graphs $G\in \Cc$ without induced cycles of length greater than $d$ we can compute a kernel with $f_\Cc(d,\epsilon)\cdot k^{1+\epsilon}$ vertices for any $\epsilon>0$ in time $f_\Cc(d,\epsilon)\cdot n^{\Oof(1)}$. 
    %
    %
    The most restricted classes we consider are strongly connected planar graphs without any (induced or non-induced) long cycles. We show that these classes have treewidth $\Oof(d)$ and hence \textsc{DFVS} on planar graphs without cycles of length greater than $d$ can be solved in time $2^{\Oof(d)}\cdot n^{\Oof(1)}$. 
    We finally present a new data reduction rule for general \textsc{DFVS} and prove that the rule together with a few standard rules subsumes all rules applied in the work of Bergougnoux et al.\ to obtain a polynomial kernel for \textsc{DFVS[FVS]}, i.e., \textsc{DFVS} parameterized by the feedback vertex set number of the underlying (undirected) graph.
    We conclude by studying the LP-based approximation of \textsc{DFVS}.
}

\keywords{directed feedback vertex set (DFVS), parameterized complexity, data reduction, kernelization}



\maketitle

\section{Introduction}

A directed feedback vertex set of a directed $n$-vertex graph $G$ is a subset \mbox{$S\subseteq V(G)$} of vertices such that every directed cycle of $G$ intersects with $S$. In 
the \textsc{Directed Feedback Vertex Set (DFVS)} problem, we are given a directed graph $G$ and an integer $k$, and the objective is to determine whether $G$ admits a directed feedback vertex set of size at most $k$. In what follows, unless stated otherwise, 
when we speak of a graph we always mean a directed graph, and when we speak of a cycle we mean a directed cycle. 

\textsc{DFVS} is one of Karp's 21 NP-complete problems~\cite{karp1972reducibility}. Its NP-completeness follows easily by a reduction
from \textsc{Vertex Cover}, which is a special case of \textsc{DFVS} where all induced cycles have length two. The fastest known exact algorithm for \textsc{DFVS}, due to Razgon~\cite{razgon2007computing}, runs in time $1.9977^n\cdot n^{\Oof(1)}$. 
Chen, Liu, Lu, O'Sullivan, and Razgon~\cite{chen2008fixed} proved that the problem is fixed-parameter tractable when parameterized by solution size~$k$; providing an 
algorithm running in time \mbox{$\Oof(k!4^kk^4nm)=2^{\Oof(k\log k)}\cdot nm$}, for graphs with~$n$~vertices and~$m$~edges. The dependence on the input size has been improved to~\mbox{$\Oof(k!4^kk^5(n+m))$} by Lokshtanov, Ramanujan, and Saurabh~\cite{LokshtanovRS16}. It is a major open problem whether the running time can be improved to~$2^{o(k\log k)} \cdot n^{\Oof(1)}$~\cite{LokshtanovRS16}. 
The problem has also been studied under different parameterizations. 
Bonamy et al.~\cite{bonamy2018directed} proved that one can solve the problem in time~$2^{\Oof(t\log t)} \cdot n^{\Oof(1)}$, where~$t$ denotes the treewidth of the underlying undirected graph. 
They also proved that this running time is tight assuming the exponential-time hypothesis~(ETH). 
On planar graphs the running time can be improved to $2^{\Oof(t)} \cdot n^{\Oof(1)}$~\cite{bonamy2018directed}. 

A natural question is whether these results can be extended to directed width measures, e.g., whether the problem is fixed-parameter tractable when parameterized by directed treewidth. 
Unfortunately, this is not the case. \textsc{DFVS} remains NP-complete even on very restricted classes of graphs such as graphs of cycle rank at most four (which in particular have bounded directed treewidth, as cycle rank is an upper bound for directed treewidth), 
as shown by Kreutzer and Ordyniak~\cite{kreutzer2011digraph}, and hence the problem is not even in~XP when parameterized by cycle rank. 

The question whether \textsc{DFVS} parameterized by solution size $k$ admits a polynomial kernel, i.e., an equivalent polynomial-time computable instance of size polynomial in~$k$, 
remains one of the central open questions in the area of kernelization. Bergougnoux et al.~\cite{bergougnoux2021towards} showed 
that the problem admits a kernel of size $\Oof(f^4)$ in general graphs and~$\Oof(f)$ in graphs embeddable on a fixed surface, where $f$ denotes the size of a minimum 
undirected feedback vertex set in the underlying undirected graph. Note that $f$ can be arbitrarily  larger than~$k$. More generally, for an integer $\eta$, a subset $M\subseteq V(G)$ of vertices is called a \emph{treewidth $\eta$-modulator} if $G-M$ has treewidth at most $\eta$. 
Lokshtanov et al.~\cite{lokshtanov2019wannabe} showed that when given a graph $G$, an integer $k$, and a treewidth $\eta$-modulator of size $\ell$, one can compute a kernel with $(k\cdot \ell)^{\Oof(\eta^2)}$ vertices. This result subsumes the result of Bergougnoux et al.~\cite{bergougnoux2021towards}, as the parameter $k+\ell$ is upper bounded by~$\Oof(f)$ and can be arbitrarily smaller than~$f$. 
On the other hand, unless \mbox{$\textsc{NP}\subseteq $ coNP$/$poly}, for $\eta\geq 2$, there cannot exist a polynomial kernel when we parameterize by the size of a treewidth-$\eta$ modulator alone, as even \textsc{Vertex Cover} cannot have a polynomial kernel when parameterized by the size of a treewidth-$2$ modulator~\cite{cygan2014hardness}. Polynomial kernels for \textsc{DFVS} are known for several restricted graph classes, see e.g.~\cite{bang2016algorithms,dom2010fixed,fomin2019subquadratic}.

From the viewpoint of approximation, the best known  algorithms for \textsc{DFVS} are based on integer linear programs (ILP) whose fractional relaxations can be solved efficiently. It was shown by 
Seymour~\cite{seymour1995packing} that the integrality gap for \textsc{DFVS} is at most $\Oof(\log k^*\log\log k^*)$, where~$k^*$ denotes the optimal value of a fractional directed feedback vertex set. Note that the linear programming (LP) formulation of
\textsc{DFVS} may contain an exponential number of constraints. 
Nevertheless, an optimal solution can be computed in this case in polynomial time by using the ellipsoid method. However, using the ellipsoid method in practical applications is often infeasible and much slower than the simplex algorithm, which however needs an explicit formulation of the LP. 
Even et al.~\cite{even1998approximating} completely circumvented this obstacle and provided a related combinatorial polynomial-time algorithm yielding an $\Oof(\log k^*\log\log k^*)$
-approximation. Assuming the Unique Games Conjecture, the problem does not admit a polynomial-time computable constant-factor approximation algorithm~\cite{guruswami2011beating,guruswami2016simple,svensson2012hardness}. 
Lokshtanov et al.~\cite{lokshtanov2021fpt} showed how to compute a $2$-approximation in time $2^{\Oof(k)} \cdot n^{\Oof(1)}$. 

\medskip
This work was initiated after successfully participating in the PACE 2022 programming challenge~\cite{grossmann2022pace}. 
In the scope of a student project at the University of Bremen, we participated in the competition and our solver ranked second in the exact track~\cite{bergenthal2022pace}. 
In this paper we present our theoretical findings, whereas an empirical evaluation of the implemented rules will be presented in future work. 

\medskip
We first study \textsc{DFVS} instances without long cycles. This study is intimately linked to the study of the \textsc{Hitting Set} problem. 
Many of the known data reduction rules for \textsc{DFVS} are special cases of general reduction rules for \textsc{Hitting Set}. 
A hitting set in a set system~$\Gg$ with ground set $V(\Gg)$ and edge set $E(\Gg)$, where each $S\in E(\Gg)$ is a subset of $V(\Gg)$, is a subset $H\subseteq V(\Gg)$ 
such that $H\cap S\neq \emptyset$ for all $S\in E(\Gg)$. Given a graph~$G$, a directed feedback vertex set in $G$ corresponds one-to-one to a hitting set for the set 
system~$\Gg$ where $V(\Gg)=V(G)$ and $E(\Gg)=\{V(C)\mid C$ is a cycle in $G\}$. 
The main difficulty in applying reduction rules designed for \textsc{Hitting Set} is that we first need to efficiently convert an instance of \textsc{DFVS} to an instance of \textsc{Hitting Set}. However, in general, we want to avoid computing~$\Gg$ from $G$, as $|E(\Gg)|$ may be super-polynomial in the size of the vertex set, i.e., super-polynomial in $|V(G)| = |V(\Gg)|$.
One simple reduction rule for \textsc{Hitting Set} is to remove all sets $S\in E(\Gg)$ such that there exists $S'\in E(\Gg)$ with $S'\subseteq S$. 
Instances of \textsc{Hitting Set} that do not contain such pairs of sets are called \emph{vertex induced}. 
The remaining minimal sets in the corresponding \textsc{DFVS} instance are the induced cycles of $G$. 

\medskip
\begin{observation}\label{obs:induced-cycles}
    A set $S\subseteq V(G)$ is a DFVS in a graph $G$ if and only if it hits all induced cycles of $G$. 
\end{observation}
\medskip

Unfortunately, it is NP-complete to detect if a vertex or an edge lies on an induced cycle~\cite{fellows1995complexity} even on planar graphs, implying that it is not easy to exploit this property for \textsc{DFVS} directly. 
Overcoming this obstacle requires designing data reduction rules based on sufficient conditions guaranteeing that a vertex or an edge does not lie on an induced cycle and can therefore be safely removed.

An instance of \textsc{DFVS} without cycles of length greater than $d$ naturally corresponds to an instance of \textsc{$d$-Hitting Set}. 
As shown in~\cite{abu2010kernelization}, 
\textsc{$d$-Hitting Set} admits a kernel with $k+(2d-1)k^{d-1}$ vertices, which can be efficiently computed when the \textsc{$d$-Hitting Set} instance is explicitly given as input. 
This is known to be near optimal, as \mbox{$d$-\textsc{Hitting Set}} does not admit a kernel of size $\Oof(k^{d-\epsilon})$ unless the polynomial hierarchy collapses~\cite{dell2014satisfiability}; note that here size refers to the total size of the instance and not to the number of vertices. 
The question of whether there exists a kernel for $d$-\textsc{Hitting Set} with fewer elements is considered to be one of the most important open problems in kernelization~\cite{bessy2011kernels,dom2010fixed,fomin2023lossy,fomin2019kernelization,you2017approximate}. 
However, even in this restricted case we cannot efficiently generate a (vertex induced) \textsc{$d$-Hitting Set} instance from a \textsc{DFVS} instance, as even testing if a vertex lies on an induced cycle of length at most $d$ is W[1]-hard~\cite{haas2006chordless} when parameterized by $d$. 
We hence have to avoid computing a  \textsc{Hitting Set} instance explicitly but must rather work on the implicit graph representation of a \textsc{DFVS} instance. 
As one of our results, we strengthen the result of~\cite{haas2006chordless} by showing that testing if a vertex or edge lies on an induced cycle of length at most $d$ is W[1]-hard even on graphs that become acyclic after the deletion of a single vertex or edge. 
As an unfortunate consequence, we must assume that the graphs given to our algorithms (together with the parameter $d$) must exclude induced cycles of length greater than $d$, as we cannot verify this condition efficiently. 

To the best of our knowledge, the kernelization of \textsc{DFVS} on graphs without long cycles has not been studied in the literature, except for some very restricted cases, e.g., on tournaments, in which all induced cycles are of length three~\cite{bessy2011kernels,dom2010fixed,fomin2019subquadratic}.

We show that after applying the standard reduction rules,  we can compute in polynomial time a superset $W$ of the vertices that lie on induced cycles of length at most $d$ and which is of size at most $2^dk^d$. 
As it suffices to hit all induced cycles, $G[W]$ is an equivalent instance. 
Up to a factor $k$ and constants depending only on $d$ this matches the best bounds we know for the kernelization of $d$-\textsc{Hitting Set}. 
Our result does not rule out the existence of a smaller kernel, of course. 

In the kernelized instance on $2^dk^d$ vertices we could have $(2^dk^d)^d=2^{d^2}k^{d^2}$ induced cycles. 
Based on the classical sunflower lemma, we prove however, that kernelized instances contain at most $d^{3d}k^d$ induced cycles of length at most $d$, for any fixed $d \geq 2$. 
In light of the major open question whether \textsc{DFVS} admits a polynomial kernel, we pose as a question whether there exists a function $f$ such that for every $d$ we can compute a kernel of size
$f(d)\cdot k^{\Oof(1)}$ in time~$f(d)\cdot n^{\Oof(1)}$ on instances without induced cycles of length greater than $d$. \footnote{
Here and in the following the constants in the $\Oof$-notation do not depend on $d$ or any other parameters, except possibly on the class of graphs we are working on.}

We then turn our attention to restricted graph classes for which we can efficiently test whether a vertex lies on an induced cycle of length at most $d$, e.g.,  by efficient algorithms for first-order model checking~\cite{DreierMS23,grohe2017deciding}.
We study classes of graphs whose underlying undirected graphs have bounded expansion or are nowhere dense. 
These are very general classes of sparse graphs~\cite{nevsetvril2008grad, NesetrilM11a}, including, e.g., all classes that exclude a minor or a topological minor, such as planar graphs. 
We prove that for every class~$\Cc$ with bounded expansion there is a function $f_\Cc(d)$ such that 
for graphs $G\in \Cc$ without induced cycles of length greater than $d$ 
we can compute a kernel with $f_\Cc(d)\cdot k$ vertices in time $f_\Cc(d)\cdot n^{\Oof(1)}$. 
For every nowhere dense class $\Cc$ there is a function $f_\Cc(d,\epsilon)$ such that for graphs $G\in \Cc$ without induced cycles of length greater than $d$ we can compute a kernel with $f_\Cc(d,\epsilon)\cdot k^{1+\epsilon}$ vertices for any $\epsilon>0$ in time $f_\Cc(d,\epsilon)\cdot n^{\Oof(1)}$. 
This answers our above question for very general classes of sparse graph positively. 
Our method is based on the approach in~\cite{DrangeDFKLPPRVS16, EickmeyerGKKPRS17} for the kernelization of the \textsc{Distance-$r$ Dominating Set} problem on bounded expansion and nowhere dense classes. 
%

We conclude our study of restricted graph classes by observing that a strongly connected planar graph without long (induced or non-induced) cycles has bounded treewidth. 
We observe that after the application of the reduction rules, weak components are equal to strong components.
Then, if each strong component has bounded treewidth, the whole graph after application of the rules has bounded treewidth, and the problem is solvable efficiently by the algorithm of Bonamy et al.~\cite{bonamy2018directed}.

We proceed by designing a new data reduction rule that is based on a domination rule for \textsc{Hitting Set}, and which is based on sufficient conditions for vertices and edges to lie on an induced cycle. 
The new rule conveniently generalizes almost all known rules for DFVS. In particular, it subsumes the complicated rules presented by  Bergougnoux et al.~\cite{bergougnoux2021towards} to establish a kernel of size~$\Oof(k\cdot f^3)\in \Oof(f^4)$, where $f$ is the size of a minimum feedback vertex set for the underlying undirected graph. 
In addition to being simpler, our rule does not require the initial computation of a feedback vertex set for the underlying undirected graph.

Finally, we study the LP-based approximation of \textsc{DFVS}. As previously mentioned, we can formulate an ILP that is equivalent to \textsc{DFVS}. 
In the natural formulation, which we call the \emph{cycles ILP}, we introduce a binary variable~$d_v$ for every $v \in V(G)$ where $d_v = 1$ means that $v$ is part of a solution. 
The goal is to minimize the number of variables set to $1$, given that all induced cycles are hit. Note that this formulation can have an exponential number of constraints. 
Instead, we study an equivalent ILP of polynomial size, the \emph{order ILP}, which is based on the fact that a directed graph is acyclic if and only if there is a topological order on its vertex set. We prove that the optimal solution to the LP-relaxation of the order ILP is at most $3$ times smaller than the optimal solution to the LP-relaxation of the cycles~ILP.
This makes LP-based approximation approaches directly accessible to LP solvers (such as CPLEX) and avoids the specialized combinatorial algorithm of Even et al.~\cite{even1998approximating}, and the ellipsoid method. 

\section{Preliminaries}

A \emph{graph $G$} consists of a (non-empty) \emph{vertex set $V(G)$} and \emph{edge set} $E(G) \subseteq V(G) \times V(G)$. For vertices $u,v\in V(G)$ we write $uv$ for the edge directed from~$u$ to $v$. An edge $vv$ is called a \emph{loop}. We denote the \emph{in-} and \emph{out-neighborhood} of $v \in V(G)$  by $N^-_G(v) = \{u \mid uv \in E(G)\}$ and $N^+_G(v) = \{u \mid vu \in E(G)\}$, respectively. 
The \emph{neighborhood} of~$v$ is denoted by $N_G(v) = N^-_G(v) \cup N^+_G(v)$. 
A \emph{cycle} $C$ in a graph $G$ is a subgraph with vertices $v_1,\ldots, v_\ell$ and edges $v_iv_{i+1}$ for $1\leq i<\ell$ and $v_\ell v_1$. 
We denote by~$\ell$ the \emph{length} of~$C$, that is, the number of edges (or vertices) of $C$.
A \emph{path} $P$ in a graph $G$ is a subgraph with vertices $v_1,\ldots, v_{\ell+1}$ and edges $v_iv_{i+1}$ for $1\leq i\leq \ell$. 
$P$ is a \emph{$u$-$v$-path} if its first vertex $v_1$ is $u$ and its last vertex $v_{\ell+1}$ is $v$ and we say that it connects $u$ with~$v$ or that $u$ reaches $v$ by $P$. 
We call~$\ell$ the \emph{length} of the path, that is, the number of edges of $P$. 
Note that every vertex reaches itself by a path of length $0$. 
If there exists a $u$-$v$-path~$P$ and a $v$-$w$-path~$Q$ in $G$, then there also exists a $u$-$w$-path in $G$ that uses only vertices of $P$ and $Q$. 
Slightly abusing notation, because this path may not be unique, we denote such a path by $PQ$, and will be more precise whenever it matters.
For a set $S \subseteq V(G)$ and $d\geq 1$, we write $N^{d+}_G[S]$ to denote the \emph{$d$-out-neighborhood} of $S$ in~$G$ and $N^{d-}_G[S]$ to denote the \emph{$d$-in-neighborhood} of $S$ in $G$. That is, $N^{d+}_G[S]$ contains all vertices of $G$ that are reachable from some vertex in $S$ via a path of length at most $d$, and $N^{d-}_G[S]$ contains all vertices of $G$ that can reach some vertex in $S$ via a path of length at most~$d$. 
Note that since paths of length zero are allowed, we have $S \subseteq N^{d+}_G[S]\cap N^{d-}_G[S]$. We write $N^{d+}_G[v]$ and $N^{d-}_G[v]$ whenever $S = \{v\}$.

For a vertex subset $U \subseteq V(G)$, we denote by~$G[U]$ the graph \emph{induced by $U$}, that is the graph obtained from $G$ where we only keep the vertices in $U$ and the edges with both endpoints in $U$. 
We write $G-U$ for the graph $G[V(G)\setminus U]$ and for a singleton vertex set $\{v\}$ we write $G-v$ instead of $G-\{v\}$. For an edge $uv$ we write $G+uv$ and $G-uv$ for the graph obtained by adding or removing the edge $uv$, respectively. 
A cycle~$C$ is an induced cycle in $G$ if the graph~$G[V(C)]$ is a cycle and a path~$P$ is an induced path if the graph $G[V(P)]$ is a path.
An \emph{almost induced $u$-$v$-path} is a subgraph that induces a path in $G-vu$. 
Slightly abusing notation, when $u,v$ are distinct vertices on an induced cycle $C$, we will say that $C$ decomposes into an induced $u$-$v$-path and an induced $v$-$u$-path, even though this is not true if $vu\in E(G)$ (and hence $vu \in E(G[V(C)]$), in which case the $u$-$v$-path is only almost induced.

We call a set $S \subseteq V(G)$ a \emph{directed feedback vertex set}, dfvs for short, if $G-S$ does not contain any (directed) cycles. An input of the \textsc{Directed Feedback Vertex Set} problem (DFVS) consists of a graph~$G$ and a positive integer $k$. The goal is to determine whether $G$ admits a dfvs of size at most~$k$. We will constantly make use of the following simple lemma. 

\medskip
\begin{lemma}\label{lem:approx-dfvs}
    Let $G$ be a graph without induced cycles of length greater than $d$ and let~$k$ be a positive integer. Then, we can compute in polynomial time either a dfvs of size at most $dk$ or decide that there does not exist a dfvs of size at most~$k$ in~$G$. 
\end{lemma}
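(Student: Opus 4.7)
The plan is to run a simple greedy disjoint-cycle-packing algorithm. I would repeatedly find a shortest (directed) cycle $C$ in the current graph, add $V(C)$ to a growing set $S$, and remove these vertices; continue until no cycle remains, or until more than $k$ cycles have been found.

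The first key observation to establish is that a shortest directed cycle in any digraph is induced: if $C = v_1 v_2 \ldots v_\ell v_1$ is shortest and possessed a chord $v_i v_j$ with $j \neq i+1 \pmod \ell$, then closing the chord with the appropriate arc of $C$ yields a strictly shorter cycle, contradicting minimality. Combined with the hypothesis that $G$ contains no induced cycle of length greater than $d$, every shortest cycle found has length at most $d$. Since $G-S$ is an induced subgraph of $G$, this property is preserved throughout the execution, so each iteration still picks a cycle of length at most $d$.

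The second observation is that the cycles produced in successive iterations are pairwise vertex-disjoint, because the vertex set of each cycle is immediately deleted. Hence, after $m$ iterations, we have packed $m$ pairwise vertex-disjoint cycles into $G$, and any directed feedback vertex set of $G$ must contain at least one vertex from each of them. Therefore, if the algorithm performs more than $k$ iterations, we can safely return that no dfvs of size at most $k$ exists. Otherwise the procedure terminates after at most $k$ iterations and returns $S$ with $|S| \le d \cdot k$, where $G - S$ is acyclic by construction, hence $S$ is a dfvs.

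For the running time, the only nontrivial subroutine is computing a shortest directed cycle in the current graph; this can be done in polynomial time by running BFS from every vertex and taking the minimum over all back-edges, and we invoke it at most $k+1 \le n+1$ times. The main conceptual point (and the only place where care is needed) is to verify the "shortest cycle is induced" claim in the directed setting, since chords can be oriented either way; but the short-cut argument above handles both orientations uniformly, after which the rest of the proof is a direct LP-free $d$-approximation argument analogous to the classical factor-$d$ approximation for $d$-\textsc{Hitting Set}.
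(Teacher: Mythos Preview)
Your proof is correct and follows essentially the same approach as the paper: greedily pack vertex-disjoint induced cycles, each of length at most $d$, and either collect at most $k$ of them (giving a dfvs of size at most $dk$) or find $k+1$ disjoint cycles (certifying that no dfvs of size $\leq k$ exists). The paper's proof is terser and simply remarks that ``every cycle contains an induced cycle'' to justify that an induced cycle can be found efficiently; your use of a shortest cycle, together with the observation that shortest directed cycles are chordless, is a clean and explicit way to implement exactly this step.
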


\begin{proof}
By greedily packing induced (pairwise) vertex-disjoint cycles we can find at most $k$ cycles (each consisting of at most $d$ vertices), otherwise we can conclude that no solution of size at most $k$ exists. Then, all other (non-packed) induced cycles intersect one of the packed cycles since $G$ does not contain induced cycles of length greater than~$d$. Hence, the vertex set of the packed cycles forms a dfvs of size at most $dk$.  
 
Observe that every cycle contains an induced cycle. Hence, even though we cannot decide efficiently whether a vertex lies on an induced cycle we can efficiently pack induced pairwise vertex-disjoint cycles as needed. 
\end{proof}

A \emph{hypergraph} (also called a \emph{set system}) $\Gg$ consists of a (non-empty) \emph{vertex set}~$V(\Gg)$ and \emph{hyperedge set} 
$E(\Gg) = \{S_1, \dots, S_m\}$, where $S_i \subseteq V(\Gg)$ for all $i \leq m$. 
For a vertex subset $U \subseteq V(\Gg)$, we denote by $\Gg[U]$ the hypergraph~$\Gg$ \emph{induced by~$U$}, that is, the hypergraph with vertex set $U$ and 
hyperedge set $\{S_i \in E(G) \mid S_i \subseteq U\}$. Note that we only keep those hyperedges that are fully contained in~$U$. 
A \emph{hitting set} of a hypergraph~$\Gg$ is a set $H \subseteq V(\Gg)$ such 
that $H \cap S_i \neq \emptyset$ for all $i \leq m$. In other words, $H$ contains at least one vertex from every hyperedge.
The input of a \textsc{Hitting Set} instance consists of a hypergraph~$\Gg$ and a positive integer~$k$, where $E(\Gg)$ explicitly enumerates all sets. The goal is to determine whether~$\Gg$ admits a hitting set of size at most $k$. 
Given a graph~$G$, a directed feedback vertex set in~$G$ corresponds one-to-one to a hitting set for the set 
system~$\Gg$ with $V(\Gg)=V(G)$ and $E(\Gg)=\{V(C)\mid C$ is a cycle in $G\}$. 
In the following, with any graph $G$ we associate the corresponding hypergraph~$\Gg$. We call~$\Gg$ \emph{vertex induced} if there are no two sets $S, S' \in E(\Gg)$ with $S' \subseteq S$.

\section{Reduction rules}

In this section we present our new reduction rule. 
Together with two other standard reduction rules the rule subsumes most of the known rules for DFVS. 
We also list and attribute special cases that are known from the literature and more efficiently computable. 
We use the convention of numbering special cases of a Rule x as Rules x.1, x.2, etc. 

Our new rule is based on a folklore rule for \textsc{Hitting Set}, which in the literature is often attributed to~\cite{weihe1998covering}. 
If there are two vertices $u,v \in V(\Gg)$ such that $u$ appears in every hyperedge in which $v$ appears, then remove $v$ and remove $v$ from all remaining hyperedges. 
We say that $u$ \emph{dominates} $v$. 

This rule translates to DFVS as follows.
By \Cref{obs:induced-cycles}, if there is a vertex~$v$ such that all induced cycles containing $v$ also contain a vertex $u$, then we may remove~$v$ from all cycles. We cannot simply delete $v$ though, but have to take care that all induced cycles originally containing $v$ are still present (just without $v$). 
This corresponds to the following operation in $G$. 
For $v\in V(G)$, we write $G\ominus v$ for the graph obtained by connecting all in-neighbors of~$v$ with all out-neighbors of~$v$ (only adding missing edges) and then removing $v$, or simply removing $v$ if it has no in- or out-neighbors. 
We say that $G\ominus v$ is obtained from $G$ by \emph{shortcutting}~$v$. 



However, we cannot detect efficiently if a vertex $u$ dominates a vertex $v$ as described above. Hence, we work with the following approximation leading to our new rule. 
%

\begin{tcolorbox}
A cycle $C$ is \emph{induced around a vertex $v$} if it contains $v$ and is an induced cycle of length at most $4$, or when all of the following conditions are true:\\[-3mm]
\begin{itemize}
    \item $C=v_1,\ldots, v_\ell$, $\ell \geq 5$, 
    \item $v_3=v$, 
    \item $v_1,v_2,v_3,v_4,v_5$ (almost) induces a path in $G$, and
    \item there is no edge $v_2v_i$ for $4\leq i\leq \ell$. 
\end{itemize}
\end{tcolorbox}

\begin{figure}
    \centering
    \begin{tikzpicture}
    \foreach \i in {1, ..., 5} {
      \fill (\i, 0) circle (2pt) node (\i) {};
    }
    \path[-{Latex}]
      (1) edge (2)
      (2) edge (3)
      (3) edge (4)
      (4) edge (5)
    ;
    \node[below=0cm of 3] {$v$};

    \path[dotted, red, -{Stealth}]
      (1) edge[bend left]  (4)
      (3) edge[bend right] (2)
      (2) edge             (4, -1)
    ;

    \draw[dashed, -{Latex}] plot [smooth] coordinates {(5) (4, -1) (2, -1) (1)};
    \end{tikzpicture}
    \caption{Illustration of a cycle induced around a vertex $v$. Dotted red edges are forbidden.}
    \label{fig:induced-around}
\end{figure}

\medskip
\begin{drule}\label{rule:shortcut-new}
If there are distinct vertices $u, v \in V(G)$ such that every cycle that is induced around $v$ also contains $u$, then shortcut $v$ in $G$.
\end{drule}


\medskip
\begin{lemma}\label{lem:shortcut}
\cref{rule:shortcut-new} is safe and can be implemented in time \mbox{$\Oof(n^6(n+m))$}.
\end{lemma}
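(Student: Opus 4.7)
The plan is to establish the two directions of the equivalence ``$(G, k)$ is a yes-instance iff $(G', k)$ is'' for $G' = G \ominus v$, and then analyze the running time.

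For the forward direction, let $S$ be a dfvs of $G$ with $|S| \le k$. Set
\[
S' = \begin{cases} (S \setminus \{v\}) \cup \{u\} & \text{if } v \in S, \\ S & \text{otherwise,} \end{cases}
\]
so $|S'| \le |S|$ and $v \notin S'$. I will verify that $S'$ hits every cycle $C$ of $G'$. If $C$ uses no edge introduced by shortcutting, then $C$ is already a cycle of $G$ avoiding $v$, hit by $S$ and hence by $S' \supseteq S \setminus \{v\}$. Otherwise $C$ uses a shortcut edge $xy$ with $xv, vy \in E(G)$; replacing each such shortcut edge in $C$ by the two-edge path $x, v, y$ produces a closed walk $W$ in $G$ visiting $v$, from which I extract a cycle $D$ through $v$ with $V(D) \subseteq V(W) \subseteq V(C) \cup \{v\}$. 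The rule's premise then forces $u \in V(D)$, and since $u \ne v$, $u \in V(C)$. If $v \in S$ then $u \in S'$ hits $C$; otherwise $S = S'$ hits $D$ in a vertex other than $v$, i.e., in $V(C)$.

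Conversely, let $S'$ be a dfvs of $G'$ with $|S'| \le k$, and take $S = S'$. Any cycle of $G$ avoiding $v$ uses only edges not incident to $v$, so it is still a cycle of $G'$ and is hit by $S'$. For a cycle $v_1 \dots v_{i-1}\, v\, v_{i+1} \dots v_\ell v_1$ of $G$ through $v$, the shortcut edge $v_{i-1} v_{i+1}$ lies in $E(G')$, so $v_1 \dots v_{i-1}\, v_{i+1} \dots v_\ell v_1$ is a cycle of $G'$ (its vertices are distinct because those of the original cycle are); this shorter cycle is hit by $S'$, so $S$ hits the original cycle.

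For the running time, the premise ``every cycle through $v$ contains $u$'' is equivalent to ``$v$ lies on no cycle in $G - u$,'' which can be tested in $\Oof(n + m)$ time per pair (for instance, by computing strongly connected components of $G - u$ and checking whether the SCC of $v$ is non-trivial). Iterating over the $\Oof(n^2)$ ordered pairs of distinct vertices and performing the shortcut operation in $\Oof(n+m)$ time yields the claimed $\Oof(n^2(n+m))$ bound. The subtlest step will be the cycle extraction in the forward direction: one must take $D$ inside $V(W) = V(C) \cup \{v\}$, which is exactly what allows the premise to inject $u$ into $V(C)$; the edge case where $v$ carries a self-loop is handled by noting that the rule is vacuous there, since no $u \ne v$ can lie on the loop $vv$.
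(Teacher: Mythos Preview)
Your proof is correct and follows essentially the same structure as the paper's: replace $v$ by $u$ in the solution if necessary, then argue both directions by relating cycles in $G\ominus v$ to cycles in $G$ through $v$, with the same running-time analysis via testing whether $v$ lies on a cycle in $G-u$. The only technical difference is that the paper restricts attention to \emph{induced} cycles in $G\ominus v$ and shows that such a cycle uses at most one shortcut edge (so inserting $v$ directly yields a cycle in $G$), whereas you work with arbitrary cycles, expand all shortcut edges into a closed walk through $v$, and extract a simple cycle through $v$ from it; both maneuvers achieve the same goal and your walk-extraction is arguably more elementary.
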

\begin{proof}
We begin with the proof of safeness.
Let $S$ be a dfvs of $G$. 
We may assume, without loss of generality, that $S$ does not contain $v$. 
This is because $S$ is a dfvs if and only if it hits all induced cycles by \cref{obs:induced-cycles} and all induced cycles containing $v$ in particular contain $u$, hence, we could replace $v$ by $u$ in $S$. That is, $(S \setminus \{v\}) \cup \{u\}$ is also a dfvs of $G$. 

Let $C'$ be an induced cycle in $G\ominus v$. If $C'$ is not affected by the shortcutting of~$v$, that is, if $C'$ does not contain an in-neighbor $x$ and an out-neighbor $y$ of~$v$, or it does contain such vertices $x$ and $y$ but the edge $xy$ was already present in~$E(G)$, then $C'$ is also an induced cycle of $G$, hence is hit by $S$ in $G$ and $G\ominus v$. This implies that $S$ is also a dfvs in $G\ominus v$. 

Assume $C'$ contains an in-neighbor $x$ and an out-neighbor $y$ of $v$ and the edge $xy$ is a newly introduced shortcut edge $xy$ in $G\ominus v$. 
First, observe that this edge is the only shortcut edge on $C'$. Assume otherwise that $C'$ contains at least two shortcut edges~$xy$ and $x'y'$, say, $C'=xyPx'y'Qx$. By construction, $G\ominus v$ also contains the edges $xy'$ and $x'y$. 
Then $xy'Qx$ is a cycle whose vertex set is a subset of the vertex set of $C'$, contradicting the fact that $C'$ is an induced cycle. 
Thus $V(C)=V(C')\cup \{v\}$ induces a unique cycle in $G$. As~$C$ is induced in~$G$ and $S$ does not contain $v$ by assumption, $C$ and (consequently) $C'$ are hit by~$S$. 
Hence, $S$ is also a dfvs in $G\ominus v$. 

Conversely, let $S'$ be a dfvs of $G\ominus v$. 
If $C$ is an induced cycle of $G$ then $V(C)\setminus\{v\}$ is the vertex set of a cycle in $G\ominus v$, and is therefore hit by $S'$. Note that $V(C)\setminus\{v\}$ is the vertex set of a cycle in $G\ominus v$ regardless of whether $C$ contains $v$, or an in-neighbor and an out-neighbor of $v$, or $C$ is unaffected by the shortcutting of $v$. Hence, $S'$ intersects with $V(C)\setminus\{v\}$ in $G\ominus v$  and is also a dfvs in $G$.

We show how the rule can be implemented efficiently. 
For a pair $u,v$ of vertices we guess the vertices $v_1,v_2,v_4,v_5$ and delete all out-neighbors of $v_2$ except for $v_3$. 
Observe that the condition that every cycle containing $v$ also contains $u$ is equivalent to the condition that $v$ does not lie on a cycle of $G-u$. 
Hence, we also delete the vertex $u$. 
We test wether we can reach $v_1$ from $v_5$ the resulting graph. 
If it does not, then we shortcut~$v$. 
Testing this for all pairs $u,v$ takes time $\Oof(n^2\cdot n^4(n+m))=\Oof(n^6(n+m))$.
\end{proof}

We mention several special cases of \cref{rule:shortcut-new} that can be implemented more efficiently.
The first special case involves all cycles (and is hence more restrictive). The rule can be implemented in time $\Oof(n^2\cdot (n+m))$. 
\medskip
\begin{subrule}\label{rule:shortcut}
If there are distinct vertices $u, v \in V(G)$ such that $u$ appears in every cycle in which $v$ appears, then shortcut $v$ in $G$.
\end{subrule}
\medskip


The second special case is presented as Rule 5 in~\cite{fleischer2009experimental}. The rule can be implemented in time $\Oof(n(n+m))$. 

\medskip
\begin{subrule}
\label{rule:2-disjoint-cycles}
  If $v\in V(G)$ does not lie on two cycles that are vertex-disjoint except for~$v$, then shortcut $v$.   
\end{subrule}
\medskip

To see why \cref{rule:2-disjoint-cycles} is a special case of \cref{rule:shortcut-new} (in fact, of \cref{rule:shortcut}) apply Menger's theorem to find a vertex $u\neq v$ that hits all cycles on which $v$ lies. 
Here, and in all future applications of Menger's theorem, to find a set of vertices that intersects with all cycles containing a vertex~$v$ we construct the following graph $G'$.
We make a ``copy''~$v'$ of~$v$, then delete all outgoing edges of~$v$ and make them outgoing edges of~$v'$ instead, i.e., the out-neighbors of $v$ become out-neighbors of $v'$ instead. We complete the construction by making $v'$ the unique out-neighbor of $v$. 
Then, the cycles containing~$v$ in $G$ correspond one-to-one to the $v'$-$v$-paths in $G'$. 
By Menger's theorem, the size of the minimum vertex cut for~$v'$ and~$v$ (the minimum size of a set of vertices, distinct from~$v'$ and $v$, whose removal disconnects~$v'$ and~$v$) is equal to the maximum number of pairwise internally vertex-disjoint paths from~$v'$ to $v$.
By finding a minimum vertex cut (using a flow algorithm), we find a set of vertices intersecting all cycles in~$G$ that contain $v$. In particular, the vertices of a minimum cut intersect all cycles in~$G$ that are pairwise vertex-disjoint except for $v$. 
Coming back to the rule, a vertex $u\neq v$ that hits all cycles on which $v$ lies must exist if there exist no two cycles that are vertex-disjoint except for $v$. In other words, there exist no two internally vertex-disjoint paths from $v'$ to $v$ and removing $u$ disconnects $v'$ and $v$. This vertex $u$ in fact dominates $v$.

The final two special cases are, in fact, special cases of the aforementioned rule. The following rule is presented, e.g., as 
Rule 1 in~\cite{bergougnoux2021towards} and Rule 3 in~\cite{fleischer2009experimental}. The rule can be implemented in time $\Oof(n+m)$. 

\medskip
\begin{subrule}
\label{rule:isolated}
If $v \in V(G)$ has no in- or no out-neighbor, then $v$ can be removed from the graph $G$. 
\end{subrule}
\medskip

The final special case occurs when $u$ is the only in-neighbor of $v$ or when~$v$ is the only out-neighbor of $u$. The corresponding rule is presented as Rule~3 in~\cite{bergougnoux2021towards} and Rule~4 in~\cite{fleischer2009experimental}. The rule can be implemented in time $\Oof(n+m)$.

\medskip
\begin{subrule}
\label{rule:Y-shape}
If a vertex $v \in V(G)$ has only one in-neighbor or one out-neighbor then shortcut $v$ in $G$.   
\end{subrule}
\medskip


Note that we do not require to exhaustively apply any of the special rules in order to ensure correctness of a more general rule. 
However, this may be beneficial for practical implementations, as the special cases can be carried out more efficiently than the more general rules. 

The next rule is a standard rule, presented e.g., as Rule~1 in~\cite{fleischer2009experimental}. It can be implemented in time $\Oof(m)$. 

\medskip
\begin{drule}\label{rule:loops}
    If $v \in V(G)$ lies on a loop, then add $v$ to the solution, remove~$v$ from~$G$, and decrease the parameter by one.
\end{drule}
\medskip





The following sunflower-like rule was presented as Rule 3 in \cite{bergougnoux2021towards} and the special case of $u=v$ as Rule 6 in \cite{fleischer2009experimental}. 
The rule can be implemented in time $\Oof(kn^2(n+m)$ by iterating through all pairs $u,v$ of vertices and carry out at most $k+1$ steps of the Edmonds-Karp algorithm, each step running in time $\Oof(n+m)$.

\medskip
\begin{drule}\label{rule:disjoint-cycles}
Let $u, v \in V(G)$. If $uv \not\in E(G)$ and there exist more than $k$ internally vertex-disjoint $u$-$v$-paths, then insert the edge $uv$.
\end{drule}
\medskip

Observe that every rule either removes a vertex or adds an edge, hence \cref{rule:shortcut-new} and \cref{rule:loops} can be carried out exhaustively in time at most $\Oof(n^7(n+m))$. 
\cref{rule:disjoint-cycles} can be applied at most $n^2$ times, hence, can be carried out exhaustively in time at most $\Oof(n^2\cdot k\cdot n^2(n+m))\subseteq \Oof(n^5(n+m))$. 
In total the running is bounded by $\Oof(n^7(n+m))$. 

We provide an analysis of the kernel size with respect to the parameter $f$ (size of a smallest feedback vertex set of the underlying undirected graph) in \cref{sec:new-red-rules}. 

Also observe that the introduction of edges cannot create longer induced cycles, so that the rules do not interfere with our assumption in the next section that the input graphs do not contain induced cycles of a certain length. 

We note one more observation. 

\medskip
\begin{lemma}\label{lem:strong-components}
After exhaustive application of \cref{rule:shortcut-new} every vertex $v$ lies on a cycle that is induced around $v$. 
In particular, every weak component is strongly connected.
\end{lemma}
\begin{proof}
The first statement is clear. 
For the last statement let $u_1\ldots u_t$ be a path in the undirected underlying graph of $G$. 
Each of the edges $u_iu_{i+1}$ lies on a directed cycle~$C_i$ of $G$. 
Then in $C_1\ldots C_{t-1}$ we find a directed $u_1$-$u_t$-path; we first find a directed walk by appropriately gluing parts of the cycles and then find a path contained in the walk.
\end{proof}
\section{DFVS in graphs without long induced cycles}

We begin our study of \textsc{DFVS} in graphs without induced cycles of length greater than~$d$. 
We remark that $d$ must be given with the input and that our algorithms are based on the assumption that the input graph does not contain induced cycles of length greater than $d$. 
In the following we fix some number $d$ and assume that all graphs have no induced cycles of length greater than $d$. 

As mentioned in the introduction we would like to delete all vertices that do not lie on induced cycles. 
Unfortunately however, it is NP-complete to determine if a vertex lies on an induced cycle~\cite{fellows1995complexity}. 
In fact, this is even W[1]-hard when parameterized by~$d$~\cite{haas2006chordless}. 

One necessary condition for a vertex $w$ to lie on an induced cycle is as follows. 
By \cref{lem:approx-dfvs} we can approximate a small dfvs $S$. 
If $w$ does not lie on an (almost) induced path of length at most $d$ between two vertices $x,y\in S$ (making a copy of $x$ when dealing with the case $x=y$), then $w$ cannot lie on an induced cycle in $G$. 
Hence, if this is not the case, we can delete $w$.
Since $G-S$ is acyclic, one could hope that we can test this property in time $f(d)\cdot n^{\Oof(1)}$ for some function $f$, however, as we show (in~\cref{lem-whard-induced}) even this is not possible.
The \textsc{Directed Chordless $(s,v,t)$-Path} problem asks, given a graph $G$, vertices $s,v,t$, and integer $d$, whether there exists an induced $s$-$t$-path in $G$ of length at most~$d$ containing~$v$.  
The W[1]-hardness of the problem on general (directed and undirected) graphs was proved in~\cite{haas2006chordless}. 
We show hardness on directed acyclic graphs via a reduction from \textsc{Grid Tiling}. 
To not disturb the flow of the paper we postpone the proof to~\cref{sec:hardness}. 

We hence have to come up with new reduction rules that are efficiently implementable. 
We start with a high-level description of our strategy as well as the obstacles that we need to overcome. 
As outlined above, given a reduced graph $G$ (on which none of the reduction rules is applicable), we first compute a dfvs $S$ of size at most~$dk$ as guaranteed by~\cref{lem:approx-dfvs}. 

\medskip
\begin{drule}\label{rule:short-cycle}
    If a vertex $v \in V(G)$ does not lie on a path of length at most $d$ between any two vertices $x,y \in S$, then delete $v$.
\end{drule}

\medskip
\begin{lemma}
    \cref{rule:short-cycle} is safe and can be implemented in time $\Oof(n+m)$.
\end{lemma}
\begin{proof}
    We describe an algorithm implementing \Cref{rule:short-cycle} in time $\Oof(n+m)$.
    Given a reduced graph $G$ and a dfvs $S$ of size at most $dk$, we first compute a graph $G'$
    as follows. 
    We start with the graph $G - S$ and add two fresh vertices $s, t$ with additional edges $\{sv \mid \text{there is $x \in S$ with $xv \in E(G)$}\}$ and $\{wt \mid \text{there is $y \in S$ with $wy \in E(G)$}\}$. Obviously, $G'$ can be constructed in time $\Oof(n+m)$. 
    Then, for every $v \in V(G') \setminus \{s,t\}$, we compute $\dist(s,v)$ and $\dist(v, t)$. 
    As $G'$ is acyclic this can be done by topological sort in time $\Oof(n+m)$ (started once from $s$ and once from $t$ traversing the edges in the opposite direction). 
    If $\dist(s,v) + \dist(v,t) > d$, we conclude that $v$ does not lie on a path of length at most $d$ between any two vertices in $S$ and hence remove it.
    The correctness of this algorithm follows from the observation that $G'$ is acyclic and the following claim.
    \begin{claim}
     Let $G'$ be a DAG with distinct vertices $s,v,t \in V(G')$. Let $V_{sv}, V_{vt} \subseteq V(G')$ denote the sets of all vertices of all $s$-$v$-paths resp.\ all $v$-$t$-paths. Then $V_{sv} \cap V_{vt} \subseteq \{v\}$. 
    \end{claim}
    \begin{proof}
        Towards a contradiction assume there is a vertex $x \neq v \in V_{sv} \cap V_{vt}$. By definition there are paths $v_1 \dots v_m$ and $v'_1 \dots v'_n$ with $v_1 = s$, $v'_n = t$, $v_m = v'_1 = v$ such that there are indices $i < m$ and $j < n$ with $v_i = v'_j = x$ in $G'$.
        Then $G[\{v_i v_{i+1} \dots v v'_2 \dots v'_j\}]$ contains a cycle, contradicting that $G'$ is acyclic.
    \end{proof}
    As an immediate consequence of the claim we observe that $\dist(s,v) + \dist(v,t) > d$ if and only if $v$ does not lie on a path of length at most $d$ between $s$ and $t$. As $s$ inherits all outgoing edges from the vertices in $S$, and $t$ inherits all incoming edges to vertices in $S$, the correctness of the algorithm follows.

    Finally, we show that \cref{rule:short-cycle} is safe by showing that every vertex that does not lie on a path of length at most $d$ between any two vertices $x,y \in S$ (possibly $x=y$) does not lie on any induced cycle in $G$. 
    In the contrapositive, if a vertex $v$ lies on an induced cycle $C$, then there exist $x,y\in S$ with $\dist(x,v) + \dist(v, y) \leq d$. 
    As~$S$ is a dfvs there are $x,y\in S$ with $x,y\in V(C)$. 
    Choose such $x,y\in S$ such that the subpath on $C$ from~$x$ to $v$ and the subpath from $v$ to $y$ does not contain another vertex of $S$.  
    As $C$ is an induced cycle it has at most $d$ vertices (by assumption~$G$ does not contain induced cycles of length greater than $d$). 
    These paths witness that $\dist(x,v)+\dist(v,y)\leq d$. 
\end{proof}

We now partition the vertex set of $G$ 
into $S$ and $R = V(G) \setminus S$. 
We have $|S| \leq dk$, $G[R]$ is acyclic, and by the above preprocessing every vertex in $R$ is at distance at most~$d$ to and from some vertex in $S$. 
We now would like to check for each $w\in R$ whether there exists an induced path of length at most~$d$ from a vertex in $S$ to another (or the same) vertex of $S$. 
However, due to~\cref{lem-whard-induced} we cannot check this property efficiently. 
Our solution consists of adopting a ``relaxed approach''. 
That is, let $I^d_u \subseteq V(G)$ denote the set of all vertices that belong to some induced cycle of length at most~$d$ that also includes a vertex $u \in S$. 
We shall compute, for each vertex $u \in S$, a set $W^d_u \supseteq I^d_u$. 
In other words, we compute a superset of $I^d_u$, which we call $W^d_u$, of the vertices that share an induced cycle of length at most~$d$ with $u$. 
We call $W^d_u$ the set of \emph{$d$-weakly relevant vertices for~$u$}. 
Most crucially, we show that each set $W^d_u$ can be computed efficiently and will be of bounded size. 
We define $W^d_S = \bigcup_{u \in S}{W^d_u}$ and we call $W^d_S$ the set of \emph{$d$-weakly relevant vertices for $S$}. 
It is not hard to see that $G[S \cup W^d_S]$ is indeed an equivalent instance (to~$G$) as it includes all vertices that participate in induced cycles of length at most~$d$. 

We describe the construction of $W^d_u$ for a single vertex. That is, we fix a non-reducible directed graph $G$, an integer $k \geq 2$, a constant $d \geq 2$, a dfvs $S$ of size at most~$dk$, and a vertex $u \in S$. We first construct a graph $H^d_u$ as follows:\\[-2mm]

\begin{itemize}
    \item We begin by setting $H^d_u = G[N^{d+}_G[u]]$.\\[-3mm]
    \item Then, we add a new vertex $v$ to $H^d_u$ and make all the in-neighbors of $u$ become in-neighbors of $v$ instead, i.e, $u$ will only have out-neighbors and $v$ will only have in-neighbors. \\[-3mm]
    \item Next, we delete all vertices in $H^d_u$ that do not belong to some directed path from $u$ to $v$ of length at most~$d$. \\[-3mm]
\end{itemize}

Note that $H^d_u$ can be computed in polynomial time. Moreover, there exists an induced cycle of length at most~$d$ containing $u$ in $G$ if and only if there exists an induced~$u$ to~$v$ path of length at most~$d$ in $H^d_{u}$. 
By a slight abuse of notation, we also denote the graph~$H^d_u$ by~$H^d_{u,v}$ to emphasize the source and sink vertices. 
We call a directed graph \emph{$k$-nice} whenever any two vertices $x,z$ are connected by the directed edge~$xz$ or by a set of at most $k$ pairwise internally vertex-disjoint $x$-$z$-paths. In particular, $xz$ is an edge or there exists a set $Y$ (disjoint from $\{x,z\}$) of at most $k$ vertices that hits every directed path from~$x$ to $z$. 
Observe that $H^d_u$ is indeed $k$-nice (since~\cref{rule:disjoint-cycles} has been exhaustively applied on~$G$). 

Given a $k$-nice graph $H^d_u$, two vertices $x,z \in V(H^d_u)$, and $2 \leq d' < d$, we let~$H^{d'}_{x,z}$ denote the ($k$-nice) graph obtained from $H^d_u$ by deleting all incoming edges of $x$, deleting all outgoing edges of $z$, and deleting all vertices that do not belong to a path of length at most~$d'$ from $x$ to $z$. 
We are now ready to compute $W^d_u$, for \mbox{$u \in S$},  recursively as described in~\cref{alg:weak}, where \textsc{VertexSeparator($H,x,y$)} is an arbitrary polynomial time algorithm computing a set of vertices separating $x$ and~$y$ in~$H$ (say the Edmonds-Karp algorithms running in time $\Oof(k\cdot (n+m))$). 

\begin{algorithm}[t]
\caption{Algorithm for computing weakly relevant vertices for $u \in S$}\label{alg:weak}
\begin{algorithmic}
\Procedure{\textsc{WeaklyRelevant}}{$G, S, u, d$}
\State $\textsf{return}~\textsc{Recurse}(H^d_{u,v}, u, v, \{\}, d)$ \Comment{Returns $W^d_u$}
\EndProcedure

\State ~

\Procedure{\textsc{Recurse}}{$H, x, z, W, d$}
\If{$xz\in E(H)$}
\State $\textsf{return}~W$
\EndIf
\If{$|V(H) \setminus \{x,z\}| \leq k$ \textbf{~or~} $d = 2$}
\State $\textsf{return}~W \cup (V(H) \setminus \{x,z\})$
\EndIf
\State $Y \gets \textsc{VertexSeparator}(H,x,z)$
\Comment If $xz\not\in E(G)$, then $|Y| \leq k$ and $x, z \not\in Y$
\State $W \gets W \cup Y$
\For{$y \in Y$}
\State $W \gets W \cup \textsc{Recurse}(H^{d - 1}_{x,y}, x, y, W, d - 1) \cup \textsc{Recurse}(H^{d - 1}_{y,z}, y, z, W, d - 1)$
\EndFor

\State $\textsf{return}~W$
\EndProcedure
\end{algorithmic}
\end{algorithm}

\medskip
\begin{lemma}\label{lem:subsume-paths}
For $u\in S$, every induced cycle $C_u$ of length at most~$d$ including~$u$ only includes vertices that are $d$-weakly relevant for~$u$, i.e., $V(C_u) \subseteq W^d_u$. 
\end{lemma}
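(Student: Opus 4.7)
The plan is to lift the claim to the recursive procedure itself and induct on the depth parameter. First I would reduce to the split graph $H^d_{u,v}$: by construction, the induced cycles $C_u$ of length at most $d$ through $u$ in $G$ are in one-to-one correspondence with the induced $u$-$v$-paths $P$ of length at most $d$ in $H^d_{u,v}$, and under this bijection the internal vertices of $P$ equal $V(C_u) \setminus \{u\}$. Since loops were already removed by \cref{rule:loops}, $u$ and $v$ are non-adjacent in $H^d_{u,v}$ and ``induced'' has its usual meaning. Hence it suffices to show that for every such $P$, all internal vertices of $P$ lie in $W^d_u = \textsc{Recurse}(H^d_{u,v}, u, v, \{\}, d)$.

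I would then prove the following strengthened statement by induction on $d' \geq 2$: for every tuple $(H, x, z, W, d')$ arising in the recursion with $xz \notin E(H)$, and every induced $x$-$z$-path $P$ of length at most $d'$ in $H$, the call $\textsc{Recurse}(H, x, z, W, d')$ returns a set containing $W \cup (V(P) \setminus \{x,z\})$. The case $xz \in E(H)$ is vacuous, as the only induced $x$-$z$-path is then the edge itself, which has no internal vertex.

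For the base cases ($d' = 2$ or $|V(H) \setminus \{x,z\}| \leq k$) the procedure returns $W \cup (V(H) \setminus \{x,z\})$, which trivially contains every internal vertex of $P$. For the inductive step, assume $d' \geq 3$ and $|V(H) \setminus \{x,z\}| > k$. Since $xz \notin E(H)$ and $H$ is $k$-nice (this is inherited from $G$, which is reduced under \cref{rule:disjoint-cycles}), the call $\textsc{VertexSeparator}(H,x,z)$ returns a set $Y \subseteq V(H) \setminus \{x,z\}$ of size at most $k$ meeting every $x$-$z$-path in $H$. Pick any $y \in V(P) \cap Y$ and split $P = P_1 P_2$ at $y$, giving an induced $x$-$y$-path $P_1$ and an induced $y$-$z$-path $P_2$, each of length between $1$ and $d'-1$. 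I would then verify that $P_1$ remains an induced $x$-$y$-path of length at most $d'-1$ in $H^{d'-1}_{x,y}$: the three operations defining $H^{d'-1}_{x,y}$ delete only incoming edges of $x$ and outgoing edges of $y$ (both unused by $P_1$) together with vertices lacking any $x$-$y$-path of length at most $d'-1$ (every vertex of $P_1$ lies on $P_1$ itself, which is such a path); induced-ness is preserved because deletion never introduces new edges. The symmetric argument handles $P_2$ in $H^{d'-1}_{y,z}$. Invoking the inductive hypothesis on both recursive calls and noting that $y$ is added to $W$ via $W \gets W \cup Y$, the returned set contains $(V(P_1) \setminus \{x,y\}) \cup \{y\} \cup (V(P_2) \setminus \{y,z\}) = V(P) \setminus \{x,z\}$, as required.

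The main delicate point I expect is confirming that the subpaths $P_1, P_2$ survive the trimming in the restricted graphs $H^{d'-1}_{x,y}$ and $H^{d'-1}_{y,z}$ and remain induced there, so that the inductive hypothesis really applies. Both facts follow from the monotonicity observation that deleting vertices and edges cannot create new edges, together with the trivial certificate that each subpath witnesses its own length bound. A minor bookkeeping nuance is handling the situation $xz \in E(H)$ in the recursion: the induction is vacuous for such pairs because no non-trivial induced $x$-$z$-path exists.
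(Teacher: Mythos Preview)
Your proposal is correct and follows essentially the same inductive argument as the paper: reduce to induced $u$-$v$-paths in $H^d_{u,v}$, then induct on the depth parameter, splitting the path at a separator vertex and recursing on both halves. Your version is in fact more careful than the paper's in two respects---you explicitly verify that the subpaths $P_1,P_2$ survive the trimming to $H^{d'-1}_{x,y}$ and $H^{d'-1}_{y,z}$ and remain induced there, and you handle the degenerate case $xz\in E(H)$---both of which the paper's proof leaves implicit.
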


\begin{proof}
We prove, by induction on $2 \leq \ell \leq d$ that every vertex of every induced $u$-$v$-path $P$ of length at most $\ell$ in $H^\ell_u$ is contained in $W^\ell_u$. Recall that every induced cycle of length at most~$d$ including $u$ in $G$ corresponds to an induced path of length at most~$d$ in $H^d_u$. Hence, all vertices of such induced cycles belong to $W^d_u$, proving the statement of the lemma. 

The claim is true for $\ell=2$; the only (induced) $u$-$v$-paths of length $\ell = 2$ in~$H^2_{u,v}$ involve at most $k$ distinct vertices by \cref{rule:disjoint-cycles}; otherwise $u$ belongs to $k+1$ $2$-cycles that pairwise intersect at $u$ and $u$ would be removed by~\cref{rule:loops}. These at most $k$ vertices belong to $W^2_{u}$. 
Now assume the claim is true for some $\ell \geq 2$. We prove it for~$\ell+1$. As $uv \not\in E(G)$, there exists an $u$-$v$-separator $Y$ of size at most~$k$. 
In particular, there is $y \in Y \cap V(P)$, say $y$ is the $j$-th vertex on $P$ when walking from $u$, $1 \leq j \leq \ell$. Then $P=P_1P_2$, where $P_1$ has length $1 \leq j \leq \ell$ and $P_2$ has length $\ell + 1 -j \leq \ell$. By the induction hypothesis, the vertices of $P_1$ are contained in $\textsc{Recurse}(H^{\ell}_{u,y}, u, y, W, \ell)$ and the vertices of $P_2$  are contained in $\textsc{Recurse}(H^{\ell}_{y,v}, y, v, W, \ell)$. 
By construction, $W^{\ell+1}_{u}$ contains the vertices of both of these sets, as needed to conclude the proof. 
\end{proof}


\cref{lem:subsume-paths} immediately implies the safeness of the following rule. 

\medskip
\begin{drule}\label{rule:disjoint-cyclesnew}
    If a vertex $w \not\in S$ is not $d$-weakly relevant for some vertex $u\in S$, then remove~$w$ from $G$. 
\end{drule}
\medskip

It remains to prove that the rule can be efficiently implemented and that its application leads to a small kernel. 

\medskip
\begin{lemma}\label{lem:kernelsize}
For $u \in S$ and $2 < \ell\leq d$ we have $|W^\ell_{u}| \leq k(2|W^{\ell-1}_{u}| + 1) \leq 2^{\ell - 1} k^{\ell - 1}$ (assuming $k \geq 1$).
\end{lemma}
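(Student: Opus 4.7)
My plan is to split the proof into two parts, one for each inequality. The first inequality falls out of a direct accounting over the pseudocode of~\cref{alg:weak}: at a non-base-case invocation of $\textsc{Recurse}$ at depth $\ell > 2$, the algorithm computes a vertex separator $Y$ (of size at most $k$, since the current graph is $k$-nice --- $H^d_u$ is $k$-nice because~\cref{rule:disjoint-cycles} has been exhausted on $G$, and $k$-niceness passes to the sub-problems $H^{\ell-1}_{x,y}$ by construction), adds $Y$ to the accumulator, and then fires $2|Y|\leq 2k$ recursive calls at depth $\ell - 1$. Interpreting $|W^{\ell-1}_u|$ as a uniform upper bound on the output of any depth-$(\ell-1)$ call (the recursion is structurally identical on every $k$-nice input), the total contribution of a single depth-$\ell$ call is at most $k + 2k\,|W^{\ell-1}_u| = k(2|W^{\ell-1}_u| + 1)$.

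For the second inequality I would proceed by induction on $\ell$. The base case $\ell = 2$ is already handled inside the proof of~\cref{lem:subsume-paths}: the depth-$2$ call returns the intermediate vertices of the length-$2$ paths from $u$ to its split copy $v$ in $H^2_{u,v}$, and there can be at most $k$ of these; otherwise $u$ would lie on $k+1$ pairwise $u$-disjoint $2$-cycles in $G$, contradicting the exhaustive application of~\cref{rule:disjoint-cycles} together with~\cref{rule:loops}. Hence $|W^2_u| \leq k$, which will serve as the base for the induction.

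The inductive step needs a small twist, because the naive hypothesis $|W^{\ell-1}_u| \leq 2^{\ell-2}k^{\ell-2}$ plugged into the first inequality yields $k(2\cdot 2^{\ell-2}k^{\ell-2} + 1) = 2^{\ell-1}k^{\ell-1} + k$, which overshoots the target by exactly $k$ and so does not close. I would therefore strengthen the invariant to $|W^\ell_u| \leq 2^{\ell-1}k^{\ell-1} - k$, which already matches the base case $|W^2_u| \leq k = 2k - k$; substituting it into the first inequality gives
\[
|W^\ell_u| \;\leq\; k\bigl(2(2^{\ell-2}k^{\ell-2} - k) + 1\bigr) \;=\; 2^{\ell-1}k^{\ell-1} - 2k^2 + k,
\]
which is at most $2^{\ell-1}k^{\ell-1} - k$ precisely when $k \geq 1$. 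The main obstacle is therefore only recognizing that the natural induction is just short of closing and guessing the correct additive $-k$ correction; an equivalent route that bypasses the guess is to unroll the recurrence into the geometric sum $k\sum_{i=0}^{\ell-2}(2k)^i = k\cdot\frac{(2k)^{\ell-1}-1}{2k-1}$ and observe that $k/(2k-1)\leq 1$ for $k\geq 1$, so the sum is bounded by $(2k)^{\ell-1} = 2^{\ell-1}k^{\ell-1}$, as required.
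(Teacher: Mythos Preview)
Your proposal is correct and follows essentially the same approach as the paper. For the first inequality, both you and the paper simply account over the recursion in~\cref{alg:weak}; for the second, the paper carries the closed form $k\cdot\frac{(2k)^{\ell-1}-1}{2k-1}$ as the induction hypothesis and then bounds it by $2^{\ell-1}k^{\ell-1}$ using $2k-1\geq k$, which is exactly your ``unroll to a geometric sum'' variant (your strengthened-invariant variant with the $-k$ correction is an equivalent repackaging of the same computation).
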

\begin{proof}
The claim follows by induction. 
For $\ell=2$ we have $|W^2_{u}| \leq k$. 
By the recursive definition of $W^{\ell + 1}_{u}$ we have \mbox{$|W^{\ell + 1}_{u}|\leq k(2|W^{\ell-1}_{u}| + 1)\leq (2^\ell k^\ell - 2k)/(4k - 2)$} $\leq (2^\ell k^\ell)/(4 k - 2) \leq (2^\ell k^\ell)/(2k) \leq 2^{\ell - 1} k^{\ell-1}$.
\end{proof}

\begin{lemma}\label{lem:running-time-rule6}
    \cref{rule:disjoint-cyclesnew} is safe, and if $2^{d} k^{d} \leq n$, then it can be implemented in time $\Oof(n^2(n+m))$. 
\end{lemma}
\begin{proof}
For each $u\in S$ we compute the set $W^d_u$ by
applying \cref{alg:weak}. 
Each run of the algorithm requires $\Oof(2^{d-1} k^{d-1} \cdot k(n+m))=\Oof(2^{d-1}k^d(n+m))$ time in the worst case. Since $|S| \leq kd$, the total running time is $\Oof(2^{d} k^{d+1} (n+m))$ in the worst case. Hence, if $2^d k^{d} \leq n$ we have $2^{d} k^{d+1} (n+m) \leq n^2(n+m)$, as needed. 
\end{proof}


\begin{theorem}
\textsc{DFVS} parameterized by solution size~$k$ and restricted to graphs without induced cycles of length greater than $d$ admits a kernel with $2^d k^d$ vertices computable in time $\Oof(n^2(n+m))$.
\end{theorem}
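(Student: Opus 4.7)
The proof is essentially a composition of the tools already built up in the section, so the plan is to chain them together. First, I would apply all the standard reduction rules together with \cref{rule:disjoint-cyclesnew} exhaustively, in the order and iteration scheme described in the preliminaries. Next, invoke \cref{lem:approx-dfvs}: in polynomial time we either correctly reject the instance (if no dfvs of size at most $k$ exists) or obtain a dfvs $S$ with $|S|\leq dk$. Since every induced cycle has length at most $d$ and must meet $S$, every vertex lying on such a cycle is within distance at most $d$, both in and out, of some vertex of $S$. Hence we may restrict attention to $G[N^{d+}_G[S]\cap N^{d-}_G[S]]$, which is easily computed by standard breadth-first searches from each vertex of $S$.

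Then, for each $u\in S$, run \cref{alg:weak} to compute the set $W^d_u$ of $d$-weakly relevant vertices. By \cref{lem:subsume-paths}, every induced cycle of length at most $d$ that passes through $u$ has its vertex set fully contained in $W^d_u$. Since every induced cycle in $G$ passes through at least one vertex of $S$, every induced cycle of length at most $d$ is contained in $S\cup\bigcup_{u\in S}W^d_u$. Applying \cref{rule:disjoint-cyclesnew}, we safely delete all vertices outside this set, and the surviving induced subgraph $G[S\cup\bigcup_{u\in S}W^d_u]$ is an equivalent DFVS instance (since the problem amounts to hitting all induced cycles, all of which survive).

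For the size bound, \cref{lem:kernelsize} gives $|W^d_u|\leq 2^{d-1}k^{d-1}$, so
$$\left|S\cup\bigcup_{u\in S}W^d_u\right|\;\leq\; dk + dk\cdot 2^{d-1}k^{d-1},$$
which fits the claimed $2^d k^d$ bound up to the $d$-dependent constants that the statement implicitly hides. For the running time, the approximation step and the breadth-first searches are polynomial, and each call to \cref{alg:weak} runs in time $2^{d-1}k^{d-1}\cdot n^{\Oof(1)}$, so the whole procedure is polynomial whenever $2^d k^d\leq n^{\Oof(1)}$; otherwise the input itself already has at most $2^d k^d$ vertices and there is nothing to do.

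Most of the substantive work has already been done in \cref{lem:subsume-paths,lem:kernelsize} and the safeness proof of \cref{rule:disjoint-cyclesnew}, so the theorem itself is essentially a bookkeeping exercise: pipe the approximation, the neighborhood restriction, and the weakly-relevant-vertex computation together, and invoke the two lemmas for the count and the complexity. The main subtlety I would double-check is the precise constant in the vertex count, since the naive sum yields $dk\cdot(1+2^{d-1}k^{d-1})$, which strictly exceeds $2^d k^d$ for large $d$; the $2^d k^d$ bound is to be read modulo constants depending only on $d$, consistent with how the authors state the result in the abstract.
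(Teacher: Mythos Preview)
Your proposal is correct and follows essentially the same approach as the paper: the paper's proof is the two-line case split ``either $2^dk^d>n$, in which case we are done; otherwise the rule is efficiently applicable and yields a kernel of the claimed size,'' and you have simply unpacked what ``the rule is efficiently applicable and yields a kernel of the claimed size'' means by invoking \cref{lem:approx-dfvs}, \cref{lem:subsume-paths}, \cref{lem:kernelsize}, and the safeness of \cref{rule:disjoint-cyclesnew} in turn. Your observation about the constant is also apt: the raw count $dk+dk\cdot 2^{d-1}k^{d-1}$ exceeds $2^dk^d$ by a factor depending only on $d$, so the stated bound is to be read as $\Oof_d(k^d)$, which the paper does not spell out in its own (terse) proof.
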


\begin{proof}
Either $2^dk^d>n$, in which case we are done.  Otherwise, \cref{rule:disjoint-cyclesnew} is efficiently applicable and yields a kernel of the claimed size. 
\end{proof}

Finally, we further study the structure of kernelized instances and count how many induced cycles we can find. Our key tool is the classical sunflower lemma. 
A \emph{sunflower} with $\ell$ \emph{petals} and a \emph{core} $Y$ in a set system $\Gg$ is a 
collection of sets $S_1,\ldots, S_\ell\in E(\Gg)$ such that $S_i\cap S_j=Y$ for all $i\neq j \leq \ell$. 
The sets $S_i\setminus Y$ are called \emph{petals} and we require none of them to be empty (while the core $Y$ may be empty). 
Erd\"os and Rado~\cite{erdos1960intersection} proved in their famous sunflower lemma that every hypergraph with edges of size at most $d$ with at least $\mathrm{sun}(d,k)=d!k^d$ edges contains a sunflower with at least $k+1$ petals. 
Kernelization for $d$-\textsc{Hitting Set} based on the sunflower lemma yields a kernel with at most $\Oof(d!k^d)$ sets on hypergraphs with hyperedges of size at most~$d$, see e.g.~\cite{fafianie2015shortcut,van2014towards}.
We can prove the following lemma. 

\medskip
\begin{lemma}\label{thm:number-of-short-cycles}
Kernelized instances of \textsc{DFVS} contain at most $d^{3d}k^d$ induced cycles of length at most~$d$. 
\end{lemma}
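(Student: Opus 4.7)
My plan is to apply the Erd\H{o}s--Rado sunflower lemma to the hypergraph $\mathcal{C}$ of induced cycles of length at most~$d$ and show that no sunflower of $d^2 k + 1$ such cycles can survive the reduction rules. Since $d^d \geq d!$, we have $d^{3d}k^d \geq d!\cdot(d^2 k)^d$, so the contrapositive of the sunflower lemma yields $|\mathcal{C}| \leq d^{3d}k^d$ once every sunflower is shown to have at most $d^2 k$ petals. It therefore suffices to forbid, in a kernelized instance, a sunflower $C_1,\dots,C_{d^2 k+1}$ of induced cycles with core $Y$ satisfying $|Y|\leq d-1$, which I would handle by case analysis on $|Y|$.

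For $|Y|=0$, the sunflower is a packing of more than $k$ pairwise vertex-disjoint induced cycles, which is impossible in a YES-instance since \cref{lem:approx-dfvs} detects in advance any instance without a dfvs of size at most~$k$. For $|Y|=1$, with $Y=\{y\}$, the sunflower is a bouquet of internally vertex-disjoint cycles through~$y$; the special case $u=v=y$ of \cref{rule:disjoint-cycles} (attributed in the paper to Fleischer et al.\ and implemented through the Menger-style doubling construction already spelled out earlier in the paper for cycles through a single vertex) bounds the bouquet by~$k$: more than $k$ internally vertex-disjoint cycles through~$y$ would force the insertion of a self-loop at~$y$, subsequently eliminated by \cref{rule:loops}. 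In both cases the sunflower size is well below $d^2 k+1$.

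The main case is $|Y|\geq 2$. I call an ordered pair $(y_a, y_b)$ of distinct $Y$-vertices a \emph{long consecutive pair} of a sunflower cycle $C_j$ if $y_a,y_b$ appear consecutively in the restriction of $C_j$'s cyclic order to~$Y$ and the segment of $C_j$ between them has length at least~$2$. Pairwise disjointness of the sunflower's petals ensures that, across different sunflower cycles, the segments associated with a fixed ordered pair $(y_a,y_b)$ are internally vertex-disjoint $y_a$-$y_b$-paths of length~$\geq 2$; having more than~$k$ of them would, by \cref{rule:disjoint-cycles}, force the insertion of the edge $y_a y_b$, creating a chord that contradicts inducedness. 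Hence each of the at most $|Y|(|Y|-1)\leq (d-1)(d-2)$ ordered pairs is a long consecutive pair of at most $k$ sunflower cycles. Since every sunflower cycle except possibly the one with $V(C_j)=Y$ has a nonempty petal and therefore at least one long consecutive pair, the sunflower has at most $(d-1)(d-2)k+1 < d^2 k + 1$ cycles, a contradiction. The hard part I expect is cleanly justifying the $|Y|=1$ case via the Menger doubling, because without a genuine bound of order~$k$ on bouquets the delicate numeric balance $d!\cdot(d^2k)^d \leq d^{3d}k^d$ collapses.
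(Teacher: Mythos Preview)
Your proposal is correct and follows the same overall strategy as the paper: apply the sunflower lemma and use \cref{rule:disjoint-cycles} to bound the number of petals in any sunflower of induced cycles. The paper's proof differs in one structural step for the case $|Y|\geq 2$: rather than counting ordered pairs of core vertices, it first decomposes the core $Y$ into maximal directed path segments $P_1,\dots,P_t$ (observing that any edge inside $Y$ forces a fixed relative order of its endpoints on \emph{every} sunflower cycle), and then bounds the number of segment pairs by $(d/2)^2$ since $t\leq d/2$. Your argument skips this decomposition and counts ordered pairs $(y_a,y_b)$ of core vertices directly, obtaining the weaker bound $(d-1)(d-2)$ on the number of possible long consecutive pairs; both bounds fit under $d^{3d}k^d$ via $d!\leq d^d$. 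Your version is more elementary and avoids the path-segment observation; the paper's buys a better constant in the petal bound ($(d/2)^2k$ versus roughly $d^2k$).

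Your explicit treatment of the cases $|Y|\in\{0,1\}$ is also more careful than the paper's, which glosses over them. One small imprecision: for $|Y|=0$ you appeal to ``YES-instance'', but the lemma is stated for kernelized instances regardless of the answer. The cleaner justification is that the kernelization pipeline already invoked \cref{lem:approx-dfvs} to produce a dfvs $S$ with $|S|\leq dk$ (or returned a trivial NO-instance); since pairwise disjoint cycles must hit $S$ in distinct vertices, any empty-core sunflower has at most $dk\leq d^2k$ petals, which still sits comfortably below your threshold.
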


\begin{proof}
We consider the hypergraph $\Gg$ of induced cycles. We prove that $\Gg$ does not contain a sunflower with more than $(d/2)^2k$ petals (each petal of size at most~$d$). Then, by the sunflower lemma, $\Gg$ contains at most $\mathrm{sun}(d, (d/2)^2k)=d!((d/2)^2k)^d=d!(d/2)^{2d}k^d\leq d^{3d}k^d$ hyperedges (of size at most $d$), as claimed.


A sunflower in $\Gg$ corresponds to a set of induced cycles in $G$ of length at most~$d$ that share a common core. 
If the core has the elements $\{v_1,\ldots, v_c\}$, then each petal~$S$ of the sunflower completes the vertices $v_1,\ldots, v_c$ to a cycle $C_S$. 
Observe that when two vertices $v_i, v_j$ are connected by an edge, say $v_iv_j\in E(G)$, then they appear in that order on every cycle $C_S$, and we have $v_jv_i\not\in E(G)$ as this would contradict the fact that the cycles are induced. 
Hence, the vertices $v_1,\ldots, v_c$ can be partitioned into maximal path segments $P_1,\ldots, P_t$ such that the vertices of each $P_i$ are connected consecutively as a path and such that there are no edges between $P_i$ and $P_j$ for $i\neq j$. 
Each of the cycles $C_S$ connects the path segments in some order using the petal vertices (the case $t = 1$ is the simplest so we consider the case $t > 1$). 
Note that if the path segments~$P_i$ and~$P_j$ are connected in that order, then the connection is via the last vertex of $P_i$ and the first vertex of $P_j$.  

Observe that $t\leq d/2$, as each cycle has length at most~$d$ and for every two consecutive path segments there must be at least one petal vertex connecting the two. 
Now, if there is a sunflower with more than $(d/2)^2k$ petals, then one of the possible $(d/2)^2$ pairs of path segments must be connected by more than $k$~paths. 
Since the connection is always between the last vertex $v$ of the first segment and the first vertex~$w$ of the second segment, there are more than~$k$ disjoint paths connecting $v$ and $w$. As \cref{rule:disjoint-cycles} can no longer be applied, there is a direct edge between $v$ and $w$, contradicting the fact that the cycles of the sunflower are induced. 
\end{proof}

Of course for small values of $d$ we can prove better bounds, however, they can improve the bounds of \cref{thm:number-of-short-cycles} only up to the constants depending on~$d$. 

\medskip
\begin{lemma}\label{thm:number-of-2-3-cycles}
If a kernelized instance of \textsc{DFVS} contains more than $k^2$ cycles of length~$2$ we may reject the instance. Furthermore, it contains at most $k^3$ induced cycles of length~$3$. 
\end{lemma}

\begin{proof} 
The first observation is based on the standard high-degree rule for \textsc{Vertex Cover}. 
No vertex $v$ is contained in more than $k$ cycles of length~$2$, otherwise \cref{rule:disjoint-cycles} applies and introduces a loop at $v$, which leads to the elimination of $v$ by \cref{rule:loops}. 
Hence, if we have more than $k^2$ cycles of length~$2$ these cannot be hit by a feedback vertex set of size $k$ and we can reject the instance. 

Assume there are more than $k^3$ induced cycles of length $3$. In any dfvs of size at most~$k$, there must  exist a vertex $v_1$ that hits a $1/k$ fraction of these cycles, i.e., $v_1$ must intersect with more than $k^2$ of the induced cycles of length~$3$. 
We fix such a $v_1$ and consider all induced $3$-cycles containing~$v_1$. 

For every in-neighbor $v_2$ of $v_1$, i.e., for every edge $v_2v_1 \in E(G)$, we can have at most $k$ (distinct) vertices $v_3$ such that $v_1v_3, v_3v_2 \in E(G)$; as otherwise, by \cref{rule:disjoint-cycles},  $v_1v_2 \in E(G)$ and the cycles are not induced. 
Similarly, for every out-neighbor~$v_2$ of~$v_1$, i.e., for every edge $v_1v_2 \in E(G)$, we can have at most $k$ (distinct) vertices~$v_3$ such that $v_2v_3, v_3v_1 \in E(G)$; as otherwise, again by \cref{rule:disjoint-cycles}, $v_2v_1 \in E(G)$ and the cycles are not induced. 

Since $v_1$ does not lie on $k+1$ cycles that pairwise intersect only at $v_1$ after the application of \cref{rule:disjoint-cycles}, by Menger's theorem there is a set of at most $k$ vertices (different from $v_1$)  that hits all cycles containing~$v_1$. Hence, at most $k$ (in or out) neighbors of~$v_1$ hit all induced $3$-cycles containing~$v_1$. 
Combined with the fact that each of those neighbors can belong to at most $k$ induced $3$-cycles containing~$v_1$, this implies that $v_1$ belongs to at most~$k^2$ induced cycles of length~$3$, contradicting the fact that $v_1$ must hit more than~$k^2$ induced $3$-cycles for a dfvs of size~$k$ to hit more than $k^3$ induced $3$-cycles.  
%
\end{proof}

In fact the bounds of \cref{thm:number-of-short-cycles} are optimal up to factors depending only on $d$. Consider for example the graph on vertices 
$v_{i,j}$ for $1\leq i\leq d, 1\leq j\leq k$. Connect $v_{i,j}$ with $v_{i+1, \ell}$, $1\leq i\leq d, 1\leq j,\ell\leq k$, where we compute $i+1$ modulo~$d$. This graph on $dk$ vertices has a dfvs of size~$k$. None of the presented reduction rules is applicable. Finally, it has $k^d$ cycles. 
\section{Nowhere dense classes without long induced cycles}

We now improve the general kernel construction for \textsc{DFVS} on graphs without induced cycles of length greater than $d$ by further restricting the class of (the underlying undirected) graphs. 
Classes with bounded expansion and nowhere dense classes are very general classes of sparse undirected graphs, including e.g.\ the class of planar graphs, all classes with excluded minors, excluded topological minors, and many more classes of uniformly sparse graphs. 
When we say that a directed graph $G$ belongs to a class~$\Cc$ of undirected graphs we in fact mean that the underlying undirected graph belongs to~$\Cc$. 
We prove that for every class~$\Cc$ with bounded expansion and for every $d\in \N$ there exists an algorithm and a constant $f_\Cc(d)$ such that 
for graphs $G\in \Cc$ without induced cycles of length greater than $d$ the algorithm computes a kernel with $f_\Cc(d)\cdot k$ vertices in time $f_\Cc(d)\cdot n^{\Oof(1)}$. 
For every nowhere dense class $\Cc$, every $d\in \N$, and every $\epsilon>0$ there exists and algorithm and a constant $f_\Cc(d,\epsilon)$ such that for graphs $G\in \Cc$ without induced cycles of length greater than $d$ the algorithm computes a kernel with $f_\Cc(d,\epsilon)\cdot k^{1+\epsilon}$ vertices in time $f_\Cc(d,\epsilon)\cdot n^{\Oof(1)}$. 
Recall that the constants in the $\Oof$-notation do not depend on $d$ or~$\epsilon$. 

We present the proof for nowhere dense classes, as every class of bounded expansion is also nowhere dense. To keep the presentation clean we omit the details for the former case since the required modifications are negligible.   
We refer the reader to \cite{nevsetvril2008grad, NesetrilM11a} for formal definitions of bounded expansion and nowhere dense classes of graphs. 

We collect some additional properties that we need for our additional reduction rule.
The first property we use is that for every nowhere dense class of graphs $\Cc$ there exists a positive integer $t > 0$ such that $K_{t,t}$ (the complete bipartite graph with $t$ vertices in each part) is not a subgraph of any $G\in \Cc$. 
We remark that here and in the lemmas that we state below, the appearing constants (like $t$) can either be computed or approximated by uniform algorithms that do not depend on the graph class. 
It may also be the case that we do not even need to compute the constants and only use their existence to derive size bounds on the resulting kernelized instances.
When we define additional constants from constants whose existence is stated in the lemmas, it is easy to verify that these constants are also computable, as they are obtained by simple arithmetic operations from these computable constants. 

Let us fix an approximate solution $S$ as described in  \cref{lem:approx-dfvs}. 
We construct a so-called projection closure around our approximate solution $S$. This is possible in nowhere dense classes as stated in the next lemma.
In the following, when we speak of an undirected path $P$ in $G$, we mean an undirected path in the underlying undirected graph. 

Let $X\subseteq V(G)$ and let $u\in V(G)\setminus X$. The \emph{undirected $d$-projection} of $u$ onto~$X$ is defined as the set $\Pi_d(u,X)$ of all vertices $w\in X$ for which 
there exists an undirected path $P$ in $G$ that starts in $u$, ends in $w$, has length at most $d$, and whose internal vertices do not belong to $X$. 

\newcommand{\Xclos}{\ensuremath{X^\circ}}
\newcommand{\Sclos}{\ensuremath{S^\circ}}

\medskip
\begin{lemma}[\cite{EickmeyerGKKPRS17}]\label{lem:closure-nd}
    Let $\Cc$ be a nowhere dense class of graphs. There exists a polynomial time algorithm that given a graph $G\in \Cc$, $d$, $\epsilon>0$ and $X\subseteq V(G)$, computes the $d$-projection-closure of $X$, denoted by $\Xclos$, with the following properties:\\[-3mm]
    \begin{enumerate}
        \item $X\subseteq \Xclos$, \\[-3mm]
        \item $|\Xclos|\leq \kappa_{d,\epsilon}\cdot |X|^{1+\epsilon}$ for a constant $\kappa_{d,\epsilon}$ depending only on $d$ and $\epsilon$, \\[-3mm]
        \item $|\Pi_d(u,\Xclos)|\leq \kappa_{d,\epsilon}\cdot |X|^\epsilon$ for each $u\in V(G)\setminus \Xclos$, and\\[-3mm]
        \item $|\{\Pi_d(u,\Xclos)~:~u\in V(G)\setminus \Xclos\}|\leq \kappa_{d,\epsilon}\cdot |X|^{1+\epsilon}$. \\[-3mm]
\end{enumerate}
\end{lemma}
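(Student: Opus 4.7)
The lemma is cited from prior work, and my plan is to sketch the projection-closure construction of Eickmeyer et al.\ rather than rederive it from scratch. The overall strategy is an iterative refinement: set $X^\circ_0 := X$ and, at each stage $i$, inspect the external vertices $u \in V(G) \setminus X^\circ_i$ whose $d$-projection $\Pi_d(u, X^\circ_i)$ is larger than the target threshold $\kappa_{d,\epsilon}\cdot|X|^\epsilon$. For each such \emph{rich} vertex $u$, add a small set of ``separator'' vertices (obtained by picking one internal witness on a length-$\leq d$ projection path for each surplus projection target) to produce $X^\circ_{i+1}$. After a bounded number of rounds, the procedure stabilizes so that every external vertex has projection of size at most the target, yielding property~3.

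The essential structural input is the polynomial bound on rich projectors that is characteristic of nowhere dense classes: for every $G \in \Cc$, every $Y \subseteq V(G)$, every $d$ and every $\delta>0$, the number of vertices $u$ with $|\Pi_d(u, Y)| \geq |Y|^\delta$ is $\Oof(|Y|)$, and the number of distinct projections $\{\Pi_d(u,Y) : u \in V(G)\setminus Y\}$ is $\Oof(|Y|^{1+\delta})$. These bounds follow from uniform quasi-wideness of $\Cc$ together with the fact that the $d$-projection set system has bounded VC-dimension, so by Sauer--Shelah its shatter function is polynomial. With this tool in hand, the counting is straightforward: at each round the number of added separators is at most (rich projectors) $\times$ (separator budget per projector), both polynomial in $|X^\circ_i|^\delta$, and summing a geometric series yields properties~1 and~2. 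Property~4 follows by applying the projection-counting bound to the final $X^\circ$ itself.

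The main obstacle is the compounding across rounds. Each round may multiply $|X^\circ_i|$ by a factor of $|X|^\delta$, so a naive execution of $d$ rounds produces a final size of roughly $|X|^{1+d\delta}$. The delicate step in the reference is the re-parameterization $\delta \mapsto \delta/d$ (together with choosing the $\kappa_{d,\epsilon}$ constants to swallow the nested factors) so that the compounded exponent collapses back to the single $\epsilon$ promised in the statement. Once this bookkeeping is done, the $d$-rounds become a black box. Since the directed orientation of $G$ plays no role in the undirected $d$-projection, I would invoke the cited construction unchanged on the underlying undirected graph of $G$, which gives the required $X^\circ$ in polynomial time.
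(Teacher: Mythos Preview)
The paper does not prove this lemma at all: it is quoted verbatim from \cite{EickmeyerGKKPRS17} and used as a black box, so there is no in-paper argument to compare against. Your recognition of this, and your choice to sketch the cited construction rather than redo it, is exactly the right call.

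That said, your sketch deviates from the actual construction in \cite{EickmeyerGKKPRS17} in a couple of places that would matter if you were asked to flesh it out. The closure is not built by repeatedly picking rich vertices and adding per-vertex separator witnesses; rather, one uses the \emph{weak coloring number} (or, in the nowhere dense regime, the splitter-game / neighborhood-complexity machinery) to globally bound the number of vertices with large projection and then adds a short connecting path for each such vertex, iterating until stable. The bound on the number of distinct projection classes (property~4) comes from the neighborhood-complexity results for nowhere dense classes, not from a VC-dimension / Sauer--Shelah argument; nowhere dense classes need not have bounded VC-dimension for the $d$-projection set system in the sense you describe. Your re-parameterization remark (shrinking $\delta$ by a factor depending on the number of rounds so the compounded exponent collapses to $\epsilon$) is correct in spirit and is indeed the bookkeeping step in the reference. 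For the purposes of this paper, simply citing the lemma and applying it to the underlying undirected graph, as you propose in your last sentence, is all that is required.
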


\medskip
In words, the projection closure $\Xclos$ extends $X$, it is not much larger than $X$, all undirected $d$-projections to $\Xclos$ are really small, and finally, there are not many different projections onto $\Xclos$. 

We need the following strengthening for $\ell$-tuples~\cite{PilipczukST18a}. 
For a set $X\subseteq V(G)$ and an $\ell$-tuple $\bar x$ of vertices we call the tuple $(N[\bar x_1]\cap X, \ldots, N[\bar x_\ell]\cap X)$ the \emph{undirected projection of $\bar x$ onto $X$}. 
We say that $\bar X = (X_1,\ldots, X_\ell)$ is realized as a projection if there is a tuple $\bar x$ whose projection is equal to $\bar X$.

\medskip
\begin{lemma}[\cite{PilipczukST18a}]\label{lem:number-of-types}
Let $\Cc$ be a nowhere dense class of graphs and let~$\ell\geq 1$. Let $G\in \Cc$ and $X\subseteq V(G)$. Then, for every $\epsilon>0$ there exists a constant~$\tau_{\ell,\epsilon}$ such that there are at most $\tau_{\ell,\epsilon}\cdot|X|^{\ell+\epsilon}$ different realized undirected projections of $\ell$-tuples. 
\end{lemma}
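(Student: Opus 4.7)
The plan is to reduce the $\ell$-tuple bound to the single-vertex bound from Lemma~\ref{lem:closure-nd} and then re-parameterize $\epsilon$ to absorb a factor of $\ell$ in the exponent. First I would observe that with $d=1$ the undirected $1$-projection $\Pi_1(u,X)$ of a vertex $u\notin X$ onto $X$ coincides with $N[u]\cap X$, since paths of length at most $1$ avoiding $X$ internally are simply edges from $u$ into $X$. Hence each coordinate $N[\bar x_i]\cap X$ of an undirected tuple projection is, up to the trivial case $\bar x_i\in X$, a $1$-projection onto $X$, and by part~(4) of Lemma~\ref{lem:closure-nd} applied with $d=1$ and an auxiliary parameter $\epsilon'>0$, there are at most $\kappa_{1,\epsilon'}\cdot|X|^{1+\epsilon'}$ possible values in each coordinate.

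Next I would combine coordinates. Since a realized $\ell$-tuple projection $(X_1,\ldots,X_\ell)$ is determined by the $\ell$ independent per-coordinate projections, the total number of realized projections is at most the $\ell$-fold product
\[
\bigl(\kappa_{1,\epsilon'}\cdot|X|^{1+\epsilon'}\bigr)^\ell \;=\; \kappa_{1,\epsilon'}^{\ell}\cdot|X|^{\ell+\ell\epsilon'}.
\]
Then I would set $\epsilon':=\epsilon/\ell$ and $\tau_{\ell,\epsilon}:=\kappa_{1,\epsilon/\ell}^{\ell}$, which gives the claimed bound $\tau_{\ell,\epsilon}\cdot|X|^{\ell+\epsilon}$.

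The only non-trivial ingredient here is the single-vertex bound on the number of distinct $1$-projections, which is the structural consequence of nowhere-denseness already encapsulated in Lemma~\ref{lem:closure-nd}; once that bound is in hand, the $\ell$-tuple version follows from the coordinate-wise product together with re-parameterization of $\epsilon$. Consequently I do not anticipate a substantive obstacle beyond correctly setting up the reduction to the single-vertex case. The subtlety worth flagging is that per-coordinate projections are constrained only coordinate-wise and not jointly, so the naive product suffices; the conversion of $|X|^{\ell(1+\epsilon')}$ into $|X|^{\ell+\epsilon}$ relies on absorbing the factor $\ell$ into $\epsilon$, and since the lemma permits $\tau_{\ell,\epsilon}$ to depend on $\ell$ and $\epsilon$, this re-parameterization costs us only in the leading constant.
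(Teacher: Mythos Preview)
The paper does not give its own proof of this lemma; it is quoted verbatim from \cite{PilipczukST18a} and used as a black box. So there is no ``paper's approach'' to compare against.

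Your derivation is essentially correct, with one small gap worth patching. Part~(4) of Lemma~\ref{lem:closure-nd} bounds the number of distinct $1$-projections $\Pi_1(u,X)$ only for vertices $u\in V(G)\setminus\Xclos$, not for all $u\in V(G)$. Vertices $u\in\Xclos\setminus X$ are not covered by that clause. This is easily repaired: by part~(2) there are at most $\kappa_{1,\epsilon'}\cdot|X|^{1+\epsilon'}$ such vertices, hence at most that many additional neighborhood traces, and for $u\in X$ there are trivially at most $|X|$ further values. Absorbing these into the per-coordinate constant (say $3\kappa_{1,\epsilon'}\cdot|X|^{1+\epsilon'}$) and then taking the $\ell$-fold product and rescaling $\epsilon'=\epsilon/\ell$ as you propose gives the stated bound. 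The rest of your argument, in particular the observation that the tuple projection is determined coordinate-wise so a naive product suffices, is fine.
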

\medskip

Let $X\subseteq V(G)$ and let $x,y\in X$. Let $P=u_1,\ldots, u_\ell$ be an (almost) induced $x$-$y$-path with $|V(P)| = \ell \leq d$ and let $u_i \in V(P)$. Then, the \emph{$X$-path-projection profile} of $(P,u_i)$ is the tuple $(i, N^-(u_1)\cap X, N^+(u_1)\cap X,\ldots, N^-(u_\ell)\cap X, N^+(u_\ell)\cap X)$. 
The \emph{$X$-path-projection profile} of a vertex $u$ is the set of all $X$-path-projection profiles $(P,u)$, where~$P$ is any almost induced $x$-$y$-path on at most~$d$ vertices and $x,y\in X$ are any two vertices in~$X$.  
Two vertices $u,v$ are \emph{equivalent over $X$} if they have the same $X$-path-projection profiles.

\medskip
\begin{lemma}\label{lem:num-profiles}
Let $\Cc$ be a nowhere dense class of graphs and let $t > 0$ be some fixed positive integer such that $K_{t,t}\not \subseteq G$, for all $G\in \Cc$. Let $G\in \Cc$ and \mbox{$X\subseteq V(G)$}. 
Then, for every $\epsilon>0$, there exists a constant $\chi_{d,t,\epsilon}$ such that the number of $X$-path-projection profiles for $u\in V(G)\setminus X$ is bounded by $\chi_{d,t,\epsilon}\cdot |X|^{d+\epsilon}$. 
\end{lemma}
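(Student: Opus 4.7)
The plan is to reduce the count to \cref{lem:number-of-types} (the $\ell$-tuple projection bound) via a direction-doubling auxiliary graph. First I would observe that each element of an $X$-path-projection profile of $u$ has the form $(i,(N^-(u_j)\cap X, N^+(u_j)\cap X)_{j=1}^\ell)$ with $1\leq i\leq\ell\leq d$, so it decomposes into positional data (at most $d^2$ choices) together with an ordered $\ell$-tuple of \emph{directed} projections of path-vertices onto $X$.

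To handle the direction labels uniformly I would construct the auxiliary undirected bipartite graph $H$ with parts $V(G)\setminus X$ and $X^{\bullet}:=X\times\{-,+\}$, inserting an edge from $v$ to $(w,-)$ iff $wv\in E(G)$ and to $(w,+)$ iff $vw\in E(G)$. Under this encoding, the directed $\ell$-tuple projection onto $X$ in $G$ is exactly the undirected $\ell$-tuple projection onto $X^{\bullet}$ in $H$. A straightforward counting argument shows that a $K_{s,s}$ in $H$ yields a $K_{s,\lceil s/2\rceil}$ in the underlying undirected graph of $G$, so $K_{t,t}$-freeness of that underlying graph implies $K_{2t,2t}$-freeness of $H$; hence $H$ inherits nowhere-dense-type bounds from $\Cc$ with constants depending only on $t$.

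Then I would apply \cref{lem:number-of-types} to $H$, for each $\ell\leq d$, with the ``landing set'' $X^{\bullet}$ of size $2|X|$. This bounds the number of realized undirected $\ell$-tuple projections by $\tau_{\ell,\epsilon/2}\cdot(2|X|)^{\ell+\epsilon/2}$. Taking $\ell=d$ (which dominates) and multiplying by the $d^2$ positional possibilities yields an upper bound of the form $\chi_{d,t,\epsilon}\cdot|X|^{d+\epsilon}$ on the number of distinct individual tuples that can appear as an element of any profile.

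The main obstacle is that an $X$-path-projection profile is a \emph{set} of tuples, for which a naive bound on the number of such sets would be exponential. The way I would overcome this is to observe that the set of tuples realized by $u$ is actually determined by $u$'s own directed $d$-tuple projection data onto $X$: once $u$'s in/out neighbourhood in $X$ and the realized projections of $u$'s neighbours on almost induced paths to $X$ are fixed, the entire profile is prescribed by the (fixed) structure of $G$ and $X$. Consequently, the number of distinct profiles is bounded by the number of realized individual tuples computed above, giving the claimed bound $\chi_{d,t,\epsilon}\cdot|X|^{d+\epsilon}$.
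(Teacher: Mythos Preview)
Your direction-doubling reduction is a different route from the paper, but it has a real gap: \cref{lem:number-of-types} is stated only for graphs in a nowhere dense class, and establishing that $H$ is $K_{2t,2t}$-free does not place $H$ in such a class. Biclique-freeness is far weaker than nowhere density, so you cannot invoke the lemma on $H$ from that alone. (One could try to argue that the class $\{H:G\in\Cc\}$ is itself nowhere dense---$H$ is essentially a subgraph of the underlying undirected graph of $G$ with each vertex of $X$ duplicated once---but this requires a separate argument you have not supplied.)

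The paper avoids any auxiliary graph. It applies \cref{lem:number-of-types} directly in $G$ to bound the number of realised \emph{undirected} $\ell$-tuple projections onto $X$ by $\tau_{\ell,\epsilon}|X|^{\ell+\epsilon}$, and then uses $K_{t,t}$-freeness only to bound, coordinate by coordinate, how many directed splittings $(N^-(v)\cap X,\,N^+(v)\cap X)$ a fixed undirected projection $N(v)\cap X$ can realise: if $|N(v)\cap X|<t$ there are at most $2^{t-1}$ possible splittings outright, while if $|N(v)\cap X|\ge t$ then at most $t-1$ vertices can share this undirected projection (else $K_{t,t}\subseteq G$), so at most $t-1$ splittings are realised. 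Either way each of the $\ell\le d$ coordinates contributes a factor at most $2^{t-1}$, giving the extra $2^{dt}$ and the final bound $d^2\cdot 2^{dt}\cdot\tau_{d,\epsilon}\cdot|X|^{d+\epsilon}$.

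On your last paragraph: the paper's proof counts only the individual $(P,u)$-tuples and stops there; it does not pass to sets. Your attempted passage from tuples to vertex profiles is not well-defined as written (what is ``$u$'s own directed $d$-tuple projection data'' for a single vertex $u$?), and the determination you assert does not obviously hold---two vertices with identical $(N^-\cap X,N^+\cap X)$ can still lie on almost-induced paths whose \emph{other} vertices project differently onto $X$. So that step does not go through; to match the paper's argument you can simply omit it.
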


\begin{proof}
We have $d$ choices for the number $i$. 
By \cref{lem:number-of-types} we have at most $\tau_{\ell,\epsilon}|X|^{\ell+\epsilon}$ different undirected projections of $\ell$-tuples. 
If the undirected projection of a single vertex $v$ within an $\ell$-tuple has size smaller than $t$ \mbox{($|N[v]\cap X| \leq t - 1$)}, then even though we can have many vertices with the same undirected projection, there are at most $2^{t-1}$ possible ways of orienting this undirected projection to obtain a directed projection; orienting an undirected projection $N[v]\cap X$ yields a directed projection $(N^-[v]\cap X, N^+[v]\cap X$). 
Otherwise, when $|N[v]\cap X| \geq t$, we can have at most $t-1$ other vertices with the same undirected  projection; this follows from the fact that $G$ does not contain $K_{t,t}$ as a subgraph. 
Consequently, there are at most $t-1$ possible directed projections with the undirected projection $N[v]\cap X$ whenever $|N[v]\cap X| \geq t$. 
Putting it all together, we know that any undirected projection (of any size) can be oriented in at most $2^{t - 1}$ different ways.  
Summing over all possible choices of $\ell\leq d$, we get at most $d^2\cdot 2^{dt} \cdot \tau_{d,\epsilon} \cdot |X|^{d+\epsilon}$ $X$-path-projection profiles. To conclude the proof, we define $\chi_{d,t,\epsilon}$ as $d^2\cdot 2^{dt} \cdot \tau_{d,\epsilon}$.
\end{proof}


We now state a new reduction rule, which depends on a constant $c$ that we fix later. 
We remark that $c$ can be computed from a given graph and $d$, however, it depends on other constants defined in the next lemma, so that its definition at this point would not be comprehensible. 

\medskip
\begin{drule}\label{rule:BE}
Assume we can find in polynomial time sets $B,X\subseteq V(G)$ such that \\[-3mm]
    \begin{enumerate}
        \item the $d$-neighborhoods in $G-X$ of distinct vertices from $B$ are disjoint, \\[-3mm]
        \item every induced cycle using a vertex of $N^d_G[B]$ also uses a vertex of $X$, \\[-3mm]
        \item vertices in $B$ are pairwise equivalent over $X$, i.e., they have the same $X$-path-projection profile (in particular, if one vertex of $B$ lies on an $x$-$y$-path for $x,y\in X$ of length $\ell \leq d$, then all vertices of $B$ do as well), and\\[-3mm]
        \item $|B|>c+d+1$ and $|X| \leq c$. \\[-3mm]
    \end{enumerate}
    Then, choose an arbitrary vertex of $B$ and delete it from $G$.
\end{drule}
\medskip

\begin{lemma}
    \cref{rule:BE} is safe. 
\end{lemma}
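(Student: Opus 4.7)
The plan is to establish both directions of the instance equivalence between $G$ and $G-b$, where $b\in B$ is the vertex deleted by the rule. The forward direction is routine: any dfvs of $G$ restricted to $V(G)\setminus\{b\}$ remains a dfvs of $G-b$ of no larger size, since every cycle of $G-b$ is a cycle of $G$. For the backward direction, I take a dfvs $S'$ of $G-b$ with $|S'|\leq k$ and claim that $S'$ is already a dfvs of $G$. Suppose for contradiction that $C$ is an induced cycle of $G$ with $V(C)\cap S'=\emptyset$. Then $C$ must pass through $b$ (else $C$ is a cycle of $G-b$ not hit by $S'$), and by condition~2 (applied with $b\in B\subseteq N^d_G[B]$) it also meets $X$ in some vertex $x\notin S'$.

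Let $y,z\in X\cap V(C)$ be the $X$-vertices of $C$ flanking $b$ on either side, so $C$ decomposes as an almost induced $y$-$z$-path $P$ of some length $\ell\leq d-1$ through $b$ at position $i$ (whose internal vertices avoid $X$ by the choice of $y,z$) concatenated with the remaining $z$-$y$-arc $Q$. (When the only $X$-vertex of $C$ is $x$, we set $y=z=x$ and adapt the argument by splitting $C$ at $b$; I treat only the main case.) For each $b'\in B$, condition~3 provides an almost induced $y$-$z$-path $P_{b'}$ of length $\ell$ with $b'$ at position $i$, whose $X$-adjacencies agree with those of $P$. Now the plan is to assemble, for every $b'\in B\setminus\{b\}$, a new cycle $C_{b'}\coloneqq P_{b'}\cdot Q$ through $b'$ instead of $b$.

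By condition~1, the $(d+1)$-neighborhoods in $G-X$ of distinct elements of $B$ are pairwise disjoint, so the internal vertices of the various $P_{b'}$'s are pairwise disjoint and also disjoint from $V(Q)\setminus\{y,z\}$ (these vertices of $C$ lie in $N^{d+1}_{G-X}[b]$ or in $X$, both excluded from the relevant neighborhood of $b'$). Hence each $C_{b'}$ is a well-defined cycle in $G-b$. I claim $C_{b'}$ is induced: chords within $V(P_{b'})$ or within $V(Q)$ are excluded because $P_{b'}$ is almost induced and $Q$ is a subpath of the induced cycle $C$; chords between $V(P_{b'})\setminus\{y,z\}$ and $V(Q)\setminus\{y,z\}$ that avoid $X$ are excluded by condition~1; and any chord via an $X$-vertex would, by the profile equivalence of condition~3, lift back to a chord on the original cycle $C$, which is impossible.

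Since $S'$ hits every cycle of $G-b$, it must hit each $C_{b'}$ for $b'\in B\setminus\{b\}$. Because $V(Q)\cap S'\subseteq V(C)\cap S'=\emptyset$, each such hit lies in $V(P_{b'})\setminus\{y,z\}$, i.e., either $b'\in S'$ or some internal vertex of $P_{b'}$ is in $S'$. By condition~1 these hitting vertices are pairwise distinct across different $b'$, so $|S'|\geq|B|-1>c+d\geq k+d$, provided the constant $c$ (fixed later in the paper) is chosen at least $k$, contradicting $|S'|\leq k$. The main technical obstacle I foresee is the careful verification that each $C_{b'}$ is genuinely an induced cycle and that its internal vertices are disjoint from $V(Q)$: both rely on a joint use of conditions~1 and~3 to push any putative chord or overlap through $X$ back to a corresponding chord on the original cycle $C$, which is ruled out because $C$ is induced.
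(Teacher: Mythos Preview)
Your final step has a genuine gap. You derive $|S'|\ge |B|-1>c+d$ and conclude a contradiction ``provided the constant $c$ \ldots\ is chosen at least $k$''. But the paper does \emph{not} set $c\ge k$: it later takes $c=\kappa_{2d,\delta}\cdot(dk)^\delta+s$, which for small $\delta$ is of order $k^\delta\ll k$; this choice is precisely what yields a kernel with $\Oof_{d,\epsilon}(k^{1+\epsilon})$ vertices. Your counting argument against $|S'|\le k$ therefore does not go through. The paper avoids this by never trying to contradict $|S'|\le k$. Instead it performs a \emph{replacement}: with $Y\subseteq S'$ the (at least $c+1$) vertices hitting the paths $P_v$, it shows that $(S'\setminus Y)\cup X$ is a dfvs of $G$ of size at most $|S'|$. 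The crucial ingredient you are missing is a second use of condition~2: every vertex of $Y$ lies in $N^d_G[B]$, so any cycle through a vertex of $Y$ also passes through $X$; hence removing $Y$ from $S'$ is fully compensated by adding $X$, regardless of how $c$ compares to $k$.

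There is also an earlier gap. You claim that all $|B|-1$ cycles $C_{b'}$ are induced, arguing that the non-$X$ vertices of $Q$ lie in $N^{d+1}_{G-X}[b]$ and hence cannot neighbour the internal vertices of $P_{b'}$ by condition~1. But vertices of $Q\setminus X$ need not lie in $N^{d+1}_{G-X}[b]$: along $C$ they are separated from $b$ by the $X$-vertices $y,z$, and there is no reason another short $X$-avoiding path exists. The paper handles this differently: since $|V(C_u)|\le d$, each vertex of $V(C_u)\setminus X$ can lie in the $(d{+}1)$-neighbourhood in $G-X$ of at most one $b'\in B$, so at most $d$ of the $b'$ can have $P_{b'}$ adjacent to such a vertex. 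Discarding these still leaves at least $|B|-1-d\ge c+1$ induced cycles $C_v$ in $G-b$, which is exactly why condition~4 requires $|B|>c+d+1$ rather than merely $|B|>c+1$.
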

\begin{proof}
    Denote by $G'$ the graph obtained after one application of the rule with sets $B,X \subseteq V(G)$, where $|B|>c+d+1$ and $|X| \leq c$. Let $u \in B$ be the deleted vertex. 
    Since the rule only removes a vertex it is clear that every dfvs $S$ of $G$ is also a dfvs of~$G'$. It remains to show that for every dfvs $S'$ of $G'$ there exists a dfvs of $G$ that is not larger than $S'$. 

    Let $C_u$ be an induced cycle in $G$ going through $u$ such that $|V(C_u)| \leq d$. All other induced cycles (that do not contain $u$) are also induced cycles in $G'$ and are hence hit by~$S'$.  
    By assumption, every induced cycle including a vertex of $N^d_G[B]$, in particular~$u$, also includes at least one vertex of $X$. Pick $x,y\in X$ (possibly $x=y$) such that $C_u$ includes $x,u,y$ in that order and such that no other vertices of $X$ appear in between~$x$ and~$y$ (in $C_u$).  

    Fix the $x$-$u$-path $P_{xu}$ and the $u$-$y$-path $P_{uy}$ that are subpaths of $C_u$. Let $P_u=P_{xu}P_{uy}$. 
    Since $C_u$ is an induced cycle, if $C_u$ contains  vertices of $X \setminus \{x,y\}$, then these other vertices do not appear in the $X$-path-projection profile of $(P_u, u)$. 

    Since all $v\in B$ have the same $X$-path-projection profile, for each $v \neq u$ there are paths~$P_{xv}$ from $x$ to $v$ and $P_{vy}$ from $v$ to $y$ such that $P_v=P_{xv}P_{vy}$ and $(P_v, v)$ has the same $X$-path-projection profile as $(P_u, u)$. 
    Since the $d$-neighborhoods in $G-X$ of all vertices of $B$ are pairwise disjoint, all of these paths are pairwise vertex disjoint except for $x$ and~$y$, and at most $d$ vertices $v\in B\setminus\{u\}$ can have other vertices of $V(C_u) \setminus X$ that are in the $1$-neighborhood of $P_v$. 
    Since all these pairs $(P_v, v)$, $v \neq u$, have the same $X$-path-projection profile as $(P_u, u)$, just like $(P_u, u)$, vertices in $(V(C_u) \cap X) \setminus \{x,y\}$ do not appear in the $X$-path-projection profile of $(P_v, v)$.
    Because there are at least $c+d+1$ vertices in $B\setminus\{u\}$, we get $c+1$ induced $x$-$y$-paths that are pairwise vertex disjoint except for $x$ and~$y$ in $G'$ and that are not adjacent to vertices of $V(C_u) \setminus \{x,y\}$. Hence, for every $v\in B$ we get an induced cycle $C_v$ on at most $d$ vertices by replacing~$P_u$ by~$P_v$.
    Denote the set of these vertices $v$ by $B'$. 

    Assuming that $S'$ is not a dfvs in $G$, the vertices in $V(C_u) \setminus V(P_u)$ are not hit by~$S'$. 
    Then, all the (at least $c+1$) cycles $C_v$ for $v\in B'$ are hit on the paths $P_v$, i.e., $S' \cap V(C_v) \subseteq V(P_v)$. 
    All vertices of $S'$ that hit a cycle $C_v$ on $P_v$, call those vertices~$Z_v$, lie in the $d$-neighborhood of~$v$ in $G-X$. 
    Hence, by assumption, they do not hit any cycles that do not also go through~$X$. 
    Let $Z=\bigcup_{v\in B'} Z_v$. 
    Then, $(S'\setminus Z)\cup X$ is a dfvs of $G$ of size at most $|S'|$; as $|Z| \geq c + 1$ and $|X| \leq c$.
\end{proof}

\begin{lemma}
    There exists a function $g_\Cc$ such that 
    given a graph $G$, $X\subseteq V(G)$ and $u\in V(G)$, we can test in time $g_\Cc(d,\epsilon)\cdot n^{\Oof(1)}$ whether every induced cycle (on at most $d$~vertices) including a vertex of~$N^d_G[u]$ also includes a vertex of $X$. Furthermore, we can decide in the same time if two vertices $u,v$ have the same $X$-path-projection profile.
\end{lemma}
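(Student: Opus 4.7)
The plan is to reduce both checks to first-order model-checking on the structure $(G, U_X)$ obtained by endowing $G$ with a unary predicate $U_X$ marking $X$. Since the Gaifman graph of this structure is the underlying undirected graph of $G$, which lies in a nowhere dense class, and unary predicates preserve nowhere denseness, the FO model-checking algorithm of Grohe, Kreutzer, and Siebertz~\cite{grohe2017deciding} (see also~\cite{DreierMS23}) evaluates any FO sentence $\phi$ on $(G, U_X)$ in time $f(|\phi|, \epsilon) \cdot n^{1+\epsilon}$. It therefore suffices to exhibit, for each of the two conditions, an FO formula whose size depends only on~$d$.

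For condition~1, I would write
$$\phi_1(u) \equiv \bigwedge_{\ell=2}^{d}\forall v_1, \ldots, v_\ell \Bigl( \mathrm{indcyc}_\ell(v_1, \ldots, v_\ell) \wedge \bigvee_{i \leq \ell} \mathrm{dist}_{\leq d}(u, v_i) \rightarrow \bigvee_{i \leq \ell} U_X(v_i) \Bigr),$$
where $\mathrm{indcyc}_\ell$ encodes ``$v_1, \ldots, v_\ell$ form a directed induced cycle'' by asserting the $\ell$ cycle edges, the absence of all chords, and pairwise distinctness, and $\mathrm{dist}_{\leq d}$ is the standard FO formula expressing distance at most $d$ in the underlying undirected graph. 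Clearly $|\phi_1|$ depends only on $d$.

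Condition~2 is more delicate: enumerating all admissible profile tuples $(i, S^-_1, S^+_1, \ldots, S^-_\ell, S^+_\ell)$ with $S^\pm_j \subseteq X$ would make the formula grow with $|X|$. I would avoid this by quantifying paths and comparing projections pointwise. Let
$$\mathrm{sameProj}(a, b) \equiv \forall z \Bigl( U_X(z) \rightarrow \bigl( (E(z, a) \leftrightarrow E(z, b)) \wedge (E(a, z) \leftrightarrow E(b, z)) \bigr) \Bigr)$$
express that $a$ and $b$ have equal in- and out-projections onto $X$, and let $\mathrm{aip}_{\ell, i}(\bar w, u)$ be the FO predicate ``$w_1 \ldots w_\ell$ is an almost induced path with $w_1, w_\ell \in X$ and $w_i = u$'' (expressible with $\Oof(\ell^2)$ edge and non-edge constraints). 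Then
$$\phi_2(u, v) \equiv \bigwedge_{\ell \leq d,\, i \leq \ell} \forall \bar w \Bigl( \mathrm{aip}_{\ell, i}(\bar w, u) \rightarrow \exists \bar w'\, \mathrm{aip}_{\ell, i}(\bar w', v) \wedge \bigwedge_{j \leq \ell} \mathrm{sameProj}(w_j, w'_j) \Bigr),$$
conjoined with its symmetric counterpart (obtained by swapping $u$ and $v$), decides profile equivalence, and $|\phi_2|$ again depends only on $d$.

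Evaluating $\phi_1(u)$ and $\phi_2(u, v)$ via the nowhere dense model-checking algorithm gives the claimed running time $\Oof_{d, \epsilon}(n^{1+\epsilon}) \subseteq \Oof_{d, \epsilon}(n^{\Oof(1)})$. The main obstacle is precisely the formula-size control in the second part: a naive enumeration of profile tuples would introduce a $|X|$-dependence that cannot be absorbed into $f(|\phi|, \epsilon)$; the pointwise $\mathrm{sameProj}$ comparison neatly sidesteps this, keeping $|\phi_2|$ a function of $d$ alone.
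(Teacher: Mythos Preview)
Your proposal is correct and follows exactly the approach of the paper: mark $X$ with a unary predicate and invoke the Grohe--Kreutzer--Siebertz first-order model-checking algorithm on nowhere dense classes, observing that both properties are expressible by formulas whose size depends only on $d$. The paper's proof is a two-line sketch that does not spell out the formulas; your elaboration---in particular the $\mathrm{sameProj}$ trick to keep $|\phi_2|$ independent of $|X|$---correctly fills in the detail the paper leaves implicit.
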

\begin{proof}
    We apply the efficient first-order model checking algorithm of \cite{grohe2017deciding}, which yields the desired running time whenever we test a formula of length depending only on $d$. To do so, we first mark the set $X$ using a unary predicate to make it accessible to first-order logic. 
    Both properties are easily expressible by a first-order formula whose length depends only on $d$. 
    This is cumbersome, but let us demonstrate how to write a formula that expresses that all induced cycles of length at most $d$ containing a vertex of $N_G^d[u]$ also contain a vertex of $X$. 
    First, we write a formula expressing that the distance between two vertices is at most $d$. 
    \[dist_{\leq d}(x,y)=\exists x_1\ldots \exists x_{d+1}\big(x=x_1\wedge y=x_{d+1} \wedge \bigwedge_{1\leq i\leq d}(E(x_i,x_{i+1})\vee (x_i=x_{i+1}))\big)\]
    
    The formula quantifies the at most $d+1$ vertices on a path of length at most $d$. It ensures that the first vertex is $x$ and the last vertex is $y$. 
    It then states that between consecutive vertices there must be an edge, or the vertices are equal (accounting for the possibility that the path is shorter). 

    Similarly, we can speak about the induced cycles of length at most $d$. We additionally state that if a vertex of the cycle is at distance at most $d$ to the vertex $u$, then it shall include a vertex of $X$. 

    \begin{align*}
        \phi(u,X) = & \exists x_1\ldots \exists x_d \big(x_1\neq x_d \wedge \bigwedge_{1\leq i\leq d}(E(x_i,x_{i+1})\vee (x_i=x_{i+1})) \wedge E(x_d,x_1) \wedge \\
        & \bigwedge_{1\leq i\leq j+2\leq d} \neg E(x_i, x_{j+2})\wedge \bigwedge_{d> j>i\geq 1} \neg E(x_j,x_i)\wedge \\
        & \big (\bigvee_{1\leq i\leq d} dist_{\leq d}(x_i,u)\big) \rightarrow \big( \bigvee_{1\leq i\leq d}X(x_i)\big) \big)
    \end{align*}

The formula quantifies the vertices of the cycle, where it allows all vertices to be equal, except for $x_1$ and $x_d$. 
By this, we enforce that the cycle has length at least $2$ (unlike for the distance formula, where we allow paths of length $0$). 
We then state that the cycle must be induced by stating that there are no edges from a vertex to another vertex of the cycle that is at least two steps away from the first vertex and that there are no back-edges on the cycle. 
Finally, we state that if some vertex of the cycle is at distance at most $d$ to $u$, then some vertex must also be in $X$. 
    
The second property can be stated similarly. We explicitly quantify the vertices of an (almost) induced path containing $u$ and an (almost) induced path containing $v$ such that~$u$ and $v$ are at the same position in the path and that all vertices in position~$j$ of the two paths have the same projections on $X$. 
To state the latter property for two vertices~$x$ and $y$ consider the formula $\forall z \big(X(z)\rightarrow (E(x,z)\leftrightarrow E(y,z)\wedge E(z,x)\leftrightarrow E(z,y))\big)$. 
\end{proof}

Nowhere dense classes of graphs are \emph{uniformly quasi-wide}~\cite{NesetrilM11a}. 
A class of graphs is uniformly quasi-wide if for every $q$ there exists a constant $s$ and a function $N:\N\rightarrow \N$ such that the following holds. 
For every $m\in \N$, if $G\in \Cc$ and $A\subseteq V(G)$ has size at least $N(m)$, then there exists a set $Y\subseteq V(G)$ of size at most $s$ and a set $B\subseteq A\setminus Y$ of size at least $m$ such that all distinct vertices of $B$ have disjoint $q$-neighborhoods in~$G-Y$. 
The best bounds for the function $N$ are given in \cite{KreutzerRS19, PilipczukST18a}. It is important that the function is polynomial, that is, $N(m)=m^t$ for some constant $t$ and that given $G$ and $A\subseteq V(G)$, the sets $S$ and $B$ can be efficiently computed. That is, they can be computed in time $u_\Cc(q)\cdot n^{\Oof(1)}$ for some function $u_\Cc$. 

\medskip
\begin{theorem}
There is a function $h_\Cc$ such that for any $\epsilon>0$
\cref{rule:BE} can be applied in time $h_\Cc(d,\epsilon)\cdot n^{\Oof(1)}$ until the reduced graph has at most $h_\Cc(d,\epsilon)\cdot k^{1+\epsilon}$ vertices. 
\end{theorem}

\begin{proof}
Let $N(m)$ and $s$ be the function and constant witnessing that $\Cc$ is uniformly quasi-wide for parameter $q = d$. Assume $N(m) = m^t$. 
Recall that~$S$ is an approximate solution of size at most $dk$. Let $\delta>0$ be a constant that we determine later. 
We build the projection closure $\Sclos\supseteq S$ for parameters $d$ and~$\delta$, which by \cref{lem:closure-nd} is of size at most $\kappa_{d,\delta}\cdot (dk)^{1+\delta}$ and such that the $2d$-projection of each $v\in V(G)\setminus \Sclos$ has size at most $\kappa_{d,\delta}\cdot (dk)^\delta$. 
Let $\chi_{2d,t,\delta}$ be the constant from \cref{lem:num-profiles}. 
Define $c$ for the application of \cref{rule:BE} as $\kappa_{d,\delta}\cdot (dk)^\delta+s$. 
    
    Assume $|V(G)|>\kappa_{d,\delta}(dk)^{1+\delta}+\kappa_{d,\delta}\cdot N(\chi_{d,t,\delta}\cdot c^{d+\delta}$ \mbox{$\cdot(c+d+2))(dk)^{1+\delta}$}. We show that we can efficiently apply \cref{rule:BE}.
    
    First, there are more than $\kappa_{d,\delta}\cdot N(\chi_{d,t,\delta}\cdot(c+d+2))(dk)^{1+\delta}$ vertices in $V(G)\setminus \Sclos$. 
    Moreover, every induced cycle using a vertex $u\in V(G)\setminus \Sclos$ uses a vertex of $\Pi_d(u)\subseteq \Sclos$. 
    This is true because $G - S$ is acyclic, induced cycles have length at most~$d$, and all paths of length at most $d$ from $u$ to~$S$ must use a vertex of $\Pi_{d}(u)$. 

    Since by \cref{lem:closure-nd} there are at most $\kappa_{d,\delta} \cdot (dk)^{1+\delta}$ different projection classes to~$\Sclos$, there is at least one class~$A$ with at least $N(\chi_{d,t,\delta}\cdot c^{d+\delta}\cdot(c+d+2))$ vertices. 
    We denote the set of vertices in the projection of the vertices from this projection class by $\Pi \subseteq \Sclos$. 
    We compute a set $Y$ of size at most $s$ and a set $B'\subseteq A\setminus Y$ containing at least  $\chi_{d,t,\delta}\cdot c^{d+\delta}\cdot(c+d+2)$ vertices that have pairwise disjoint $2d$-neighborhoods in $G-Y$. This is possible as $G$ comes from a uniform quasi-wide class with the function~$N$ and constant $s$ as fixed above. 
    Let $X=\Pi\cup Y$. Note that $X$ has size at most $\kappa_{d,\delta}\cdot (dk)^\delta+s=c$.
    
    By \cref{lem:num-profiles} there are at most $\chi_{d,t,\delta}\cdot c^{d+\delta}$ many different $X$-path-projection profiles, hence, we find a set $B\subseteq B'$ of size greater than $c+d+1$ of vertices that all have the same $X$-path-projection profile. Hence, all assumptions to apply \cref{rule:BE} are satisfied and we can still carry out the rule to decrease the size of~$V(G)$. 
    As all lemmas can be applied efficiently,  \cref{rule:BE} with the given sets $B$ and~$X$ can also be applied efficiently. 

    It remains to define the constant $\delta$. We need $N(\chi_{d,t,\delta}\cdot c^{d+\delta}\cdot(c+d+2))(dk)^{1+\delta}\leq  h_\Cc({d,\epsilon})\cdot k^{1+\epsilon}$ for some function $h_\Cc$. 
    By assumption we have $N(m)=m^t$, hence, 
    \begin{align*}    
     N(\chi_{d,t,\delta}& \cdot c^{d+\delta}\cdot(c+d+2))(dk)^{1+\delta}  \leq (\chi_{d,t,\delta}\cdot c^{d+\delta}\cdot(c+d+2))^t(dk)^{1+\delta}\\
    & \leq (\chi_{d,t,\delta}\cdot (\kappa_{d,\delta}\cdot (dk)^\delta+s)^{d+\delta}\cdot((\kappa_{d,\delta}\cdot (dk)^\delta+s)+d+2))^t(dk)^{1+\delta}\\
    & \leq \chi_{d,t,\delta}\cdot \kappa_{d,\delta}^{d+t+\delta}\cdot (2sd)^{d+t+\delta} \cdot (dk)^{d\delta+\delta^2+t\delta+1+\delta}
    \end{align*}
    It hence suffice to define $\delta$ such that $d\delta+\delta^2+t\delta+1+\delta\leq 1+\epsilon$, which is the case whenever $(d+t+1)\delta+\delta^2\leq \epsilon$. 
    We can choose $\delta=\epsilon/(d+t+2)$.
\end{proof}

\section{DFVS in planar graphs without long cycles}

One may wonder whether the stronger assumption that a graph does not contain long cycles, induced or non-induced, leads to even more efficient algorithms. 
We show that this is indeed the case when considering planar graphs. We show that strongly connected planar graphs without cycles of length $d$ have treewidth $\Oof(d)$. 
By \cref{lem:strong-components} weak components are equal to strong components. Hence, if each strong component has bounded treewidth the whole graph has bounded treewidth. 
We can then use the algorithm of Bonamy et al.~\cite{bonamy2018directed} to solve the instance in time $2^{\Oof(d)}\cdot n^{\Oof(1)}$. 

\medskip
\begin{lemma}\label{lem:stiching}
    Let $G$ be a strongly connected graph and let $u,v\in V(G)$. Let $P$ be an (undirected) path between $u$ and $v$ in the underlying undirected graph. 
    If $G$ does not have cycles of length greater than $d$, then it contains a (directed) $u$-$v$-path $Q$ such that every vertex of $Q$ is at (directed) distance at most $d$ from some vertex of $P$.
\end{lemma}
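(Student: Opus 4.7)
The plan is to traverse the undirected $u$-$v$-path $P = w_0 w_1 \ldots w_k$ (with $w_0 = u$ and $w_k = v$) and, whenever an edge of $P$ is oriented the ``wrong way'' in $G$, replace it by a short directed detour. This produces a directed $u$-$v$-walk that never strays far from $P$, and then $Q$ is extracted from the walk as a path.

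Concretely, for each $i < k$ I would proceed as follows. If $w_i w_{i+1} \in E(G)$, simply use this arc. Otherwise only the reverse arc $w_{i+1} w_i$ lies in $E(G)$, and I invoke the strong connectivity of $G$ to obtain a directed $w_i$-$w_{i+1}$-path $D_i$. The concatenation of $D_i$ with the arc $w_{i+1} w_i$ is a directed cycle in $G$, so the hypothesis that $G$ has no cycle of length greater than $d$ forces $|D_i| + 1 \leq d$. Every vertex on $D_i$ therefore lies within directed distance $d - 1$ of $w_i$ along $D_i$, hence within distance $d$ of a vertex of $P$.

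Concatenating the chosen arcs and detours yields a directed $u$-$v$-walk $W$; every vertex of $W$ is either a vertex of $P$ itself or lies on some detour $D_i$ and is thus within distance $d$ of $w_i \in V(P)$. A standard shortcutting argument extracts a directed $u$-$v$-path $Q$ with $V(Q) \subseteq V(W)$, which then inherits the distance condition from $W$.

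The only substantive things to check are that the reverse arc $w_{i+1} w_i$ is genuinely available when $w_i w_{i+1} \notin E(G)$ (which holds because $\{w_i, w_{i+1}\}$ is an edge of the underlying undirected graph, so at least one of the two orientations is in $E(G)$), and that $D_i$ together with $w_{i+1} w_i$ really is a directed cycle (which follows because $D_i$ is a path with endpoints $w_i, w_{i+1}$, so no vertex is repeated). Once these routine points are verified, the argument goes through without further obstacle.
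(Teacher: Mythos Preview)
Your argument is correct and is essentially the same as the paper's: for each consecutive pair $w_i,w_{i+1}$ on $P$ you produce a short directed cycle through both (using the reverse arc plus a directed path furnished by strong connectivity), then stitch the resulting subpaths into a $u$-$v$-walk and extract a path. The paper phrases this slightly more abstractly (``fix a cycle $C_i$ of length at most $d$ containing both $v_i$ and $v_{i+1}$'') but the content is identical, and your version is in fact a bit more explicit about why the detours have length at most $d-1$.
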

\begin{proof}
    Assume $P=v_1\ldots v_\ell$. For each $v_iv_{i+1}$ fix a cycle $C_i$ of length at most $d$ containing both $v_i$ and $v_{i+1}$. Such a cycle must exist as $G$ is strongly connected and does not contain cycles longer than $d$. 
    We can now appropriately stitch subpaths from these cycles to find a $u$-$v$-path $Q$ in $G$. It is immediate that every vertex of $Q$ is at distance at most $d$ from some vertex of $P$. 
\end{proof}

\begin{theorem}
Let $G$ be a strongly connected planar graph without cycles of length greater than $d$. Then, $G$ has treewidth at most $30d$. 
\end{theorem}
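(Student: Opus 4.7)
The plan is to derive the treewidth bound via the planar excluded-grid theorem, using Lemma~\ref{lem:stiching} and strong connectivity to produce a long directed cycle whenever the underlying undirected graph contains a large grid minor.

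By the planar grid-minor theorem of Robertson, Seymour, and Thomas, every planar graph of treewidth at least $6k-5$ contains a $k\times k$ grid as a minor. I will argue the contrapositive: assuming $\mathrm{tw}(G) > 30d$, the underlying undirected graph contains a $(5d+1)\times(5d+1)$ grid minor, which in the planar embedding yields a family of $5d$ nested (concentric) undirected cycles $C_1,\ldots,C_{5d}$. Pick $u\in V(C_1)$ and $v\in V(C_{5d})$; by the Jordan curve theorem, any undirected $u$-$v$-path must cross each intermediate $C_i$, so the undirected distance satisfies $d_{\mathrm{und}}(u,v)\geq 5d$.

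Since $G$ is strongly connected, there exist directed paths $P^+$ from $u$ to $v$ and $P^-$ from $v$ to $u$ in $G$; Lemma~\ref{lem:stiching} guarantees we can take these to stay close to the undirected paths witnessing the distance $d_{\mathrm{und}}(u,v)$. Because any directed path is also an undirected path in the underlying graph, $|P^+|,|P^-|\geq 5d$, so the concatenated closed directed walk $P^+P^-$ has length at least $10d$. The crucial step is to extract from this walk a simple directed cycle of length greater than $d$, contradicting the hypothesis that all cycles of $G$ have length at most $d$. Here planarity interacts with the nested-cycle structure: the topological linking of $u$ and $v$ through the concentric cycles $C_1,\ldots,C_{5d}$ forces any decomposition of $P^+P^-$ into simple directed cycles to contain at least one cycle that spans many concentric levels, hence has length exceeding $d$.

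The main obstacle lies precisely in this last extraction step. A closed walk of length at least $10d$ can a priori decompose into many short simple cycles, so the argument must exploit both the planar embedding and the nested grid structure to rule out a purely ``local'' decomposition. The plan is to track the winding of $P^+P^-$ with respect to each concentric cycle $C_i$: since $u$ lies outside $C_i$ (for $i\geq 2$) and $v$ lies inside, the walk must cross $C_i$ at least twice, and the planar embedding together with Lemma~\ref{lem:stiching} then ensures that at least one simple cycle in the decomposition inherits a crossing with $\Omega(d)$ of the $C_i$, forcing its length to exceed $d$ and completing the contradiction.
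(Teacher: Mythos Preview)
Your extraction step is where the argument breaks, and the fix you sketch does not work. A closed walk that goes radially from $u$ (on the outermost concentric cycle) to $v$ (on the innermost) and back has winding number zero about every point of the plane, so topological invariants give you nothing when you decompose it into simple cycles. Concretely, suppose $P^+$ and $P^-$ share vertices $u=z_0,z_1,\ldots,z_{5d}=v$ with $z_i$ lying on level $C_i$, appearing in this order along $P^+$ and in the reverse order along $P^-$. Then $P^+P^-$ decomposes into the $5d$ cycles formed by the $P^+$-segment from $z_i$ to $z_{i+1}$ followed by the $P^-$-segment from $z_{i+1}$ back to $z_i$; each of these can have length at most $d$ while still crossing only two consecutive levels. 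Nothing in Lemma~\ref{lem:stiching} rules this out: it only keeps $P^+,P^-$ within a width-$d$ tube around the undirected geodesic, and all of the $z_i$ and the short connecting cycles can live inside that tube. So the claim that some simple cycle in the decomposition must cross $\Omega(d)$ of the $C_i$ is false in general.

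The paper's proof avoids this by building a closed walk that \emph{encircles} a region rather than traversing it radially. It takes the four corners of an inner $3d\times 3d$ subgrid of the $5d\times 5d$ grid minor and uses Lemma~\ref{lem:stiching} on the four boundary paths to get directed paths that stay in the annulus between the outer boundary and a central $d\times d$ subgrid. Gluing these four directed paths yields a closed walk with winding number $\pm 1$ about the central subgrid; hence any decomposition into simple cycles must contain at least one cycle that still encloses the central $d\times d$ subgrid, and such a cycle has length at least $4d$. The key idea you are missing is to go \emph{around}, not \emph{through}.
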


\begin{proof}
As proved in \cite{robertson1994quickly}, every planar graph of treewidth at least $6t$ contains a $t \times t$ grid as a minor. 
Assume towards a contradiction that $G$ has treewidth greater than~$30d$. Then it contains a grid of order $5d$ as a minor. 
We fix four vertices $v_1,v_2,v_3,v_4$ from the four corner branch sets of the inner sub-grid minor of order $3d$ and undirected paths $P_{i,(i \mod 4)+1}$ leading from $v_i$ to $v_{(i\mod 4)+1}$ using only vertices of branch sets of the boundary of the grid minor model. 
By \cref{lem:stiching} and because $G$ is planar we find $v_i$-$v_{(i\mod 4)+1}$-paths that use only vertices inside the regions defined by the boundary of the grid minor of order $5d$ and the boundary of the central sub-grid minor of order~$d$. 
By gluing the paths we find a closed walk which contains a cycle that fully encloses the central grid minor of order $d$. This cycle has length at least $4d$, contradicting the fact that $G$ has no cycles longer than $d$. 
\end{proof}
\section{A kernel with respect to the undirected feedback vertex set number}\label{sec:new-red-rules}

In this section we prove that \cref{rule:shortcut-new}, \cref{rule:loops}, and \cref{rule:disjoint-cycles} lead to a kernel of size $\Oof(f^4)$, where~$f$ is the size of a minimum  feedback vertex set in the underlying undirected graph. In fact, we prove the stronger bound of~$\Oof(kf^3)$. 
Our analysis is based on the analysis of Bergougnoux et al.~\cite{bergougnoux2021towards}. 
Essentially, we prove that all complicated rules of Bergougnoux et al.\xspace are subsumed by \cref{rule:shortcut-new}. 

In the following, we fix an undirected feedback vertex set~$F$ (which does not have to be computed and in particular may be assumed to be minimum). 
We prove that the rules lead to a kernel of size $\Oof(|F|^3k)$. We (almost) follow the terminology of Bergougnoux et al. For the sake of clarity, when it suffices to apply a special case of one of the above rules, we refer to the special case. 
Our analysis follows the same structure as that of Bergougnoux et al.

It will be convenient to partition the edge set of~$G$ into red and blue edges. 
We say that an edge $uv \in E(G)$ is \emph{red} if $vu \notin E(G)$, and say that $uv$ is \emph{blue} if $vu \in E(G)$, i.e., if $uvu$ is a cycle in $G$. 
By $R(G) \subseteq E(G)$, we denote the set of all red edges of $G$, and by $B(G) \subseteq E(G)$ we denote the set of all blue edges of $G$. 
We call a vertex blue if it is incident to at least one blue edge and denote by $B$ the set of blue vertices We call all other vertices red. 
By \cref{thm:number-of-2-3-cycles}, we may assume that there are at most $k^2$ blue edges, hence, $|B|\leq 2k^2$ (otherwise we have a negative instance). 
Remarkably, this is the only place where we use \cref{rule:loops}. 
Note that the standard argument applicable for \textsc{Vertex Cover} that we may assume that there are at most~$k^2$~blue vertices cannot be applied here, since there can be additional red edges. 
Let $A = V(G)\setminus (F\cup B)$. 


Observe that no graph $G$ has an induced cycle containing both a blue and a red edge, as any blue edge $uv$ already implies a cycle of length two. Hence, as every blue edge lies on an induced cycle by definition, it suffices to check whether red edges lie on induced cycles or not.

\medskip
\begin{observation}\label{lem:all-induced-cycles}
After exhaustive application of \cref{rule:shortcut-new} every vertex $v$ lies on a cycle that is induced around $v$ and on at least two cycles that are vertex disjoint except for~$v$.
In particular, every red vertex has only red neighbors and at least two in-coming and at least two out-going red edges.
\end{observation}


\subsection{Analysis of kernel size}\label{sec:bergougnoux}

As mentioned, our analysis follows the same structure as that of Bergougnoux et al. 
An ordered pair $(u,v)$ of (not necessarily distinct) vertices of $F$ is called a \emph{potential edge} of~$F$. If $uv\not\in E(G)$, then $(u,v)$ is a \emph{non-edge} of $F$. 
If $u=v$, then $(u,v)$ is called a \emph{loop}. 
A vertex $w\in A$ \emph{directly contributes} to a potential edge $(u,v)$ if $(u,w) \in E(G)$ and $(w,v) \in E(G)$. 

Note that, unlike the approach of Bergougnoux et al., no vertex of $A$ can directly contribute to a loop (the case $u=v$), these vertices are incident with a blue edge and have been collected in~$B$ (hence, do not belong to $A$). 
The following lemma follows from the fact that \cref{rule:disjoint-cycles} is no longer applicable.

\medskip
\begin{lemma}
\label{lem:direct-contributions}
For every non-edge $(u,v)$ of $F$ there are at most $k$ vertices that 
directly contribute to $(u,v)$. Consequently, there are at most $|F|(|F|-1)k$ vertices in $A$ that directly contribute to some non-edge of $F$.
\end{lemma}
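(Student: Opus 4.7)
The plan is to argue directly from Rule DFVS \ref{rule:disjoint-cycles} applied with the trivial lower bound $\ell=0$. Fix a non-edge $(u,v)$ of $F$ and let $w_1,\dots,w_t$ be the distinct vertices of $A$ that directly contribute to $(u,v)$. By definition of direct contribution, each $w_i$ yields a directed $u$-$v$-path $u\,w_i\,v$ of length two, whose only internal vertex is $w_i$. Hence these $t$ paths are pairwise internally vertex-disjoint, so $\mathcal{M}=\{u\,w_1\,v,\dots,u\,w_t\,v\}$ is a family of $t$ internally vertex-disjoint $u$-$v$-paths. Since the empty set is a trivial dfvs-lower-bound for any induced subgraph, we may take $\ell=0$ in Rule DFVS \ref{rule:disjoint-cycles}. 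If $t\ge k+1$, the rule would insert the edge $uv$, contradicting the hypothesis that $(u,v)$ is a non-edge (recall that by our standing assumption all rules have been applied exhaustively). Therefore $t\le k$, proving the first claim.

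For the second claim I would count non-edges. The number of ordered pairs $(u,v)$ of distinct vertices from $F$ is $|F|(|F|-1)$, and multiplying by the per-non-edge bound $k$ gives $|F|(|F|-1)k$. The only subtle point is to ensure that loops $(u,u)$ contribute nothing: if $w\in A$ directly contributed to $(u,u)$, then $uw,wu\in E(G)$, so $uw$ would be blue and $w\in B$, contradicting $w\in A=V(G)\setminus(F\cup B)$. Hence loops can safely be excluded from the count, and summing the per-non-edge bound over the at most $|F|(|F|-1)$ non-loop ordered pairs yields the stated bound.

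There is no real obstacle here beyond checking that the length-two paths through the $w_i$'s are genuinely internally vertex-disjoint (which is immediate since their unique internal vertices are pairwise distinct) and that we apply Rule DFVS \ref{rule:disjoint-cycles} with a valid lower bound. Both are routine, so the lemma follows cleanly from the fact that Rule DFVS \ref{rule:disjoint-cycles} is no longer applicable in the reduced graph.
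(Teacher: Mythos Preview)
Your proof is correct and follows exactly the approach the paper intends: the paper simply states that the lemma ``follows from the fact that \cref{rule:disjoint-cycles} is no longer applicable,'' and you have spelled out precisely this argument, including the observation (made just before the lemma in the paper) that no vertex of $A$ can directly contribute to a loop because such a vertex would be incident to a blue edge and hence lie in $B$.
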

\medskip

We follow the approach of Bergougnoux et al.\ to bound the number of vertices in~$A$. We denote by $A_0$, $A_1$, $A_2$ and $A_{\geq 3}$ the sets of vertices of $A$ that have a total degree $0$, $1$, $2$ and at least $3$, respectively, in $G[A]$. 

\medskip
\begin{lemma}\label{lem:non-edge-contributions}
Every vertex in $A_0\cup A_1$ directly contributes to a non-edge of~$F$.
\end{lemma}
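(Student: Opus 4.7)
The plan is to exploit \cref{lem:all-induced-cycles}, the corollary of \cref{rule:induced-cycle}: every red edge lies on a cycle induced on an initial segment of length three, and for every red in-edge $uv$ of $v$ there exists a red out-edge $vw$ such that $uvw$ is an induced directed path (in particular $(u,w)\notin E(G)$), with a symmetric statement starting from an out-edge of $v$. Since $v\in A$ is not blue, every edge incident to $v$ is red; this will let me read off two neighbors of $v$ forming a non-edge of $G$ with both endpoints lying outside $A$.

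For $v\in A_0$, no neighbor of $v$ lies in $A$, so all neighbors of $v$ belong to $F\cup B$. I would pick any red in-edge $uv$ of $v$ (which exists by \cref{lem:all-induced-cycles}) and apply the corollary to produce a red out-edge $vw$ with $uvw$ an induced directed path and $(u,w)\notin E(G)$. Both $u$ and $w$ are neighbors of $v$, hence lie in $F\cup B$, so $v$ directly contributes to the non-edge $(u,w)$.

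For $v\in A_1$, let $x$ be the unique neighbor of $v$ in $A$. Since \cref{rule:Y-shape} has been applied exhaustively, $v$ has at least two in-neighbors and at least two out-neighbors, so at least one in-neighbor and one out-neighbor of $v$ lie in $F\cup B$. Assume without loss of generality that $x$ is an out-neighbor of $v$ (the in-neighbor case is symmetric). Then every in-neighbor of $v$ lies in $F\cup B$, and $v$ has an out-neighbor $w\in F\cup B$ with $w\neq x$. Applying \cref{lem:all-induced-cycles} to the red out-edge $vw$ yields an in-edge $uv$ with $uvw$ induced and $(u,w)\notin E(G)$; because $u$ is an in-neighbor of $v$, it lies in $F\cup B$, so $v$ directly contributes to the non-edge $(u,w)$.

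The main obstacle is the $A_1$ case, in which a careless application of the corollary could produce the $A$-neighbor $x$ as the second endpoint of the induced subpath. The fix is to first locate a red edge of $v$ whose far endpoint lies in $F\cup B$ (guaranteed by \cref{rule:Y-shape}, which forces multiple in- and out-neighbors so that not all of them can equal $x$) on the side of $v$ opposite $x$, and then invoke the corollary in the matching direction, so that the remaining endpoint of the induced subpath must be a neighbor of $v$ on the side that contains no vertex of $A$.
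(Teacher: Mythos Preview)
Your argument is more direct than the paper's. The paper argues by contradiction: assuming $w\in A_0\cup A_1$ contributes to no non-edge, it shows that every out-edge $wv$ with $v\in F$ (respectively every in-edge) would have been deleted by \cref{rule:modified-dfs}, leaving $w$ with at most one out-neighbor (respectively in-neighbor), whence $w$ would have been shortcut by \cref{rule:Y-shape} or \cref{rule:isolated}. You instead invoke \cref{lem:all-induced-cycles} to exhibit the non-edge directly as the two outer vertices of an induced three-vertex path through $v$. Your handling of the $A_1$ case---fixing an edge $vw$ with $w\notin A$ on the side opposite the unique $A$-neighbor first, then invoking the symmetric direction of the corollary so that the resulting in-neighbor $u$ is forced outside $A$---is a clean way around the obstacle you correctly identify.

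There is, however, a gap that you share with the paper's own proof. By definition a ``non-edge of $F$'' has both endpoints in $F$, yet your argument only places $u$ and $w$ in $F\cup B$: a vertex of $A_0\cup A_1$ may have neighbors that are blue vertices outside $F$. You explicitly write ``hence lie in $F\cup B$'' and then immediately conclude that $v$ contributes to the non-edge $(u,w)$ of $F$, which does not follow. The paper commits the same slip (it asserts without justification that $v\in A_0\cup A_1$ has an in- and an out-neighbor in $F$, and later that deleting all out-edges into $F$ leaves at most one out-neighbor, both statements ignoring possible neighbors in $B\setminus F$). So your proof is no less rigorous than the paper's, but neither argument as written excludes the possibility that the produced $u$ or $w$ lies in $B\setminus F$.
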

\begin{proof}
By \cref{lem:all-induced-cycles}, every vertex of $A$ has at least~$2$~in- and $2$ out-neighbors, hence, every $w\in A_0\cup A_1$ has an in-neighbor and an out-neighbor in $F$. 
Assume that~$w$ does not directly contribute to a non-edge. 
Then for every pair of in- and out-edges $uw\in E(G)$ and $wv\in E(G)$ for $u,v\in F$ we have $uv\in E(G)$ ($u\neq v$ as otherwise $w\in B$). 
If $w$ has no in-neighbor from~$A$, then it can have at most one out-neighbor in~$A$. Call this out-neighbor $z$. 
Then every cycle that is induced around $w$ also contains~$z$, as we can only enter $w$ via some in-neighbor $u \in F$ and continue the cycle via $z$, because all other out-neighbors $v\in F$ are directly connected, hence, do not lie on a cycle that is induced around $w$. 
Hence, \cref{rule:shortcut-new} would have removed $w$. See \cref{fig:non-edge-contributions} for an illustration.

\begin{figure}
    \centering
    \begin{tikzpicture}
        \fill (2, 1) circle (2pt) node (w) {};
        \node[above=0mm of w] {$w$};

        \fill (4, 1) circle (2pt) node (z) {};
        \node[above=0mm of z] {$z$};

        \fill (0, -1)  circle (2pt) node (u1) {};
        \fill (0.5, 0) circle (2pt) node (u2) {};
        \fill (3, -1)  circle (2pt) node (v1) {};
        \fill (3.5, 0) circle (2pt) node (v2) {};

        \path[-{Latex}]
        (u1) edge (v1)
        (u1) edge (v2)
        (u2) edge (v1)
        (u2) edge (v2)

        (u1) edge (w)
        (u2) edge (w)
        (w) edge (v1)
        (w) edge (v2)

        (w) edge (z)
        ;

        \draw[dashed, rounded corners=5,gray!50!black] (-0.5,-1.5) rectangle ++(4.5, 2) node[below right=1.5 and 0] {$F$};
    \end{tikzpicture}
    \caption{Illustration of \cref{lem:non-edge-contributions}.}
    \label{fig:non-edge-contributions}
\end{figure}


Analogously, if~$w$ has no out-neighbor in~$A$, then it can have at most one in-neighbor from~$A$, call it again $z$. 
Then all cycles that are induced around $w$ must have~$z$ as the predecessor of~$w$, as the cycles using another in-neighbor $u \in F$ and then an out-neighbor $v \in F$ are not induced around $w$. 
Again $w$ would have been removed by \cref{rule:shortcut-new}. 
\end{proof}

\begin{corollary}\label{cor-013}
$|A_0\cup A_1|\leq |F|(|F|-1)k$ and $|A_{\geq 3}|\leq |F|(|F|-1)k-2$.
\end{corollary}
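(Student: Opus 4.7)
The plan is to prove the two bounds separately. For the first bound, I would simply chain the two preceding lemmas: by \cref{lem:non-edge-contributions}, every vertex of $A_0\cup A_1$ directly contributes to some non-edge of $F$, and by \cref{lem:direct-contributions} the total number of vertices in $A$ directly contributing to non-edges of $F$ is at most $|F|(|F|-1)k$. Composing these two facts immediately yields $|A_0\cup A_1|\leq |F|(|F|-1)k$.

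For the second bound, the key structural observation is that since $F$ is a feedback vertex set of the underlying undirected graph and $A$ is disjoint from $F$, the underlying undirected graph of $G[A]$ is a forest $\FF$. Note that edges inside $A$ are all red (a blue edge would place both endpoints in $B$, contradicting $A\cap B=\emptyset$), so the ``total degree'' of a vertex $v\in A$ in $G[A]$ coincides with its degree in $\FF$. In particular, $A_1$ is exactly the set of leaves of $\FF$ and $A_{\geq 3}$ is exactly the set of vertices of degree at least three in $\FF$.

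Now I would invoke the standard handshake identity $\sum_{v}(\deg_\FF(v)-2)=-2c$, where $c$ is the number of connected components of $\FF$ with at least one edge. Rearranging yields
\[
\sum_{v:\deg_\FF(v)\geq 3}(\deg_\FF(v)-2)\;=\;|A_1|-2c-\sum_{v\in A_0}\!2\;\leq\;|A_1|-2,
\]
provided $\FF$ contains at least one nontrivial component (otherwise $A_{\geq 3}=\emptyset$ and the claim is trivial). Since each summand on the left is at least $1$, this gives $|A_{\geq 3}|\leq |A_1|-2\leq |A_0\cup A_1|-2$, and combining with the first bound yields $|A_{\geq 3}|\leq |F|(|F|-1)k-2$.

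The main subtlety—and essentially the only place where something can go wrong—is verifying that ``total degree in $G-(F\cup B)$'' really matches degree in the forest $\FF$, which rests on ruling out $2$-cycles inside $A$; this is exactly what the definition of $B$ guarantees. The two applications of \cref{lem:direct-contributions} and the forest handshake count are both routine once this identification is made.
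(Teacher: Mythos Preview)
Your proposal is correct and follows essentially the same approach as the paper: combine \cref{lem:non-edge-contributions} with \cref{lem:direct-contributions} for the first bound, then use that the underlying undirected graph of $G[A]$ is a forest to deduce $|A_{\geq 3}|\leq |A_1|-2\leq |A_0\cup A_1|-2$. One minor slip: in your displayed identity the correct right-hand side is $|A_1|-2c$ (the $-2|A_0|$ term has already been absorbed on the left), but since this only makes your stated bound weaker, the conclusion is unaffected.
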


\begin{proof}
The underlying undirected graph of $G[A]$ induces an undirected forest and the number of vertices of degree at least $3$ in an undirected forest is at most equal to the number of leaves minus two; the number of leaves is $|A_1|$. Hence, we have $|A_{\geq 3}|\leq |A_1| - 2 \leq |A_0 \cup A_1| -2 \leq |F|(|F|-1)k-2$,  where the latter inequality is a consequence of the previous two lemmas (\cref{lem:direct-contributions} and~\cref{lem:non-edge-contributions}).
\end{proof}

It remains to bound the size of $A_2$. We call a vertex $w\in A_2$ a \emph{sink} vertex or a \emph{source} vertex if the two neighbors of $w$ in $A$ are both in-neighbors or out-neighbors, respectively. 
Otherwise we call~$w$ a \emph{balanced} vertex. As \cref{rule:Y-shape} cannot be applied, every vertex of~$A_2$ has at least $2$ distinct neighbors in $F$. 

Let $P = (w_1,\ldots,w_r)$ be an inclusion-wise maximal directed path in $G[A]$ whose internal vertices are in $A_2$. We call $P$ a \emph{path segment} in $A$. We call $P$ an \emph{outer path segment} if at least one of its endpoints is not in $A_2$ and an \emph{inner path segment}, otherwise. Note that path segments are directed paths, which, by maximality, can never start or end with a balanced vertex. Moreover, every internal vertex of a path segment must be a balanced vertex. 

We first bound the number of outer path segments, that is, the number of path segments with at least one endpoint in $A_1\cup A_3$. 

\medskip
\begin{lemma}
The number of outer path segments is at most $4(|F|(|F|-1)k)$. 
\end{lemma}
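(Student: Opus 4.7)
The plan is to bound the number of outer path segments by counting endpoints in the vertices of $A^{\ast} := A_0 \cup A_1 \cup A_{\geq 3}$. By definition, every outer path segment has at least one endpoint in $A^{\ast}$, hence it suffices to bound $\sum_{v \in A^{\ast}} e(v)$, where $e(v)$ is the number of path segments for which $v$ is an endpoint (first or last vertex).

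I will compute $e(v)$ by a short case analysis. For $v \in A_0$, the vertex is isolated in $G[A]$ and forms a trivial single-vertex path segment, so $e(v) = 1$. For $v \in A_1$, the unique neighbor of $v$ in $A$ extends into precisely one maximal path segment with $v$ as an endpoint, so $e(v) = 1$. For $v \in A_{\geq 3}$, each of the $\deg_{G[A]}(v)$ incident edges in $G[A]$ is the first (respectively last) edge of a distinct path segment having $v$ as its starting (respectively ending) vertex — these segments are distinct because their first/last edges differ — so $e(v) = \deg_{G[A]}(v)$.

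The crucial structural fact is that $G[A]$ is an undirected forest, since $F$ is an undirected feedback vertex set and $A \subseteq V(G)\setminus F$. In any forest on $|A|$ vertices with $c$ connected components, $\sum_{v \in A}\deg_{G[A]}(v) = 2(|A|-c)$. Because each $A_0$-vertex is its own component, $c \geq |A_0|$. Splitting the degree-sum by the four classes $A_0, A_1, A_2, A_{\geq 3}$ and isolating the contribution from $A_{\geq 3}$ will yield
\[
    \sum_{v \in A_{\geq 3}} \deg_{G[A]}(v) = 2(|A|-c) - |A_1| - 2|A_2| \;\leq\; |A_1| + 2|A_{\geq 3}|.
\]

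Putting the pieces together, the number of outer path segments is at most
\[
    |A_0| + |A_1| + \sum_{v \in A_{\geq 3}}\deg_{G[A]}(v) \;\leq\; |A_0| + 2|A_1| + 2|A_{\geq 3}| \;\leq\; 2|A_0 \cup A_1| + 2|A_{\geq 3}|,
\]
which, by \cref{cor-013}, is bounded by $2|F|(|F|-1)k + 2\bigl(|F|(|F|-1)k - 2\bigr) < 4|F|(|F|-1)k$. The main subtlety will be the forest degree-sum manipulation together with the correct multiplicity bookkeeping for $A_{\geq 3}$-endpoints (allowing a single segment to be counted twice if both of its endpoints lie in $A^{\ast}$, which only strengthens the inequality since we seek an \emph{upper} bound on the number of segments).
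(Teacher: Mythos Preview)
Your proof is correct. The approach is a close variant of the paper's: the paper contracts all degree-$2$ vertices of the underlying forest $G[A]$ to obtain a forest $H$ on the vertex set $A_0\cup A_1\cup A_{\geq 3}$, then argues that each outer path segment is determined by an edge of $H$ together with an orientation, giving the bound $2|E(H)|$; finally $|E(H)|$ is bounded via $|A_0\cup A_1|+|A_{\geq 3}|$ and \cref{cor-013}. You bypass the contraction and count endpoint-incidences $e(v)$ directly, reducing to the degree-sum identity in the forest $G[A]$. Both arguments are the same forest double-count in slightly different clothing; your version is arguably a touch more elementary since it never needs to introduce $H$ or reason about what the contracted edges represent, while the paper's version makes the picture (branching tree of path segments) more visually explicit.
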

\begin{proof}
    Let $H$ be the undirected graph obtained from the undirected graph underlying~$G[A]$ by contracting all edges that are incident to at least one vertex of degree~$2$. 
    Then every outer path segment runs between the endpoint of an edge in $H$ and an endpoint or inner vertex of the contracted path in a unique direction, as $A$ does not contain blue edges. 
    Hence, the number of outer path segments in $A$ is bounded by twice the number of edges of $H$. 
    As $H$ is a forest without vertices of degree two, its number of edges is equal to the number of leaves plus the number of non-leaves minus one. As shown in \cref{cor-013} both of these numbers are bounded from above by $|F|(|F|-1)k$, as needed. 
\end{proof}

We say that a path segment $P = (w_1,\ldots,w_r)$ \emph{contributes to a potential edge} $(u,v)$ of $F$ if there are $i$ and $j$, $1\leq i\leq j\leq r$, such that $uw_i\in E(G)$ and $w_jv\in E(G)$. We say that $P$ \emph{contributes to a loop} on $u\in F$ if there are $i$ and $j$, $1\leq i\leq j\leq r$, such that $uw_i\in E(G)$ and $w_ju\in E(G)$. 

\medskip
\begin{lemma}\label{lem:short-path-segments}
    If an inner path segment does not contribute to a non-edge or a loop of~$F$, then it has length at most $1$, that is, it consists of only $2$ vertices.
    
\end{lemma}
\begin{proof}
    If $P=(w_1,w_2,w_3,\ldots,w_r)$ for $r\geq 3$ does not contribute to a non-edge or loop, then every in-neighbor $u\in F$ of $w_1$ or $w_2$ is connected to every out-neighbor $v\in F$ of~$w_2$. 
    Then every cycle that is induced around $w_2$ comes from $w_1$ (in which case it comes from an in-neighbor $u$ of $w_1$) or from an in-neighbor $u\in F$ of~$w_2$. 
    The cycle cannot continue via an out-neighbor $v\in F$ of $w_2$, as this would not be a cycle induced around $w_2$ (as in any case we have the edge $uv\in E(G)$). 
    Hence, it must continue via~$w_3$. 
    That is, every cycle induced around $w_2$ also contains $w_3$, hence, $w_2$ is removed and shortcut by \cref{rule:shortcut-new}. See \cref{fig:short-path-segments} for an illustration.
    \begin{figure}
        \centering
        \begin{tikzpicture}
            \foreach \i in {1, ..., 6} {
              \fill (\i, 0) circle (2pt) node (\i) {};
            }
            \path[-{Latex}]
            (2) edge (1)
            (2) edge (3)
            (3) edge (4)
            (4) edge (5)
            (6) edge (5)
            ;

            \node[above=0mm of 2] {$w_1$};
            \node[above=0mm of 3] {$w_2$};
            \node[above=0mm of 4] {$w_3$};

            \foreach \i in {1, ..., 3} {
              \fill (\i+0.5, -1) circle (2pt) node (u\i) {};
            }

            \path[-{Latex}]
            (u1) edge (2)
            (u1) edge (u2)
            (u1) edge[bend right] (u3)
            (3)  edge (u3)
            (u2) edge (3)
            (u2) edge (u3)
            (2)  edge (u2)
            ;

            \draw[dashed, rounded corners=5,gray!50!black] (1,-1.5) rectangle ++(3, 1) node[below right=0mm] {$F$};
        \end{tikzpicture}
        \caption{Illustration of \cref{lem:short-path-segments}.}
        \label{fig:short-path-segments}
    \end{figure}
\end{proof}

\medskip
\begin{lemma}\label{lem:inner-contribution}
    Every inner path segment contributes to a non-edge or a loop of~$F$.
\end{lemma}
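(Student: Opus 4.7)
The plan is to exhibit, for any inner path segment $P = (w_1, \ldots, w_r)$, explicit vertices $u, v \in F$ and indices $i \leq j$ with $u \to w_i$ and $w_j \to v$ in $E(G)$ such that either $u = v$ (giving a loop contribution) or $uv \notin E(G)$ (giving a non-edge contribution). First I would exploit maximality of the path segment: because $P$ is inner, both $w_1$ and $w_r$ lie in $A_2$, and maximality of $P$ as a directed path forces $w_1$ to be a source and $w_r$ to be a sink of $G[A]$ (otherwise $P$ could be extended). Hence all in-neighbors of $w_1$ and all out-neighbors of $w_r$ lie in $F \cup B$; because \cref{rule:Y-shape} is inapplicable, $w_1$ has at least two in-neighbors and $w_r$ has at least two out-neighbors.

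Next I would apply \cref{rule:induced-cycle} to the red edge $w_1 \to w_2$, obtaining a cycle $C = w_1 \to w_2 \to x \to y_1 \to \cdots \to y_s \to w_1$ that is induced on its initial segment of length three (so $w_1 x \notin E(G)$). Observe that $y_s \in F \cup B$ (as an in-neighbor of the source $w_1$) and $x \in \{w_3\} \cup F \cup B$ (as an out-neighbor of $w_2$; the case $x = w_3$ can occur only when $r \geq 3$).

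I would then perform a case analysis to extract the witnesses. In the clean case $u := y_s \in F$ and $v := x \in F$: with $i = 1$, $j = 2$, if $u = v$ then $P$ contributes to a loop on $u$, and otherwise $P$ contributes to the non-edge $(u, v)$ whenever $uv \notin E(G)$. When $x = w_3$ I would iterate the argument, applying \cref{rule:induced-cycle} to later edges $w_\ell \to w_{\ell+1}$ along $P$ until the cycle must exit $A$ on its ``$x$-side''; when $y_s$ or $x$ lies in $B$ I would use the at-least-two in-/out-neighbors of $w_1$ and $w_r$ to pick an alternative in $F$, supported by the bookkeeping bound $|B| \leq 2k^2$ on the total number of blue vertices.

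The main obstacle I anticipate is the remaining case where every valid choice of $u \in F$ in-neighbor of $w_1$ and $v \in F$ out-neighbor of $w_r$ satisfies $uv \in E(G)$. Handling this requires combining inapplicability of \cref{rule:shortcut} (so no single vertex dominates $w_1$ on all its cycles) with inapplicability of \cref{rule:2-disjoint-cycles} (so $w_1$ lies on at least two internally vertex-disjoint cycles) to choose an alternative cycle through $w_1 \to w_2$ whose returning endpoint produces the desired non-edge or loop contribution, yielding the claim.
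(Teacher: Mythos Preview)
Your proposal is a plan rather than a proof, and the part you yourself flag as the ``main obstacle'' is exactly where the argument has to happen; you do not resolve it. Saying you would invoke \cref{rule:shortcut} and \cref{rule:2-disjoint-cycles} to produce an alternative cycle through $w_1w_2$ whose return vertex yields a non-edge does not work: two internally vertex-disjoint cycles through $w_1$ can both have their entry/exit vertices $u,v$ satisfying $uv\in E(G)$, and non-domination of $w_1$ gives no control over which potential edges of $F$ get realized. Likewise, your treatment of blue vertices is not a local argument: having at least two in-neighbors of $w_1$ does not guarantee one of them lies in $F$ rather than $B$, and the global bound $|B|\le 2k^2$ is irrelevant here. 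The ``iterate along $P$'' step for $x=w_3$ is also left unspecified; you would need to show the iteration terminates with an exit vertex in $F$ and that the resulting pair is a non-edge.

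The paper proceeds quite differently, by contradiction, and the key move you are missing is to look at $w_2$ rather than $w_1$ when $r\ge 3$. Since $w_2$ is balanced in $A$, its only $A$-in-neighbor is $w_1$, so any additional in-neighbor $u$ lies in $F$. Now every cycle through the edge $uw_2$ must run along $P$ and exit via some $w_j\to v$ with $v\in F$; under the contradiction hypothesis (no loop or non-edge contribution) this forces $u\neq v$ and $uv\in E(G)$, which is precisely the configuration that makes \cref{rule:modified-dfs} delete the edge $uw_2$. Doing this for every such $u$ leaves $w_2$ with the single in-neighbor $w_1$, contradicting \cref{rule:Y-shape}. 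Only in the short case $r=2$ does the paper use \cref{rule:induced-cycle}, arguing that every cycle through $w_1w_2$ then has the shape $w_1,w_2,v,\dots,u,w_1$ with $uv\in E(G)$, so none is induced on an initial segment of length three and $w_1w_2$ would have been removed. Your approach of extracting a single witnessing cycle from \cref{rule:induced-cycle} could be salvaged for $r=2$ (the ``initial segment of length three'' property already forces $y_s x\notin E(G)$, so your clean case is actually not an obstacle there), but for $r\ge 3$ you need the $w_2$-based argument with \cref{rule:modified-dfs}.
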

\begin{proof}
Assume towards a contradiction that a non-trivial $P=(w_1,\ldots,w_r)$ does not contribute to a non-edge or a loop of $F$. 
By \cref{lem:short-path-segments} the segment consists only of~$w_1$ and $w_2$. 
Because there is no out-going edge of $w_2$ into $A$, there is no cycle that is induced around $w_1$ that uses $w_2$ (argued exactly as above). 
Hence, $w_1$ is dominated by its unique out-neighbor $z\in A$, and removed and shortcut by \cref{rule:shortcut-new}.
\end{proof}

\begin{lemma}
There are at most $3|F|(|F|-1)k$ inner path segments.
\end{lemma}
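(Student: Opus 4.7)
Here is my plan for proving the bound of $3|F|(|F|-1)k$ on the number of inner path segments.

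The plan is to combine \cref{lem:inner-contribution} with a disjoint-paths argument through \cref{rule:disjoint-cycles}. By the lemma, every inner path segment contributes to some non-edge or loop of $F$, so I may fix, once and for all, a function $f$ that assigns to every inner path segment $P$ a non-edge or loop $(u_P,v_P)$ of $F$ to which $P$ contributes. The bound on the total number of inner path segments will then follow once I bound $|f^{-1}((u,v))|$ for each potential non-edge or loop $(u,v)\in F\times F$, by a constant multiple of $k$, and sum over the at most $|F|^2$ choices.

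Fix a non-edge or loop $(u,v)$. Each $P=(w_1,\ldots,w_r)\in f^{-1}((u,v))$ comes with indices $1\le i\le j\le r$ such that $uw_i\in E(G)$ and $w_jv\in E(G)$, and hence yields a walk $u, w_i, w_{i+1},\ldots, w_j, v$ which is a $u$-$v$-path if $u\neq v$ and a cycle through $u$ if $u=v$. The key observation is that distinct inner path segments have pairwise disjoint interiors: internal vertices lie in $A_2$, so their unique in- and out-neighbors in $A$ force two segments that share an internal vertex to coincide. Consequently, the walks associated to distinct $P\in f^{-1}((u,v))$ are pairwise internally vertex-disjoint \emph{except possibly at the source $w_1$ or the sink $w_r$}, where by the source/sink structure of $A_2$ each fixed $w_1$ (resp.\ $w_r$) can be shared by at most two distinct inner path segments (since it has exactly two out-neighbors, resp.\ in-neighbors, in $A$, and the forest structure of $G[A]$ forbids a ``bigon'' between the same source and sink).

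A standard pigeonhole argument on the multigraph whose edges are the (source, sink) pairs of segments in $f^{-1}((u,v))$ and which has maximum degree two then extracts from $f^{-1}((u,v))$ a subcollection of size at least $|f^{-1}((u,v))|/c$, for a small absolute constant $c$, consisting of walks whose sources and sinks are all pairwise distinct, and which are therefore pairwise internally vertex-disjoint. If $u\neq v$, \cref{rule:disjoint-cycles} forces this number to be at most $k$ (otherwise the edge $uv$ would have been inserted, contradicting that $(u,v)$ is a non-edge). If $u=v$, the extracted walks are cycles pairwise intersecting only in $u$, and if more than $k$ existed, \cref{rule:disjoint-cycles} would add a loop at $u$ and then \cref{rule:loops} would remove $u$, contradicting $u\in F\subseteq V(G)$. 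In either case $|f^{-1}((u,v))|\le ck$, and summing over the at most $|F|(|F|-1)$ non-edges and the $|F|$ loops, together with the standard inequality $|F|^2\le 3|F|(|F|-1)/c'$ valid for $|F|$ large enough, yields the claimed bound of $3|F|(|F|-1)k$ (smaller cases being trivial).

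The main obstacle is the third paragraph: different inner path segments in $f^{-1}((u,v))$ do \emph{not} automatically give pairwise internally vertex-disjoint $u$-$v$-paths, because they can share their source $w_1$ or sink $w_r$. The degree-two structure of $A_2$-vertices and the absence of undirected cycles in $G[A]$ are precisely what keep the sharing multiplicity to a small constant, and making this bookkeeping tight is what fixes the constant $3$ in the final bound.
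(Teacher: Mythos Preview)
Your proposal is correct and follows essentially the same approach as the paper. The paper likewise invokes \cref{lem:inner-contribution}, observes that each inner path segment meets at most two others (via the source/sink degree-$2$ structure), extracts a $1/3$-fraction of pairwise disjoint segments, and then uses \cref{rule:disjoint-cycles} to cap the number contributing to any fixed non-edge or loop by $k$. The only cosmetic difference is the order of operations: the paper first passes to an independent subfamily globally and then pigeonholes over pairs in $F$, whereas you pigeonhole first and extract an independent subfamily within each fibre $f^{-1}((u,v))$; the arithmetic is the same. (Both your write-up and the paper's are equally casual about the additional $|F|$ loop targets in the pigeonhole, which strictly yields $3|F|^2k$ rather than $3|F|(|F|-1)k$; this does not affect the $\Oof(|F|^3k)$ kernel bound, and your attempted ``standard inequality'' patch is unnecessary.)
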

\begin{proof}
    As shown in~\cref{lem:inner-contribution}, every inner path segment contributes to a  non-edge or a loop of~$F$. 
    As every inner path segment has a source in $A_2$ as its starting vertex and a sink in $A_2$ as its ending vertex and all inner vertices are balanced vertices, it can intersect with at most two other inner path segments (at their endpoints). In any set~$X$ of inner path segments we can hence find $|X|/3$ independent inner path segments. 
    If there are more than $3|F|(|F|-1)k$ inner path segments that contribute to a  non-edge or a loop, then we find $|F|(|F|-1)k$ many independent ones. Then some pair must be connected by more than $k$ disjoint paths. Then this pair is connected by an edge by \cref{rule:disjoint-cycles} and the path segment does not contribute to it.
\end{proof}

\begin{corollary}\label{cor:num-path-segments}
Overall there are at most $\Oof(|F|^2k)$ path segments. 
\end{corollary}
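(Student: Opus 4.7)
The plan is straightforward: the collection of all path segments partitions into outer path segments (those with at least one endpoint outside $A_2$) and inner path segments (those whose both endpoints lie in $A_2$), so it suffices to add the two bounds already established. From the preceding lemmas we know that the number of outer path segments is at most $4|F|(|F|-1)k$ and the number of inner path segments is at most $3|F|(|F|-1)k$. Summing these, the total number of path segments is at most $7|F|(|F|-1)k \in \Oof(|F|^2 k)$, which is exactly the claim.

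There is essentially no obstacle here; the corollary is a bookkeeping step that assembles the two separate counts into a single uniform bound. The only thing worth double-checking is that the dichotomy into outer and inner path segments is exhaustive, which follows directly from the definition: a path segment $P = (w_1, \ldots, w_r)$ is inner precisely when both $w_1, w_r \in A_2$, and outer otherwise, so every maximal path segment in $A$ falls into exactly one of the two categories.
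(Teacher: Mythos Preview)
Your proof is correct and matches the paper's approach: the corollary is stated without proof there, as it follows immediately by summing the bounds from the two preceding lemmas on outer and inner path segments, exactly as you do. The exhaustiveness of the outer/inner dichotomy is immediate from the definition, so your check is appropriate.
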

\medskip

Observe that all path segments are induced paths in $A$. Let $u\in F$ and let $P=(w_1,\ldots, w_r)$ be an induced directed path in $A$ such that $w_1,\ldots, w_{r-1}$ are balanced in~$A$. If $uw_1\in E(G)$ and $uw_r\in E(G)$ and for every $1<i<r$ we have $uw_i\not\in E(G)$, then we call $P$ an \emph{out-segment for $u$}. 
We say that an out-segment for $u$ denoted by $P$ \emph{contributes} to a potential edge or loop $(u,v)$ in $F$ if there is an index $1\leq i<r$ such that $w_iv\in E(G)$ for some $v\in F$. 

\medskip
\begin{lemma}\label{lem:out-segment-contributes}
    Every out-segment for $u$ contributes to a non-edge or loop of~$F$.
\end{lemma}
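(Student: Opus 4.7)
The plan is to argue by contradiction, supposing $P = (w_1, \ldots, w_r)$ is an out-segment for $u$ that contributes to no non-edge or loop of $F$. The no-contribution hypothesis says: for every $1 \leq i < r$ and every $v \in F$ with $w_i v \in E(G)$, we have $v \neq u$ and $uv \in E(G)$. My goal is to mirror the proof of \cref{lem:inner-contribution} and show that the red edge $uw_1$ should have been deleted by \cref{rule:modified-dfs}, which will give the desired contradiction. (Note that $uw_1$ and $uw_r$ are both red, since $w_1, w_r \in A$ and hence incident to no blue edges.)

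Concretely, I would let $G'$ denote the subgraph of $R(G)$ obtained by removing every vertex $z \notin \{u, w_1\}$ with $uz \in E(G)$ or $zw_1 \in E(G)$. By \cref{rule:modified-dfs} it suffices to show that $G'$ contains no $w_1$-$u$-path. The key new ingredient compared to \cref{lem:inner-contribution} is that the edge $uw_r \in E(G)$ forces $w_r$ to be deleted from $G'$, since $w_r$ is an out-neighbor of $u$.

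Combined with the fact that each $w_i$ with $1 \leq i < r$ is balanced in $A_2$ and has $w_{i+1}$ as its unique $A$-successor, any hypothetical $w_1$-$u$-path $Q$ in $G'$ would follow $w_1, w_2, \ldots, w_j$ in $A$ until it first exits $A$. This exit cannot occur at $w_r$ (which is removed from $G'$), so it occurs at some $w_j$ with $j < r$ via a red edge $w_j \to v$ with $v \in F \cup B$. Following the same argument as in the proof of \cref{lem:inner-contribution}, one arranges that $v \in F$; then the no-contribution assumption yields $v \neq u$ and $uv \in E(G)$, so $v$ is removed from $G'$, contradicting the existence of $Q$. Hence no $w_1$-$u$-path exists in $G'$, and \cref{rule:modified-dfs} would have removed $uw_1$, contradiction.

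The main obstacle, exactly as in \cref{lem:inner-contribution}, lies in handling the case where the path $Q$ ventures into $B$ before re-entering $F$, since vertices of $B$ need not be removed from $G'$. This is handled by tracing $Q$ further and invoking that \cref{rule:induced-cycle} has been applied exhaustively to all red edges (including those incident with $B$); crucially, because $w_r$ is pinned outside $G'$ by $uw_r$, the situation is strictly more constrained than in the inner path segment case, so the tracing argument of \cref{lem:inner-contribution} carries over and forces $Q$ to eventually use some $w_{j'} \to v'$ with $v' \in F$ and $j' < r$, to which the no-contribution assumption applies.
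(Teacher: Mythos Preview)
Your core argument—assume no contribution, then show the red edge $uw_1$ would be removed by \cref{rule:modified-dfs}, using crucially that $uw_r\in E(G)$ forces $w_r$ out of $G'$—is exactly the paper's approach. The paper's proof is terser: it simply asserts that every cycle through $uw_1$ contains either $w_r$ or some $v\in F$ with $w_iv\in E(G)$ for an $i<r$, whence $uv\in E(G)$ by the non-contribution hypothesis, and concludes via \cref{rule:modified-dfs}; it does not invoke \cref{rule:induced-cycle} at all here.

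Your final paragraph about excursions into $B$ is where the proposal goes astray. There is no ``tracing argument'' in the paper's proof of \cref{lem:inner-contribution} for you to import—\cref{rule:induced-cycle} is used there only in the degenerate $r=2$ case, not for any tracing through $B$. More concretely, your claim that $Q$ is forced back to some $w_{j'}$ with $j'<r$ before exiting to $F$ cannot hold: since $G[A\cup B]$ is an undirected forest, once $Q$ branches off $P$ at $w_j$ it can never revisit any vertex of $P$. So if you really want to treat the $B$ case rigorously you need a different argument; the paper simply asserts that the first exit from $P$ lands directly in $F$ and proceeds from there.
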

\begin{proof}
    Assume $P$ is an out-segment for $u$ that does not contribute to a non-edge or loop. Because $w_1$ is balanced in $A$, its only out-neighbor in $A$ is $w_2$, call~$z$ its in-neighbor in~$A$. 
    Hence, every cycle~$C$ using the edge $uw_1$ must contain the vertex $w_r$ or a vertex $v\in F$ with $w_iv\in E(G)$ for some $1\leq i<r$. 
    Note that $v\neq u$ because $P$ does not contribute to a loop. 
    Because~$P$ does not contribute to a non-edge of $F$ in the latter case there is an edge $uv\in E(G)$. 
    This implies that every cycle that is induced around $w_1$ must enter $w_1$ via $z$. 
    Hence, $z$ dominates $w_1$ and $w_1$ is removed and shortcut by \cref{rule:shortcut-new}.
\end{proof}

\begin{lemma}\label{lem:num-out-segments}
    For each $u\in F$ there are at most $|F|k$ out-segments for $u$. 
\end{lemma}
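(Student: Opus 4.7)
The plan is to attribute each out-segment for $u$, via \cref{lem:out-segment-contributes}, to a non-edge or loop $(u,v)$ of $F$ to which it contributes, and then show that at most $k$ out-segments can be assigned to any fixed $(u,v)$. Summing over the at most $|F|$ choices of $v \in F$ yields the claimed bound $|F|k$.

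Fix such a pair $(u,v)$ and an out-segment $P=(w_1,\ldots,w_r)$ assigned to it via some index $1 \leq i < r$ with $w_iv \in E(G)$. This gives a walk $u \to w_1 \to \cdots \to w_i \to v$, which is a $u$-$v$-path when $v \neq u$ and a cycle through $u$ when $v = u$, with internal vertices in $\{w_1,\ldots,w_i\} \subseteq A$. The key claim is that for two distinct out-segments $P$ and $P'$ attributed to the same $(u,v)$, the resulting paths (or cycles) are pairwise internally vertex-disjoint. Once this is established, exhaustive application of \cref{rule:disjoint-cycles} bounds the number of such paths by $k$ in the non-edge case, and the Menger/copy construction described after \cref{rule:2-disjoint-cycles} combined with \cref{rule:loops} yields the same bound in the loop case.

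The main technical obstacle is the internal-disjointness claim. Suppose $w_j = w'_{j'}$ is a shared internal vertex, so $1 \leq j \leq i \leq r-1$ and $1 \leq j' \leq i' \leq r'-1$; in particular this vertex is balanced in $A$, with a uniquely determined in-neighbor and out-neighbor in $A$. Iterating the uniqueness of out-neighbors forward yields $w_{j+s} = w'_{j'+s}$ for all valid $s$. If one segment terminates strictly earlier, say $j+s = r$ with $j'+s < r'$, then $w_r = w'_{j'+s}$ would be an internal vertex of $P'$ satisfying $uw_r \in E(G)$, contradicting the out-segment condition for $P'$; hence $r-j = r'-j'$. Iterating backward (valid while both positions remain $\geq 2$) similarly gives $w_{j-s} = w'_{j'-s}$; assuming without loss of generality that $j \leq j'$ and continuing to $s = j-1$ yields $w_1 = w'_{j'-j+1}$. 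If $j = j'$, then $w_1 = w'_1$, and forward propagation from the common start forces $r = r'$ and $P = P'$, contradicting distinctness. If $j < j'$, then $r \geq 2$ together with $r-j = r'-j'$ places position $j'-j+1$ strictly inside $P'$, so $uw'_{j'-j+1} \notin E(G)$ by the out-segment condition, contradicting $uw_1 \in E(G)$. The boundary case $j=1$ with $j' \geq 2$ (and its symmetric partner) is even more direct: $w_1 = w'_{j'}$ would then be internal in $P'$ yet satisfy $uw_1 \in E(G)$, again contradicting the out-segment condition. This completes the plan.
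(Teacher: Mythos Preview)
Your proposal is correct and follows essentially the same approach as the paper: attribute each out-segment to a non-edge or loop $(u,v)$ via \cref{lem:out-segment-contributes}, argue that the resulting $u$-$v$-paths are internally vertex-disjoint, and bound by $k$ per pair using \cref{rule:disjoint-cycles}. The only difference is presentational: the paper asserts disjointness in one line (noting that the contribution uses only the balanced vertices $w_1,\ldots,w_i$ with $i<r$, so even out-segments sharing the endpoint $w_r$ yield disjoint paths), whereas you spell out the propagation argument that makes this rigorous.
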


\begin{proof}
According to \cref{lem:out-segment-contributes}, every out-segment for $u$ contributes to a non-edge or loop of $F$. Note that the contribution for $P=(w_1,\ldots, w_r)$ is from the (balanced) vertices $w_1$ and $w_i$, for some $i<r$. Hence, when two out-segments contribute to the same non-edge $(u,v)$ and intersect in $w_r$, the contributing paths still give rise to two internally vertex disjoint $u$-$v$-paths. Hence, there can be at most $k$ out-segments contributing to a non-edge $(u,v)$. Finally, there are at most $|F|$ choices for~$v$.
\end{proof}

We can now bound the size of $A_2$. 

\pagebreak
\medskip
\begin{lemma}\label{lem-final-a2-size}
    $|A_2|\in \Oof(|F|^3k)$.
\end{lemma}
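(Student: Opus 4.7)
The plan is to bound $|A_2|$ by counting separately the endpoints of path segments and the internal vertices of path segments. Every vertex of $A_2$ is either an endpoint of some path segment (this covers all sources and sinks of $A_2$, as well as balanced vertices that happen to be endpoints) or an internal vertex of a unique path segment, so this case distinction is exhaustive. By \cref{cor:num-path-segments} there are $\mathcal{O}(|F|^2 k)$ path segments, each with at most two endpoints, accounting for at most $\mathcal{O}(|F|^2 k)$ vertices of $A_2$. The remaining task is to show that the number of internal vertices of path segments is $\mathcal{O}(|F|^3 k)$.

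The key is to charge each internal vertex to an in-neighbor in $F$ and then pay for the charges using \cref{lem:num-out-segments}. If $w_i$ is an internal vertex of a path segment $P=(w_1,\ldots,w_r)$, then $w_i$ is balanced and its unique in-neighbor in $A$ is $w_{i-1}$; since \cref{rule:Y-shape} has been exhaustively applied, $w_i$ must have at least one in-neighbor $u\in F$. I will use these in-neighbors to slice $P$ into out-segments for $u$.

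Fix $u\in F$ and a path segment $P=(w_1,\ldots,w_r)$, and set $\tilde{I}_u(P)=\{i\in\{1,\ldots,r\}:uw_i\in E(G)\}$. If $\tilde{I}_u(P)=\{a_1<\cdots<a_t\}$, I claim that for every consecutive pair $(a_s,a_{s+1})$ with $a_s\geq 2$, the sub-path $(w_{a_s},\ldots,w_{a_{s+1}})$ is an out-segment for $u$: it is induced as a sub-path of the induced path $P$, the vertices $w_{a_s},\ldots,w_{a_{s+1}-1}$ all lie in $\{w_2,\ldots,w_{r-1}\}$ and are therefore balanced, $u$ is an in-neighbor of both $w_{a_s}$ and $w_{a_{s+1}}$, and by the choice of consecutive indices $u$ is an in-neighbor of no intermediate vertex. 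This yields at least $|\tilde{I}_u(P)|-2$ distinct out-segments for $u$ inside $P$, the $-2$ absorbing the fact that $t$ indices give only $t-1$ consecutive pairs and that one pair may be lost if $a_1=1$.

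Summing over all $u\in F$ and path segments $P$ and invoking \cref{lem:num-out-segments}, I obtain $\sum_{u,P}(|\tilde{I}_u(P)|-2)_{+}\leq |F|\cdot|F|k=|F|^2k$, hence $\sum_{u,P}|\tilde{I}_u(P)|\leq |F|^2 k + 2|F|\cdot \mathcal{O}(|F|^2 k)=\mathcal{O}(|F|^3 k)$, where the second term uses \cref{cor:num-path-segments}. Every internal vertex of every path segment contributes at least one term to $\sum_{u,P}|\tilde{I}_u(P)|$ via its chosen in-neighbor from $F$, so the number of such internal vertices is at most $\mathcal{O}(|F|^3 k)$. Combined with the endpoint bound this yields $|A_2|=\mathcal{O}(|F|^3 k)$. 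The main obstacle I anticipate is the boundary accounting in the sliced-segment argument: the endpoints $w_1$ and $w_r$ of a path segment need not be balanced (for instance $w_1$ is a source in an inner path segment), so a consecutive pair beginning at index $1$ need not give an out-segment, and this is exactly what produces the $-2$ slack above; verifying that this slack is fully absorbed by $|F|$ times the number of path segments is what makes the final bound $\mathcal{O}(|F|^3 k)$ rather than the naive $\mathcal{O}(|F|^2 k)$.
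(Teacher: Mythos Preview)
Your proof is correct and follows essentially the same approach as the paper: both arguments charge each balanced vertex of $A_2$ to an in-neighbour $u\in F$, slice path segments at the out-neighbours of $u$ to obtain out-segments, and then combine the bound on the number of path segments (\cref{cor:num-path-segments}) with the bound on the number of out-segments per $u$ (\cref{lem:num-out-segments}) to conclude. Your bookkeeping with the $(|\tilde I_u(P)|-2)_+$ slack is a slightly more explicit version of the paper's one-line association of out-neighbours to either the path segment or an out-segment.
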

\begin{proof}
    Note that every balanced and source vertex of $A_2$ is an out-neighbor of at least one $u\in F$ and lies on at least one path segment. Fix some $u\in F$. Every out-neighbor of $u$ on a path segment can either be associated to the path segment itself (if $u$ has only one out-neighbor on the whole path segment) or to an out-segment. 
    Therefore, the number of out-neighbors of $u$ in~$A_2$ is at most the number of path segments plus the number of out-segments for $u$. Both numbers are bounded by $\Oof(|F|^2k)$ by \cref{cor:num-path-segments} and \cref{lem:num-out-segments}. Hence, the total number of out-neighbors of vertices of~$F$ in $A_2$, which is equal to the total number of balanced and source vertices in $A_2$, is bounded by $\Oof(|F|^3k)$. Combined with the fact that the number of sink vertices in $A_2$ is also bounded by $\Oof(|F|^2k)$, we get the claimed bound of $|A_2|\in \Oof(|F|^3k)$.
\end{proof}

\begin{theorem}
Let $G$ be a graph and let $f$ denote the size of a minimum feedback vertex set of the underlying undirected graph of $G$. 
After the exhaustive application of \cref{rule:shortcut-new}, \cref{rule:loops}, and \cref{rule:disjoint-cycles} to $G$ we obtain a kernel with  $\Oof(f^3k)$ vertices computable in time $\Oof(n^7(n+m))$.
\end{theorem}

\begin{proof}
The correctness of the kernelization algorithm as well as the size bound follow from the safety of the rules, \cref{cor-013}, and \cref{lem-final-a2-size}. The running time was already analyzed before. \qedhere

\end{proof}
\section{LP-based approximation}\label{sec:lp}

We can derive the following \emph{cycles ILP} for \textsc{DFVS} naturally from the \textsc{Hitting Set} formulation. Given a \textsc{DFVS} instance $G$, we introduce a binary variable $d_v$ for every $v \in V(G)$ where $d_v = 1$ means that $v$ is part of the solution. The goal is to minimize the number of variables set to $1$, given that all induced cycles are hit.
\begin{tcolorbox}
\[\begin{aligned}
\min \sum_{v \in V(G)} d_v \\
\text{s.\,t.} \sum_{v \in V(C)} d_v &\geq 1 \text{ for all induced cycles $C$ in $G$}\\
d_v &\in \{0,1\} \text{ for all } v \in V(G)
\end{aligned}\]
\end{tcolorbox}

Note that this formulation can have an exponential number of constraints. In the following, we assume that our instances contain no loops. We work with the following equivalent \emph{order ILP} of polynomial size which uses the fact that a graph is acyclic if and only if there is a topological order on its vertex set. To be more precise, we order the vertices linearly, minimizing the number of vertices having an incident edge pointing in the incorrect direction. We introduce a binary variable $x_{uv}$ for all distinct $u, v \in V(G)$ where $x_{uv} = 1$ indicates that $u$ is smaller than $v$ in the order. Furthermore, we introduce a binary variable $y_v$ for every $v \in V(G)$ with the same meaning as the~$d_v$ in the cycles ILP.

\begin{tcolorbox}
\[\begin{aligned}
\min \sum_{v \in V(G)} y_v \\
\text{s.\,t. } x_{uv} + x_{vu} &= 1 \text{ for all distinct }u, v \in V(G)\\
x_{uv}+x_{vw}-x_{uw} &\leq 1 \text{ for all distinct }u, v, w \in V(G)\\
x_{uv}+y_u+y_v &\geq 1 \text{ for all } uv \in E(G) \\
x_{uv}, y_v &\in \{0,1\} \text{ for all } u, v \in V(G)
\end{aligned}\]
\end{tcolorbox}

The first two constraints (ensuring anti-symmetry and transitivity) yield a linear order on $V(G)$, and the third constraint ensures that at least one endpoint of any edge pointing in the incorrect direction must be part of the solution.
We prove that the optimal solution value of the relaxation of the order ILP is within a constant factor of the optimal solution value of the relaxation of the all cycles ILP, yielding an approximation benefiting from the practical usability of the simplex algorithm.

\medskip
\begin{theorem}
The optimal solution of the relaxation of the order ILP is at most $3$ times smaller than the optimal solution of the relaxation of the cycles ILP. 
\end{theorem}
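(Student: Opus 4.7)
My plan is to start with any feasible solution $(x^*,y^*)$ of the LP relaxation of the order ILP, of value $T=\sum_v y^*_v$, and to construct a feasible solution $d$ of the LP relaxation of the cycles ILP of value at most $3T$. The natural candidate is $d_v:=\min\{1,3y^*_v\}$, which has cost at most $3T$ automatically; what remains is to verify feasibility, i.e., that $3\sum_{v\in V(C)}y^*_v\ge 1$ for every induced cycle $C$. I actually intend to prove the sharper bound $\sum_{v\in V(C)}y^*_v\ge \tfrac{1}{2}$, from which the factor~$3$ claimed in the statement follows a fortiori.

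The key is a combinatorial inequality on the $x^*$-variables along a cycle. Fix an induced cycle $C=v_1v_2\cdots v_\ell v_1$ in $G$ and sum the edge constraint $x^*_{v_iv_{i+1}}+y^*_{v_i}+y^*_{v_{i+1}}\ge 1$ over the $\ell$ edges of $C$. Each vertex of $C$ appears in exactly two of these constraints, so we obtain
\[
    \sum_{i=1}^{\ell} x^*_{v_i v_{i+1}} \;+\; 2\sum_{v\in V(C)} y^*_v \;\ge\; \ell.
\]
It therefore suffices to prove the purely combinatorial inequality $\sum_{i=1}^{\ell} x^*_{v_i v_{i+1}} \le \ell-1$, which morally says that the fractional order cannot pretend that every edge of a directed cycle is a forward edge.

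To obtain this bound I will chain the transitivity constraint, read as $x^*_{uw} \ge x^*_{uv} + x^*_{vw} - 1$. A straightforward induction on $k$ then shows that $x^*_{v_1 v_k} \ge \sum_{i=1}^{k-1} x^*_{v_i v_{i+1}} - (k-2)$ for every $2 \le k \le \ell$. Setting $k=\ell$ and substituting the antisymmetry constraint $x^*_{v_1 v_\ell} = 1 - x^*_{v_\ell v_1}$ rearranges to $\sum_{i=1}^{\ell} x^*_{v_i v_{i+1}} \le \ell-1$, as required. Combined with the edge-sum inequality above, this yields $2\sum_{v\in V(C)} y^*_v \ge 1$ on every induced cycle, which confirms feasibility of $d=\min\{1,3y^*\}$ and finishes the argument.

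The main obstacle I anticipate is bookkeeping rather than anything conceptually hard: the transitivity chain must be arranged so that antisymmetry is invoked only once, at the very end, to close the loop, and so that no LP constraint outside the stated ones is silently used. Once the inductive statement is set up correctly the algebra is routine, and the gap between the proved $\tfrac{1}{2}$ bound and the claimed factor~$3$ indicates that the argument is comfortable and needs no tightening.
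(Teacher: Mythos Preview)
Your proof is correct and follows the same overall strategy as the paper: show that for any feasible solution of the order LP, every induced cycle $C$ collects weight $\sum_{v\in V(C)}y^*_v$ at least a fixed constant, whence scaling $y^*$ yields a feasible solution of the cycles LP. Both arguments use the same three ingredients --- the edge constraints, the transitivity chain, and antisymmetry.

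The difference is in how these are combined. The paper applies the transitivity chain to bound $x^*_{v_\ell v_1}\le\sum_{i<\ell}\epsilon_i$ (with $x^*_{v_iv_{i+1}}=1-\epsilon_i$), uses the single edge constraint for $v_\ell v_1$ to obtain $y^*_{v_1}+y^*_{v_\ell}\ge 1-\sum_{i<\ell}\epsilon_i$, and then separately sums the remaining $\ell-1$ edge constraints; a case split at the threshold $\sum_{i<\ell}\epsilon_i=2/3$ balances the two bounds and gives $\sum_{v\in V(C)}y^*_v\ge\frac{1}{3}$. You instead sum all $\ell$ edge constraints at once and pair that directly with the cyclic inequality $\sum_{i=1}^{\ell}x^*_{v_iv_{i+1}}\le\ell-1$, avoiding the case analysis entirely and obtaining $\sum_{v\in V(C)}y^*_v\ge\frac{1}{2}$. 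So your route is both simpler and strictly stronger: it yields a factor~$2$ rather than the factor~$3$ stated in the theorem, and this constant is tight (a single triangle with $x^*_{v_iv_{i+1}}=\tfrac{2}{3}$ and $y^*_{v_i}=\tfrac{1}{6}$ shows the ratio~$2$ is achieved).
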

\begin{proof}
Fix an optimal solution (assignment of variables) of the relaxation of the order~ILP. Let $C=v_1 \dots v_\ell$ be an induced cycle in $G$. We show that $\sum_{1\leq i\leq \ell}y_{v_i}\geq 1/3$. First assume that all $x_{{v_i}{v_{i+1}}}=1$ for $1\leq i<\ell$. Then by transitivity we have $x_{{v_1}{v_\ell}}=1$ and by anti-symmetry $x_{{v_\ell} {v_1}}=0$. Then by the third constraint $y_{v_1}+y_{v_\ell}\geq 1$ and we are done. 

Hence assume $x_{{v_i}{v_{i+1}}}=1-\epsilon_i$ for some $\epsilon_i\geq 0$, where at least one $\epsilon_i>0$. For every $j\geq i+1$ we have $x_{{v_i}{v_j}}\geq 1-\sum_{i \leq i < q}\epsilon_q$ by transitivity. Hence, $x_{{v_0}{v_\ell}}\geq 1-\sum_{1 \leq q < \ell}\epsilon_q$. 
Then $x_{{v_\ell} {v_1}}\leq \sum_{1 \leq q < \ell}\epsilon_q$ and $y_{v_1}+y_{v_\ell}\geq 1-\sum_{1 \leq q < \ell}\epsilon_q$. 

If $\sum_{1 \leq q < \ell}\epsilon_q\leq 2/3$, then $y_{v_1}+y_{v_\ell}\geq 1/3$ and $C$ collects weight at least~$1/3$. Otherwise we have $\sum_{1 \leq q < \ell}\epsilon_q > 2/3$. Hence, we can write \[\sum_{1\leq i<\ell}x_{{v_i}{v_{i+1}}}=\ell - \sum_{1\leq i<\ell}\epsilon_i\geq \sum_{1\leq i<\ell}(1-y_{v_i}-y_{v_{i+1}})=\ell-y_{v_1}-y_{v_\ell}-2\sum_{1 < i < \ell}y_i.\] Plugging in the inequality we obtain $2/3< \sum_{1\leq i<\ell}\epsilon_i \leq y_{v_1}+y_{v_\ell}+2\sum_{2\leq i<\ell-1}y_{v_i}$. Hence, $C$ collects more than weight $1/3$. 
\end{proof}

Denote the optimal solution value for the relaxation of the cycles ILP by $h^*$, the optimal solution value for the relaxation of the order ILP by $x^*$, and the optimal ILP solution value by $k$. Then 
$k/(4\log 4k \log\ln 4k)\leq h^*\leq 3x^*\leq 3k$,
where the first inequality follows from \cite{seymour1995packing}.

\medskip
\begin{corollary}
We can approximate in polynomial time the relaxation of the cycles ILP up to factor $3$. Hence, we obtain an $\Oof(\log k\log\log k)$ approximation in polynomial time. 
\end{corollary}

\section{Hardness of directed chordless path}\label{sec:hardness}

The \textsc{Directed Chordless $(s,v,t)$-Path} problem asks, given a graph $G$, vertices $s,v,t$, and integer $d$, whether there exists an induced $s$-$t$-path in $G$ of length at most $d$ containing~$v$.  
The W[1]-hardness of the problem on general (directed and undirected) graphs was proved in~\cite{haas2006chordless}. We show hardness on directed acyclic graphs via a reduction from \textsc{Grid Tiling}. 

An instance of \textsc{Grid Tiling} consists of an even integer $k$, an integer $n$, and a collection $\mathcal{S}$ of~$k^2$ nonempty sets $S_{i,j} \subseteq [n] \times [n]$, where $1 \leq i,j \leq k$. The goal is to decide whether there exists, for each  $1 \leq i,j \leq k$, a pair $s_{i.j} \in S_{i,j}$ such that: \\[-3mm]

\begin{itemize}
    \item If $s_{i,j} = (a, b)$ and $s_{i+1,j} = (a',b')$, then $a = a'$.\\[-3mm]
    \item If $s_{i,j} = (a, b)$ and $s_{i,j+1} = (a',b')$, then $b = b'$.\\[-3mm]
\end{itemize}

In other words, if $(i, j)$ and $(i', j')$ are adjacent in the first or second coordinate, then $s_{i,j}$ and $s_{i',j'}$ agree in the first or second coordinate, respectively. We visualize~$S_{i,j}$ to be in a ``cell'' at row $i$ and column $j$ of a ``matrix''. Observe that the constraints
ensure that the first coordinate of the solution is the same in each column and the second coordinate is the same in each row. 

\medskip
\begin{lemma}\label{lem-whard-induced}
The \textsc{Directed Chordless $(s,v,t)$-Path} problem parameterized by the length $d$ of a path is W[1]-hard even when restricted to directed acyclic graphs. 
\end{lemma}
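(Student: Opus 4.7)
The plan is to reduce from \textsc{Grid Tiling}, which is W[1]-hard in the grid dimension $k$, producing an instance of \textsc{Directed Chordless $(s,v,t)$-Path} on a DAG whose length parameter $d$ is polynomially bounded in $k$, so that W[1]-hardness transfers from $k$ to $d$.

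Given an instance $(k,n,\{S_{i,j}\}_{1\le i,j\le k})$ of \textsc{Grid Tiling}, I would first linearize the $k^2$ cells in a ``snake'' order (row $1$ left-to-right, row $2$ right-to-left, and so on); a key feature of this ordering is that every pair of cells consecutive in the snake shares either a row or a column. For each cell $(i,j)$ introduce an entry vertex $e_{i,j}$, an exit vertex $x_{i,j}$, and a selection vertex $u_{i,j,a,b}$ for each $(a,b)\in S_{i,j}$, with edges $e_{i,j}\to u_{i,j,a,b}\to x_{i,j}$; connect consecutive cells of the snake by edges $x_\cdot\to e_\cdot$. The resulting ``backbone'' is a directed path of length $L=3k^2-1$. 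Set $s$ to be the first entry, $t$ the last exit, $v$ a distinguished interior vertex, and $d:=L$.

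To encode the Grid Tiling constraints I would add ``consistency'' edges between selection vertices. For any two cells $(i,j)$ and $(i,j')$ sharing a row with $j<j'$ in snake order, add the edge $u_{i,j,a,b}\to u_{i,j',a',b'}$ whenever $b\neq b'$; symmetrically, for any two cells $(i,j)$ and $(i',j)$ sharing a column with $i<i'$ in snake order, add $u_{i,j,a,b}\to u_{i',j,a',b'}$ whenever $a\neq a'$. All these edges go forward in the snake, so $G$ is acyclic. In the forward direction, a valid tiling $\{s_{i,j}\}$ yields the backbone path through the vertices $u_{i,j,s_{i,j}}$; by consistency, no added edge has both endpoints on this path, so the backbone is an induced $s$-$t$-path of length $d$ through $v$.

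The hard part will be the converse: ruling out induced $s$-$t$-paths through $v$ of length at most $d$ that are \emph{not} the backbone. Such a ``shortcut'' path would use one or more consistency edges as ordinary path edges, potentially encoding no valid tiling. The plan is to exploit the density of consistency edges: as soon as a shortcut path visits any other two cells sharing a row or a column with inconsistent selections, the corresponding consistency edge (distinct from the one used as a shortcut) becomes a chord of the path. A careful case analysis, together with a placement of $v$ that forces the path to visit every cell, should imply that any induced $s$-$t$-path through $v$ of length at most $d$ must traverse every $u_{i,j,\cdot}$ exactly once and that its selections are globally consistent, yielding a valid Grid Tiling solution. If the bare construction is not tight enough, I would reinforce it by inserting additional mandatory intermediate vertices between consecutive snake cells, so that any attempt to bypass a cell via a single consistency edge unavoidably creates a second consistency edge as a chord. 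Since $d\in\Oof(k^2)$, this transfers the W[1]-hardness of \textsc{Grid Tiling} in $k$ to the W[1]-hardness of \textsc{Directed Chordless $(s,v,t)$-Path} in $d$, even restricted to directed acyclic graphs.
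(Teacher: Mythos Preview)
Your high-level plan (reduce from \textsc{Grid Tiling}, lay the cells out along a snake, and encode constraints via chord edges) matches the paper's, but the concrete gadget you describe has a real gap, precisely at the point you flag as ``the hard part''. In your construction \emph{both} the row and the column constraints are encoded by forward ``inconsistency'' edges between all pairs of selection vertices in the same row or column. These edges can then be used as path edges rather than as chords. For example, a path may take $e_{1,1}\to u_{1,1,a,b}\to u_{2,1,a',b'}\to u_{3,1,a,b''}$ along two column-inconsistency edges with $a\neq a'$; the potential chord $u_{1,1}\to u_{3,1}$ is absent because $a_{1,1}=a_{3,1}$. Such a path skips almost all of rows~1 and~2, yet may still be induced and of length at most $d$. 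A single distinguished vertex $v$ cannot block all such shortcuts, and your proposed ``reinforcement'' by extra mandatory vertices does not obviously help either, since every entry/exit vertex can be bypassed by some inconsistency edge. So the backward direction is not just a case analysis to be filled in; as stated it appears to be false.

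The paper avoids this by an asymmetric design you are missing. It connects vertically adjacent cells of the zigzag by \emph{positive} edges only (an edge exactly when the first coordinates agree), so any $s$-$t$-path is forced to traverse every cell and column consistency comes for free. Row consistency is then encoded by \emph{negative} edges, but only from the first column to all other columns; the distinguished vertex $v$ is placed at the end of the first column ($y_1$), so using any negative edge as a shortcut makes $v$ unreachable. With this design the only $(s,v,t)$-paths are the full zigzags, and inducedness is equivalent to row consistency via the column-1 anchors. If you want to repair your construction, this positive/negative split is the missing idea.
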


\begin{proof}

\begin{figure}
\centering
\begin{tikzpicture}

    \draw[step=2,black,thin] (0,0) grid (4,4);

    \pic[scale=0.66] at (0.2,3.133) {tile=red!80/cyan!80!black};
    \pic[scale=0.66] at (1.133,2.133) {tile=cyan!80!black/teal!50!black};

    \pic[scale=0.66] at (2.2,3.133) {tile=cyan!80!black/teal!50!black};
    \pic[scale=0.66] at (3.2,2.133) {tile=orange/cyan!80!black};

    \pic[scale=0.66] at (0.5,0.4) {tile=red!80/red!80};

    \pic[scale=0.66] at (2.5,1.2) {tile=teal!50!black/red!80};
    \pic[scale=0.66] at (2.2,0.2) {tile=cyan!80!black/red!80};
    \pic[scale=0.66] at (3.1,0.4) {tile=orange/teal!50!black};

    \node at (5, 2) {$\leadsto$};

    \begin{scope}[xshift=6cm]
        \fill (0.533,3.5) circle (2.5pt) node (v111) {};
        \fill (1.5,2.5) circle (2.5pt) node (v112) {};
        \fill (2.533,3.5) circle (2.5pt) node (v121) {};
        \fill (3.533,2.5) circle (2.5pt) node (v122) {};
        \fill (0.833,0.7333) circle (2.5pt) node (v211) {};
        \fill (2.833,1.5333) circle (2.5pt) node (v221) {};
        \fill (2.533,0.5333) circle (2.5pt) node (v222) {};
        \fill (3.433,0.7333) circle (2.5pt) node (v223) {};

        \draw[dashed, rounded corners=5,gray!50!black] (0.1,0.1) rectangle ++(1.8,1.8);
        \draw[dashed, rounded corners=5,gray!50!black] (2.1,0.1) rectangle ++(1.8,1.8);
        \draw[dashed, rounded corners=5,gray!50!black] (0.1,2.1) rectangle ++(1.8,1.8);
        \draw[dashed, rounded corners=5,gray!50!black] (2.1,2.1) rectangle ++(1.8,1.8);

        \draw[dashed, rounded corners=5,gray!50!black] (0.1,4.1) rectangle ++(3.8, 1);
        \draw[dotted, gray!50!black]                   (0.1,4.6) node[left] {$X$} -- ++(3.8,0);
        \draw[dashed, rounded corners=5,gray!50!black] (0.1,-.1) rectangle ++(3.8,-1);
        \draw[dotted, gray!50!black]                   (0.1,-.6) node[left] {$Y$} -- ++(3.8,0);

        \fill (1, 4.85) circle (2.5pt) node (x1) {} node[left] {$s$};
        \fill (3, 4.35) circle (2.5pt) node (x2) {} node[left] {$t$};
        \fill (1,-.35) circle (2.5pt) node (y1) {} node[left] {$v$};
        \fill (3,-.85) circle (2.5pt) node (y2) {};
        
        \path[-{Latex}]
        (x1) edge (v111)
        (x1) edge (v112)        
        (v211) edge (y1)
        (y1) edge (y2)
        (y2) edge (v221)
        (y2) edge (v222)
        (y2) edge (v223)
        (v121) edge (x2)
        (v122) edge (x2)

        (v111) edge (v211)
        (v222) edge (v121)
        (v223) edge (v122)
        ;

        \path[-{Latex}, red]
        (v111) edge (v121)
        (v112) edge (v122)
        (v211) edge (v223)
        ;
        
    \end{scope}
\end{tikzpicture}
\centering
\caption{Illustration of the construction in the proof of \Cref{lem-whard-induced}.}
\label{fig:example}
\end{figure}

Given an instance of \textsc{Grid Tiling}, we construct an instance of \textsc{Directed Chordless $(s,v,t)$-Path} as follows (see \Cref{fig:example} for an illustration). 
We first construct a directed acyclic graph~$G$. 
For each $S_{i,j} \in \mathcal{S}$, $1 \leq i,j \leq k$, we add a new set of vertices $V_{i,j}$ to $V(G)$. 
$V_{i,j}$ contains one vertex $v^{i,j}_{a,b}$ for each pair $s_{i,j} = (a, b) \in S_{i,j}$. Then we add two new sets of vertices $X = \{x_1, \ldots, x_k\}$ and $Y = \{y_1, \ldots, y_k\}$. We partition $X$ into $X_{odd} = \{x_j \mid j~\text{is odd}\}$ and $X_{even} = \{x_j \mid j~\text{is even}\}$. Similarly, we partition $Y$ into $Y_{odd} = \{y_j \mid j~\text{is odd}\}$ and $Y_{even} = \{y_j \mid j~\text{is even}\}$. 

\pagebreak
We now describe the edges in $G$:\\[-3mm]
\begin{itemize}
    \item For every vertex  $x_j \in X_{odd}$, we add the edges $x_jv^{1,j}_{a,b}$, for all $a,b$. In other words, every vertex in $V_{1,j}$ is made an out-neighbor of $x_j$, $j \in \{1, 3, \ldots, k-1\}$. \\[-3mm]
    \item For every vertex  $x_j \in X_{even}$, we add the edges $v^{1,j}_{a,b}x_j$, for all $a,b$. In other words, every vertex in $V_{1,j}$ is made an in-neighbor of $x_j$, $j \in \{2, 4, \ldots, k\}$. \\[-3mm]
    \item For every vertex  $y_j \in Y_{odd}$, we add the edges $v^{k,j}_{a,b}y_j$, for all $a,b$. In other words, every vertex in $V_{k,j}$ is made an in-neighbor of $y_j$, $j \in \{1, 3, \ldots, k-1\}$. \\[-3mm]
    \item For every vertex  $y_j \in Y_{even}$, we add the edges $y_jv^{k,j}_{a,b}$, for all $a,b$. In other words, every vertex in $V_{k,j}$ is made an out-neighbor of $y_j$, $j \in \{2, 4, \ldots, k\}$.\\[-3mm]
    \item We add the edges $\{y_1y_2, y_3y_4, \ldots, y_{k-1}y_k\} \cup \{x_2x_3, x_4x_5, \ldots, x_{k-2}x_{k-1}\}$.  \\[-3mm]
    \item For odd $j \in \{1, 3, \ldots, k-1\}$ and $i \in [k - 1]$, if there exists $v^{i,j}_{a,b} \in V_{i,j}$ and $v^{i+1,j}_{a',b'} \in V_{i+1,j}$ such that $a = a'$ then add the edge $v^{i,j}_{a,b}v^{i+1,j}_{a',b'}$.\\[-3mm]
    \item For even $j \in \{2, 4, \ldots, k\}$ and $i \in [k] \setminus \{1\}$, if there exists $v^{i,j}_{a,b} \in V_{i,j}$ and $v^{i-1,j}_{a',b'} \in V_{i-1,j}$ such that $a = a'$ then add the edge $v^{i,j}_{a,b}v^{i-1,j}_{a',b'}$. \\[-3mm]
    \item For $j = 1$, $j' \in \{2, 3, 4, \ldots, k\}$,  and $i \in \{1, 2, 3, \ldots, k\}$, if there exists $v^{i,1}_{a,b} \in V_{i,1}$ and $v^{i,j'}_{a',b'} \in V_{i,j'}$ such that $b \neq b'$ then add the edge $v^{i,1}_{a,b}v^{i,j'}_{a',b'}$.\\[-3mm]
\end{itemize}

To complete the construction of the \textsc{Directed Chordless $(s,v,t)$-Path} instance, we choose $s = x_1$, $v = y_1$, $t = x_k$, and $d = k(k + 1) + k - 1 = k^2 + 2k - 1$. 

Observe that $G$ is acyclic since the edges in odd columns are all directed ``downwards'' and the edges in even columns are all directed ``upwards''. Moreover, the edges connecting vertices in $X$ or $Y$ are all directed ``rightwards''. Similarly, the edges connecting vertices in the first column to vertices in later columns are all directed ``rightwards''. Hence, following the layout of the construction, we can topologically order the vertices of $G$ such that for every directed edge $uv$ from vertex $u$ to vertex $v$, $u$ comes before $v$ in the ordering.  

We also note that any $(s,v,t)$-path in $G$ must contain exactly one vertex from each~$V_{i,j}$ as well as all vertices in $X \cup Y$; for a total of $k^2 + 2k$ vertices. Hence, any $(s,v,t)$-path in $G$ will have length exactly $d$. In addition, $P$ must visit $X$, $Y$, and the ``cells'' of the matrix in a unique order. That is, $P$ must start with $x_1$ then visit all the cells of the first column to reach~$y_1$. After~$y_1$ the only out-neighbor is~$y_2$. From $y_2$, $P$ then proceeds upwards along the second column to reach $x_2$. This zig-zag behavior continues until $P$ reaches $x_k$. In other words, $P$ consists of the ordered vertices $s = x_1, \ldots, v = y_1, y_2, \ldots, x_2, x_3, \ldots, y_3, y_4, \ldots, x_4, x_5,$ $\ldots, y_{k-1}, y_{k}, \ldots, t = x_{k}$. 

Assume that we have a yes-instance of \textsc{Directed Chordless $(s,v,t)$-Path} and let $P$ be an induced $(s,v,t)$-path in $G$. The vertices of $V(P) \setminus (X \cup Y)$ correspond one-to-one to pairs in $\mathcal{S}$. We claim that those pairs form a valid solution for the \textsc{Grid Tiling} instance. Assume otherwise. Then, either $P$ contains two consecutive vertices in the same column that do not agree on the first coordinate or $P$ contains two  vertices in the same row that do not agree on the second coordinate. 
The former case is not possible by construction; we only add edges between consecutive vertices in the same column whenever they agree on the first coordinate. For the latter case, we claim that it would contradict the fact that $P$ is induced. To see why, let $i$ be a row containing two vertices $v_2,v_3$ that do not agree on the second coordinate. 
Let~$j_1$ and~$j_2$, $j_1 < j_2$, denote their respective columns. 
If $j_1 = 1$ (that is, $v_2$ belongs to the first column), we have the edge $v_2v_3$ by construction, immediately contradicting the fact that $P$ is induced. Hence assume $1 < j_1 < j_2$.
Recall that~$P$ must include one vertex $v_1$ from~$V_{i.1}$. As we assume that $v_2$ and $v_3$ do not agree on the first coordinate at least one of them does not agree with $v_1$ on the first coordinate.
Hence, by construction, $G$ either contains 
the edge $v_1v_2$ or the edge $v_1v_3$, which contradicts the fact that $P$ is induced. 

Using almost identical arguments, it can be shown that whenever we have a yes-instance of \textsc{Grid Tiling} we can immediately construct an induced $(s,v,t)$-path in $G$ of length exactly~$d$, as needed. 
\end{proof}

By inserting the edge $x_kx_1$, that is, by connecting the top right vertex with the top left vertex with an edge, we obtain the following corollaries of (the proof of) \cref{lem-whard-induced}. 

\medskip
\begin{corollary}
    It is W[1]-hard to decide if a vertex lies on an induced cycle of length at most $d$ even on graphs that become acyclic after the deletion of a single edge.
\end{corollary}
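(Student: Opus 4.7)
My plan is to give a straightforward parameterized reduction from the \textsc{Directed Chordless $(s,v,t)$-Path} problem on DAGs (shown W[1]-hard in \cref{lem-whard-induced}) to the problem of deciding whether a given vertex lies on an induced cycle of length at most $d$, following the hint in the paragraph preceding the corollary. Concretely, starting from the DAG $G$ built in the proof of \cref{lem-whard-induced} together with the distinguished vertices $s = x_1$, $v = y_1$, $t = x_k$, and the bound $d = k^2 + 2k - 1$, I form the graph $H := G + ts$ by inserting a single edge from $t$ to $s$. The output instance is $(H, v, d')$ with $d' := d+1$, asking whether the vertex $v = y_1$ lies on an induced cycle of length at most $d'$.

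The next step is to verify the two claims that make this reduction work. First, $H - ts = G$ is acyclic, so $H$ becomes acyclic after the deletion of a single edge, as required by the corollary. Second, I want a bijective correspondence between induced $(s,v,t)$-paths of length at most $d$ in $G$ and induced cycles of length at most $d'$ through $v$ in $H$. Any induced cycle of $H$ must use the edge $ts$ (otherwise it would be an induced cycle of the DAG $G$); so such a cycle has the form $s \to v_1 \to \cdots \to t \to s$ where $s \to v_1 \to \cdots \to t$ is a directed path in $G$. It remains to check that inducedness of this cycle in $H$ is equivalent to inducedness of the underlying path in $G$. This hinges on the observation that in the construction of \cref{lem-whard-induced}, $s = x_1$ is a source of $G$ (no incoming edges, since $x_1 \in X_{odd}$ receives no edge from any $V_{i,j}$ nor from another $x_\ell$) and $t = x_k$ is a sink of $G$ (since $k$ is even so $x_k \in X_{even}$ has no outgoing edge in the construction). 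Hence no chord of the form $v_i s$ or $t v_i$ can exist in $G$, so the only possible ``new'' chord of the cycle in $H$ would be the edge $ts$ itself, which is on the cycle.

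Given this correspondence, $v = y_1$ lies on an induced cycle of length at most $d' = d+1$ in $H$ if and only if $G$ contains an induced $(s,v,t)$-path of length at most $d$, which by \cref{lem-whard-induced} is equivalent to the original \textsc{Grid Tiling} instance being a yes-instance. The parameter $d' = d+1 = k^2 + 2k$ depends only on $k$, so this is a parameterized reduction from \textsc{Grid Tiling} parameterized by $k$, yielding W[1]-hardness for the problem of deciding whether a given vertex lies on an induced cycle of length at most $d'$ on graphs that become acyclic after the deletion of one edge.

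The only real work beyond reusing \cref{lem-whard-induced} is checking the source/sink property of $s,t$ in $G$ and the resulting lack of spurious chords; I do not anticipate a substantive obstacle here, since both properties are immediate from inspecting the edge set of $G$ as described in the proof of \cref{lem-whard-induced}.
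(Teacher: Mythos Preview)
Your proposal is correct and follows exactly the approach the paper indicates: add the single edge $ts$ to the DAG $G$ from \cref{lem-whard-induced} and observe that induced cycles through $v$ in $H=G+ts$ correspond bijectively to induced $(s,v,t)$-paths in $G$. The paper leaves all details implicit, whereas you have correctly verified the key points (that $s=x_1$ is a source and $t=x_k$ is a sink in $G$, so no spurious chords arise, and that the length transforms as $d'=d+1$).
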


\medskip
\begin{corollary}
    It is W[1]-hard to decide if a graph contains an induced cycle of length at least $d$ even on graphs that become acyclic after the deletion of a single edge.
\end{corollary}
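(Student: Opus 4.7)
My plan is to reuse the construction from the proof of \cref{lem-whard-induced} essentially verbatim, adding only one extra edge $x_k \to x_1$ from the sink back to the source; call the resulting graph $G'$. Since the original $G$ is a DAG, $G' - x_kx_1 = G$ is acyclic, so $G'$ becomes acyclic after deleting a single edge, meeting the side condition of the corollary. Moreover, every cycle of $G'$ must pass through the new edge and so corresponds to an $x_1$-$x_k$-path in $G$. A quick induced-subgraph check shows that the cycles that are \emph{induced} in $G'$ correspond bijectively to the $x_1$-$x_k$-paths that are \emph{induced} in $G$ (since $E(G[V(P)]) = E(G'[V(C)]) \setminus \{x_kx_1\}$ when $P$ is the path obtained from an induced cycle $C$, and vice versa), with the cycle length always equal to the path length plus one.

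Next I would count: $G$ has exactly $k^2 + 2k$ vertices ($k^2$ cell vertices, plus $X$ and $Y$), so every simple path in $G$ has length at most $D := k^2 + 2k - 1$ and every induced cycle in $G'$ has length at most $D + 1$. Taking the parameter of the target instance to be $D + 1$, asking whether $G'$ contains an induced cycle of length at least $D + 1$ is equivalent to asking whether $G$ contains a \emph{Hamiltonian} induced $x_1$-$x_k$-path. Such a path must visit every vertex of $X$, $Y$, and exactly one vertex per cell $V_{i,j}$, which forces the zig-zag traversal analyzed in \cref{lem-whard-induced}. The chordlessness argument given there then shows that a Hamiltonian induced $x_1$-$x_k$-path exists in $G$ if and only if the underlying \textsc{Grid Tiling} instance has a valid tiling, yielding an fpt-reduction from \textsc{Grid Tiling} parameterized by $k$ to the cycle-detection problem parameterized by $D + 1 = k^2 + 2k$.

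The main obstacle I foresee is the tight coupling between ``length at least $D + 1$'' and a Grid Tiling solution: strictly shorter induced $x_1$-$x_k$-paths in $G$ do exist (for instance one can jump out of column $1$ into a later column via an across-edge, skipping many cells), and these need not correspond to any tiling. It is therefore essential that we ask for cycles of the maximum possible length, so that ``at least $D + 1$'' collapses to ``exactly $D + 1$'' and hence to Hamiltonicity. Once this design choice is in place, the remainder of the argument inherits directly from the zig-zag analysis in the proof of \cref{lem-whard-induced}, and the fact that the target parameter $D + 1$ is polynomial in $k$ makes this a valid parameterized reduction.
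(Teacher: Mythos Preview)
Your approach is exactly the one the paper intends: add the single back-edge $x_kx_1$, observe that $G' - x_kx_1$ is acyclic, and that induced cycles in $G'$ of length $D+1$ correspond to induced $x_1$--$x_k$ paths in $G$ that pick one vertex per cell and all of $X\cup Y$, which by the zig-zag analysis of \cref{lem-whard-induced} is equivalent to a \textsc{Grid Tiling} solution.

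There is, however, a genuine gap in your justification. The sentence ``$G$ has exactly $k^2+2k$ vertices ($k^2$ cell vertices, plus $X$ and $Y$)'' is false: each $V_{i,j}$ contains one vertex per pair in $S_{i,j}$, so $|V(G)|=\sum_{i,j}|S_{i,j}|+2k$, which can be as large as $k^2n^2+2k$. Consequently, the bound ``every simple path in $G$ has length at most $D$'' does not follow from a vertex count. What you need instead is the structural observation (implicit in the proof of \cref{lem-whard-induced}) that any directed path in $G$ can visit at most one vertex of each cell $V_{i,j}$: there are no edges inside a cell, across-edges leave column~$1$ irrevocably, and within each column the edges are monotone (downward in odd columns, upward in even ones), so a path cannot re-enter a cell it has left. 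Together with $|X\cup Y|=2k$ this yields the desired bound $D=k^2+2k-1$ on the length of any $x_1$--$x_k$ path, and forces a maximum-length path to hit all of $X\cup Y$ (in particular $y_1$) and exactly one vertex per cell. Once you replace the erroneous counting argument by this structural one, the rest of your proof goes through and matches the paper.
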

\section{Conclusion}

The question whether \textsc{Directed Feedback Vertex Set} admits a polynomial kernel is one of the most prominent open problems in parameterized complexity. 
First we have presented a new reduction rule subsuming almost all known rules for DFVS, in particular all the complicated rules presented by Bergougnoux~\cite{bergougnoux2021towards}, leading together with two (more or less) trivial rules to a kernel of size $\Oof(k f^3)$, where $f$ denotes the size of a minimum fvs of the underlying undirected graph.
Instances of DFVS without induced cycles of length greater than $d$ correspond to vertex induced instances of $d$-\textsc{Hitting Set}, for which it is an important open problem whether it admits a kernel with fewer than $\Oof(k^{d-\epsilon})$ elements. 
On the one hand, since it is a special case, it is perceivable that DFVS without induced cycles of length greater than~$d$ is easier to kernelize than the more general $d$-\textsc{Hitting Set}. 
On the other hand, it is not clear that all data reduction rules for $d$-\textsc{Hitting Set} can be carried out efficiently for DFVS instances. 
The caveat is that \textsc{Hitting Set} instances are explicitly represented, whereas the \textsc{Hitting Set} instance corresponding to DFVS is only implicitly represented in the input graph. 
As a consequence we do not allow a running time that depends polynomially on the number of cycles, but only polynomially on the number of vertices of the graph. 

In this work we proved that DFVS without cycles of length greater than $d$ admits a kernel with $2^dk^d$ vertices. 
For this we designed a new reduction rule that approximates the set of vertices that lie on induced cycles. 
We proved that even in acyclic graphs we cannot efficiently test whether a vertex lies on an induced path between two given vertices, \textsc{Directed Chordless} $(s,v,t)$-\textsc{Path} is W[1]-hard when parameterized by $d$. 
This shows that a very natural reduction rule is not efficiently applicable for DFVS. 
We then considered restricted graph classes and proved that in very general classes kernels with an almost linear number $\Oof(k^{1+\epsilon})$ of vertices computable in time $f(d)\cdot n^{\Oof(1)}$ exist. 
We repeat the question whether this is possible in general graphs. 

\bibliography{ref}

\end{document}